\theoremstyle{plain}
\newtheorem{lem}{\protect\lemmaname}
\theoremstyle{plain}
\newtheorem{prop}{\protect\propositionname}
\theoremstyle{remark}
\newtheorem{rem}{\protect\remarkname}
\theoremstyle{plain}
\newtheorem{cor}{\protect\corollaryname}
\providecommand{\corollaryname}{Corollary}
\providecommand{\lemmaname}{Lemma}
\providecommand{\propositionname}{Proposition}
\providecommand{\remarkname}{Remark}
\begin{document}
\begin{center}
\textbf{\Large{}Auction Theory Adaptations for Real Life Applications}{\Large\par}
\par\end{center}

\begin{center}
\textbf{Ravi Kashyap}
\par\end{center}

\begin{center}
\textbf{SolBridge International School of Business / City University
of Hong Kong}
\par\end{center}

\begin{center}
Keywords: Auction; First Price Sealed Bid; Strategy; Valuation; Uncertainty;
Log-normal
\par\end{center}

\begin{center}
JEL Codes: D44 Auctions; G17 Financial Forecasting and Simulation,
C73 Stochastic and Dynamic Games
\par\end{center}

\begin{center}
\begin{center}
\textbf{\today}
\par\end{center}
\par\end{center}

\begin{center}
\textbf{\textcolor{blue}{\href{https://doi.org/10.1016/j.rie.2018.09.001}{Edited Version: Kashyap, R. (2018). Auction Theory Adaptations for Real Life Applications. Research in Economics, 72(4), 452-481. }}}
\par\end{center}

\begin{center}
\tableofcontents{}
\par\end{center}

\section{\quad Abstract}

We develop extensions to auction theory results that are useful in
real life scenarios.
\begin{enumerate}
\item Since valuations are generally positive we first develop approximations
using the log-normal distribution. This would be useful for many finance
related auction settings since asset prices are usually non-negative.
\item We formulate a positive symmetric discrete distribution, which is
likely to be followed by the total number of auction participants,
and incorporate this into auction theory results.
\item We develop extensions when the valuations of the bidders are interdependent
and incorporate all the results developed into a final combined realistic
setting.
\item Our methods can be a practical tool for bidders and auction sellers
to maximize their profits. The models developed here could be potentially
useful for inventory estimation and for wholesale procurement of financial
instruments and also non-financial commodities.
\end{enumerate}
\textbf{\textcolor{black}{All the propositions are new results and
they refer to existing results which are stated as Lemmas.}}

\section{Auction Strategy}

Auctions are widely used to transact in goods and services, when a
price for the exchange is not readily available before the auction
process is set in motion. Recorded history mentions auctions being
held as early as 500 B.C (Doyle \& Baska 2014; End-note \ref{Auction}).
The collection of works on Auction Theory is vast and deep: (Vickrey
1961) is credited with having initiated the study of auctions as games
of incomplete information. (Fudenberg \& Tirole 1991; Osborne \& Rubinstein
1994) are comprehensive references for the study of games and economic
behavior. 

(Klemperer 1999) is an extensive survey of the enormous literature
regarding auctions; (Milgrom 1989) is an excellent introduction to
the topic and has sufficient technical details to provide a rigorous
yet succinct foundation; other classic surveys are (McAfee \& McMillan
1987a; Wilson 1992). (Wilson 1979) compares the sale prices when the
item is sold to the highest bidder and when the bidders receive fractional
shares of the item at a sale price that equates demand and supply
of shares. Auction situations with multiple related objects being
sold are considered in (Milgrom 1985; Hausch 1986; Armstrong 2000;
Ausubel 2004). 

The core results have been extended to applications in transportation,
telecommunication, network flow assignment, E-commerce, pricing mechanism
for electric power, biological organ transplants (Bertsekas 1988;
Post, Coppinger \& Sheble 1995; McMillan 1995; Tuffin 2002 ; Gregg
\& Walczak 2003; Sheffi 2004; Roth, S�nmez \& �nver 2004), among other
areas, covering both cooperative and competitive aspects of designing
mechanisms to ensure efficient exchange and usage of resources. A
few recent extensions, (Di Corato, Dosi \& Moretto 2017) study how
exit options can affect bidding behavior; (Pica \& Golkar 2017) develop
a framework to evaluate pricing policies of spacecraft trading commodities,
such as data routing, in a federated satellite network. While most
extensions to the basic auction theory results are extremely elegant
and consider complex scenarios; the daily usage of auctions requires
assumptions regarding the number of total bidders and how the valuations
of the different bidders might be distributed. We provide fundamental
extensions, which will be immediately applicable in real life situations.

We consider a few variations in the first price sealed bid auction
mechanism. Once a bidder has a valuation, it becomes important to
consider different auction formats and the specifics of how to tailor
bids, to adapt, to the particular auction setting. A bidding strategy
is sensitive to assumed distributions of both the valuations and the
number of bidders. We build upon existing results from the following
standard and detailed texts on this topic, which act as the foundation:
(Klemperer 2004; Krishna 2009; Menezes \& Monteiro 2005; and Milgrom
2004). Additional references are pointed out in the relevant sections
below along with the extensions that are derived.

The primary distribution we study is the log-normal distribution (Norstad
1999). The log-normal distribution centers around a value and the
chance of observing values further away from this central value become
smaller. Asset prices are generally modeled as log-normal, so financial
applications (Back \& Zender 1993, Nyborg \& Sundaresan 1996 are discussions
about auctions related to treasury securities; Kandel, Sarig \& Wohl
1999, Biais \& Faugeron-Crouzet 2002 analyze the auctions of stock
initial public offerings; Kashyap 2016 is an example of using auctions
for extremely exotic financial products in a complex scenario related
to securities lending) would benefit from this extension, with potential
applications to other goods that follow such a distribution. The absence
of a closed form solution for the log-normal distribution forces us
to develop a rough theoretical approximation and improve upon that
significantly using non linear regressions.

As a simple corollary we consider the uniform distribution, since
the corresponding results can act as a very useful benchmark. The
uniform distribution is well uniform and hence is ideal when the valuations
(or sometimes even the number of bidders) are expected to fall equally
on a finite (or even infinite) number of possibilities. This serves
as one extreme to the sort of distribution we can expect in real life.
The two distribution types we discuss can shed light on the other
types of distributions in which only positive observations are allowed.
The regression based numerical technique that we have developed can
be useful when the valuations are distributed according to other distributions
as well.

We formulate a new positive symmetric discrete distribution, which
is likely to be followed by the total number of auction participants,
and incorporate this into auction theory results. This distribution
can also be a possibility for the valuations themselves, since the
set of prices of assets or valuations can be a finite discrete set.
But given the results for the discrete distribution of auction participants,
developing a bidding strategy based on discrete valuations is trivial
and hence is not explicitly given below. Lastly, the case of interdependent
valuations is to be highly expected in real life (Kashyap 2016 is
an example of such a scenario from the financial services industry);
but practical extensions for this case are near absent. We develop
extensions when the valuations of bidder are interdependent and incorporate
all the results developed into a final combined realistic setting.

\textbf{\textcolor{black}{We provide proofs for the extensions but
we also state standard results with proofs so that it becomes easier
to see how the extensions are developed.}}\textcolor{black}{{} }Such
an approach ensures that the results are instructive and immediately
applicable to both bidders and auction facilitators. \textbf{\textcolor{black}{All
the propositions are new results and they refer to existing results
which are given as Lemmas.}} The results developed here can be an
aid for profit maximization for bidders and auctions sellers during
the wholesale procurement of financial instruments and also non-financial
commodities.

We define all the variables as we introduce them in the text but section
\ref{sec:Dictionary-of-Notation} has a complete dictionary of all
the notation and symbols used in the main results.

\subsection{Symmetric Independent Private Values with Valuations from General
Distribution }

As a benchmark bidding case, it is illustrative to assume that all
bidders know their valuations and only theirs and they believe that
the values of the others are independently distributed according to
the general distribution $F$. $x_{i}$ is the valuation of bidder
$i$. This is a realization of the random variable $X_{i}$ which
bidder $i$ and only bidder $i$ knows for sure. $x_{i}\sim F\left[0,\omega\right]$,
$x_{i}$ is symmetric and independently distributed according to the
distribution $F$ over the interval $\left[0,\omega\right]$. $F,$
is increasing and has full support, which is the non-negative real
line $\left[0,\infty\right]$, hence in this formulation we can have
$\omega=\infty$. $f=F',$ is the continuous density function of $F$.
$M,$ is the total number of bidders. When there is no confusion about
which specific bidder we are referring to, we drop the subscripts
such as in the valuation $x$. $Y_{1}\equiv Y_{1}^{M-1}$ , is the
random variable that denotes the highest value, say for bidder 1,
among the $M-1$ other bidders. $Y_{1},$ is the highest order statistic
of $X_{2},X_{3},...,X_{M}$. $G,$ is the distribution function of
$Y_{1}$. That is, $\forall y,\;G(y)=\left[F(y)\right]^{M-1}$and
$g=G',$ is the continuous density function of $G$ or $Y_{1}$. $m\left(x\right),$
is the expected payment of a bidder with value $x$. $\beta_{i}:\left[0,\omega\right]\rightarrow\Re_{+}$
is an increasing function that gives the strategy for bidder $i$.
We let $\beta_{i}\left(x_{i}\right)=b_{i}$. We must have $\beta_{i}\left(0\right)=0$.
$\beta:\left[0,\omega\right]\rightarrow\Re_{+}$ is the strategy of
all the bidders in a symmetric equilibrium. We let $\beta\left(x\right)=b$.
We also have $b\leq\beta\left(x\right)\;\text{and}\;\beta\left(0\right)=0$.
\begin{lem}
\label{Proposition 3}The symmetric equilibrium bidding strategy for
a bidder, the expected payment of a bidder and the expected revenue
of a seller are given by

Equilibrium Bid Function is,
\begin{eqnarray*}
\beta\left(x\right) & = & \left[x-\int_{0}^{x}\left[\frac{F\left(y\right)}{F\left(x\right)}\right]^{M-1}dy\right]
\end{eqnarray*}
Expected ex ante payment of a particular bidder is,
\begin{eqnarray*}
E\left[m\left(x\right)\right] & = & \int_{0}^{\omega}y\left[1-F\left(y\right)\right]g\left(y\right)dy
\end{eqnarray*}

Expected revenue, $R_{s}$, to the seller is
\[
E\left[R_{s}\right]=ME\left[m\left(x\right)\right]
\]
\end{lem}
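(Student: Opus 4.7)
The plan is to follow the standard revelation-style argument for symmetric first-price auctions. I would postulate the existence of a symmetric, strictly increasing, differentiable equilibrium bidding strategy $\beta$, with $\beta(0)=0$, and then pin it down by a first-order deviation condition. Fix bidder 1 with true valuation $x$, and suppose she deviates by bidding $b=\beta(z)$ as if her valuation were $z$, while all other bidders follow $\beta$. Because $\beta$ is increasing, bidder 1 wins precisely when $\beta(Y_1)<\beta(z)$, i.e.\ when $Y_1<z$, which occurs with probability $G(z)=F(z)^{M-1}$. Her expected payoff is therefore $\Pi(z,x)=G(z)\bigl(x-\beta(z)\bigr)$.

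Differentiating in $z$ and imposing the equilibrium condition that $z=x$ is optimal, the first-order condition $\partial_z\Pi(z,x)\big|_{z=x}=0$ becomes $g(x)\bigl(x-\beta(x)\bigr)-G(x)\beta'(x)=0$, which rearranges to the linear ODE
\begin{equation*}
\frac{d}{dx}\bigl[G(x)\beta(x)\bigr]=x\,g(x).
\end{equation*}
Integrating from $0$ to $x$ using $\beta(0)=0$ gives $G(x)\beta(x)=\int_0^x y\,g(y)\,dy$, and a single integration by parts rewrites the right-hand side as $xG(x)-\int_0^x G(y)\,dy$. Dividing by $G(x)=F(x)^{M-1}$ yields the claimed closed form
\begin{equation*}
\beta(x)=x-\int_0^x\left[\frac{F(y)}{F(x)}\right]^{M-1}\!dy.
\end{equation*}
I would then briefly note that the second-order / global deviation check is standard: because $\beta$ is increasing, $\Pi(z,x)$ is single-peaked in $z$ at $z=x$, so the FOC characterizes a true best response.

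For the expected payment, the bidder with value $x$ pays $\beta(x)$ only upon winning, so $m(x)=G(x)\beta(x)=\int_0^x y\,g(y)\,dy$. Taking the ex ante expectation against $f$ and switching the order of integration over the region $\{0\le y\le x\le\omega\}$,
\begin{equation*}
E[m(x)]=\int_0^\omega\!\!\int_0^x y\,g(y)\,dy\,f(x)\,dx=\int_0^\omega y\,g(y)\!\left[\int_y^\omega f(x)\,dx\right]dy=\int_0^\omega y\bigl[1-F(y)\bigr]g(y)\,dy,
\end{equation*}
which is the second formula. Finally, by symmetry each of the $M$ bidders contributes $E[m(x)]$ in expectation to the seller, so $E[R_s]=M\,E[m(x)]$ by linearity of expectation.

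The main obstacle is not any single computation but cleanly justifying the differentiability / monotonicity of $\beta$ assumed at the outset and verifying that the FOC-derived $\beta$ is indeed a global best response rather than merely a local one; I would handle this by observing that $\Pi(z,x)-\Pi(x,x)$ factors in a sign-definite way using the monotonicity of $\beta$ and the fact that $G$ is increasing, so no bidder gains by deviating to any $z\neq x$. The interchange of integration in the payment computation requires only nonnegativity of the integrand, so Tonelli's theorem applies without incident.
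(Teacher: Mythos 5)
Your proposal is correct and follows essentially the same route as the paper's proof: the first-order condition from the deviation payoff $\Pi(z,x)=G(z)\left(x-\beta(z)\right)$, the ODE $\frac{d}{dx}\left[G(x)\beta(x)\right]=xg(x)$ with $\beta(0)=0$, integration by parts, the payment identity $m(x)=G(x)\beta(x)$ with a change of order of integration, and summing over the $M$ bidders. The global best-response check you sketch is also the same sign-definite comparison the paper carries out explicitly.
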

\begin{proof}
Appendix \ref{subsec:Proof-of-Lemma-1}.
\end{proof}

\subsection{\label{subsec:Symmetric-Independent-Private-Log-Normal}Symmetric
Independent Private Values with Valuations Distributed Log-normally}
\begin{prop}
\label{The-symmetric-equilibrium-1}The symmetric equilibrium bidding
strategy when the valuations are small, of the order less than one,
and distributed log-normally, can be roughly approximated as
\begin{eqnarray*}
\beta\left(x\right) & = & \left[x-\frac{\int_{0}^{x}\left[\Phi\left(\frac{\ln y-\mu}{\sigma}\right)\right]^{M-1}dy}{\left[\Phi\left(\frac{\ln x-\mu}{\sigma}\right)\right]^{M-1}}\right]\\
 & \approx & \frac{x}{2}
\end{eqnarray*}
Here, $\Phi(u)=\frac{1}{\sqrt{2\pi}}\int_{-\infty}^{u}e^{-t^{2}/2}dt$
, is the standard normal cumulative distribution and $X=e^{W}$where,
$W\sim N\left(\mu,\sigma\right)$. $x_{i}\sim LN\left[0,\omega\right]$
since we are considering the log-normal distribution.
\end{prop}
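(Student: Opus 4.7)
The plan is two-fold. First, the exact equality follows mechanically: substituting the log-normal CDF $F(y)=\Phi\!\left(\frac{\ln y-\mu}{\sigma}\right)$ into the equilibrium bid formula from Lemma~\ref{Proposition 3}, namely $\beta(x)=x-\int_{0}^{x}\left[F(y)/F(x)\right]^{M-1}dy$, produces the first line of the proposition directly. No structural property of $\Phi$ beyond its definition is invoked here, so this step amounts to pure relabeling and does not require any auction-theoretic argument beyond what is already in the lemma.

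For the rough approximation $\beta(x)\approx x/2$ valid when $x<1$, I would study the small-$x$ behavior of the integral $I(x)=\int_{0}^{x}\left[\Phi\!\left(\frac{\ln y-\mu}{\sigma}\right)/\Phi\!\left(\frac{\ln x-\mu}{\sigma}\right)\right]^{M-1}dy$, exploiting that $\ln x$ becomes strongly negative so that both $\Phi$-factors sit deep in the left tail. The natural device is the Mills-ratio asymptotic $\Phi(u)\sim\phi(u)/|u|$ as $u\to-\infty$, applied after the change of variable $y=tx$, $dy=x\,dt$, which reduces the target to showing that $\int_{0}^{1}\left[F(tx)/F(x)\right]^{M-1}dt\approx 1/2$. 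A useful sanity check is the uniform benchmark $F(y)=y/\omega$, for which the same integral equals $1/M$ and yields the bid $(M-1)x/M$; the value $1/2$ matches that benchmark when $M=2$, and the proposition should be read as importing this two-player baseline into the small-$x$ log-normal regime as a zero-order estimate.

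The hard part will be that $\Phi$ has no elementary closed form, so $I(x)$ admits no exact elementary evaluation; moreover, carrying the Mills-ratio expansion through carefully suggests that as $|\ln x|$ grows the rescaled integrand concentrates near $t=1$, which would push $I(x)$ below $x/2$ and $\beta(x)$ above it. This tension is exactly what the proposition's qualifier ``roughly approximated'' is accommodating, and it is the reason the paper announces a subsequent nonlinear-regression refinement. I would therefore present the proof in two cleanly separated parts: (i) the exact formula as an immediate corollary of Lemma~\ref{Proposition 3}, and (ii) $\beta(x)\approx x/2$ as a heuristic zero-order benchmark motivated by the left-tail asymptotics of $\Phi$ combined with the uniform $M=2$ baseline, flagged explicitly as a rough anchor rather than a tight asymptotic identity.
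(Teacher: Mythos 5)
Your first part (the exact display as a direct substitution of $F(y)=\Phi\left(\frac{\ln y-\mu}{\sigma}\right)$ into Lemma \ref{Proposition 3}) matches the paper. The gap is in the second part: your proposed route does not, and cannot, deliver the factor $\tfrac12$. After the substitution $y=tx$ you correctly reduce the problem to estimating $\int_{0}^{1}\left[\Phi\left(u+\tfrac{\ln t}{\sigma}\right)/\Phi\left(u\right)\right]^{M-1}dt$ with $u=\frac{\ln x-\mu}{\sigma}\to-\infty$, but the Mills-ratio asymptotics you invoke give $\ln\frac{\Phi(u+s)}{\Phi(u)}\approx -us-\tfrac{s^{2}}{2}+\ln\frac{|u|}{|u+s|}\approx-|u|\,|s|$ for $s=\frac{\ln t}{\sigma}<0$, so the rescaled integrand behaves like $t^{(M-1)|u|/\sigma}$ and the integral tends to $0$, i.e.\ $\beta(x)/x\to1$, not $\tfrac12$. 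You notice this tension yourself, but your fallback --- importing the value $\tfrac12$ from the uniform benchmark at $M=2$ --- is not a derivation: it is a different distribution at one particular $M$, whereas the proposition asserts an $M$-independent $\tfrac{x}{2}$ for the log-normal case. So as written, part (ii) is an honest heuristic label attached to an unproved number, not a proof of the stated approximation.

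The paper reaches $\tfrac{x}{2}$ by a different mechanism entirely. It sets $h(y)=\left[\int_{-\infty}^{(\ln y-\mu)/\sigma}e^{-t^{2}/2}dt\right]^{M-1}$, $j(y)=\int h(y)\,dy$, writes $\beta(x)=x-\frac{j(x)-j(0)}{h(x)}$, and expands $j$ in a Maclaurin series. The first-order term gives $\beta(x)=x\left[1-\frac{h(0)}{h(x)}\right]=x$ because $h(0)=0$ (consistent, incidentally, with the limit your asymptotics produce); the key step is the second-order term, where the paper replaces $h'(0)x$ by the finite difference $h(x)-h(0)$, so that $\frac{x}{2}\frac{h'(0)}{h(x)}\approx\frac12\left(1-\frac{h(0)}{h(x)}\right)=\frac12$, yielding $\beta(x)\approx\frac{x}{2}$; the third-order term is checked to give $\frac{x}{2}$ again, and the restriction to valuations of order less than one comes from a Lagrange-remainder analysis (via Fa\`a di Bruno's formula) showing the error terms are controlled only for small $y$. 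If you want to salvage your write-up, part (i) can stand, but part (ii) needs to be replaced by (or reconciled with) this expansion-of-the-antiderivative argument; alternatively, your Mills-ratio computation could be presented as a caveat showing the approximation degrades as $x\to0^{+}$, which is in the spirit of the paper's own remark that the theoretical approximation's region of validity is limited.
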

\begin{proof}
Appendix \ref{subsec:Proof-of-Proposition-3}.
\end{proof}
We see that the log-normal approximation based on the theoretical
approximation (Proposition \ref{The-symmetric-equilibrium-1}), does
not depend on the number of bidders and we provide a possible explanation
for this counter intuitive result below {[}it is interesting to compare
the bidding strategy in the two cases, uniform (Corollary \ref{The-symmetric-equilibrium}
below) and log-normal distribution (Proposition \ref{The-symmetric-equilibrium-1}),
with regards to the number of bidders{]}. This theoretical approximation
is valid only for small values of the parameters of the order of less
than one and the region of validity is limited due to the fact that
the left limit of the bid strategy expression does not exist at $x=0.$

Clearly, numerical integration of the expression for the bid strategy
from (Proposition \ref{The-symmetric-equilibrium-1} or Lemma \ref{Proposition 3})
is a feasible option to obtain the final results. But the drawback
with this approach is that it is not immediately obvious how sensitive
the bid strategy is to the valuation, the parameters of the valuation
distribution and the number of bidders. An in-depth numerical integration
analysis would be able to develop sensitivity measures, but the users
of the final results are not in anyway required to know or even be
familiar with the sensitivities when they use the results. This sensitivity
is highly useful in real-life scenarios when bidders have to estimate
the parameters and hence knowing the sensitivity can possibly help
to counter the extent of uncertainties associated with the estimates.
Also, numerical integration reveals that the theoretical approximation
is quite sensitive to the parameters at different regions of the solution
space.

Hence, to obtain an expression for the bid strategy that depends on
the parameters, we resort to numerically calculating (using numerical
integration) the bid expression for a sample of parameters that are
simulated from appropriate distributions. We draw samples for the
valuation and the valuation distribution parameters from folded normal
distributions. To get the number of bidders, we set an appropriate
ceiling for the values drawn from a folded normal distribution so
that the number of bidders is an integer value greater than or equal
to two. We then run non-linear power regressions, of the form shown
in (eq \ref{eq:Power}), with the numerically calculated bid strategy
values as the dependent variable and the valuation, the valuation
distribution parameters and the number of bidders as independent variables.
\begin{rem}
\label{A-better-approximation}A better approximation can be obtained
using non-linear regression to find the constant, $C$, and the power
coefficients, $a_{1}$,$a_{2}$,$a_{3}$ and $a_{4}$ in the expression
below,
\begin{equation}
\beta\left(x\right)=Cx^{a_{1}}\mu^{a_{2}}\sigma^{a_{3}}M^{a_{4}}\label{eq:Power}
\end{equation}
\end{rem}
The regression coefficients: the constant, $C$, the power coefficients,
$a_{1}$,$a_{2}$,$a_{3}$ and $a_{4}$ in (eq \ref{eq:Power}) are
summarized in figure \ref{fig:Non-Linear-Regression-Co-effecie} (columns:
propConst, valuationP, valMeanP, valStdDevP, biddersMP). The correlation
between the bid strategy values using numerical integration and both
in-sample and out of sample results from using the regression expression
(eq \ref{eq:Power}) are also shown in figure \ref{fig:Non-Linear-Regression-Co-effecie}
(columns: powerCorr, osPCorr), along with the sample size and a variable
that is proportional to the standard deviation of the folded normal
distributions (columns: sampleSize, accuParam). 

\begin{equation}
\beta\left(x\right)=\beta_{0}+\beta_{1}x+\beta_{2}\sigma+\beta_{3}\mu+\beta_{4}\left(M-1\right)+...+\beta_{l}x^{k}+\beta_{l+1}\sigma^{k}+...\label{eq:linear}
\end{equation}

As seen from the above expression (eq \ref{eq:linear}), linear regressions
are not a good choice in this case since some of the regression coefficients
can be negative and this can result in negative values for the bid
strategy. We still report some basic results for the linear regression
(Figure \ref{fig:Non-Linear-Regression-Co-effecie}, column: linearCorr,
gives the correlations between the bid strategy using numerical integration
and results when using a first order regression of the form eq \ref{eq:linear})
since it acts as a benchmark to provide a good comparison for non-linear
regressions, that give better results. Including higher order terms
for the parameters still has the same issue of possibly getting negative
values for the bid strategy and does not increase the accuracy of
the approximation, based on some trials that we conducted.

\begin{figure}[H]
\includegraphics[width=17.5cm]{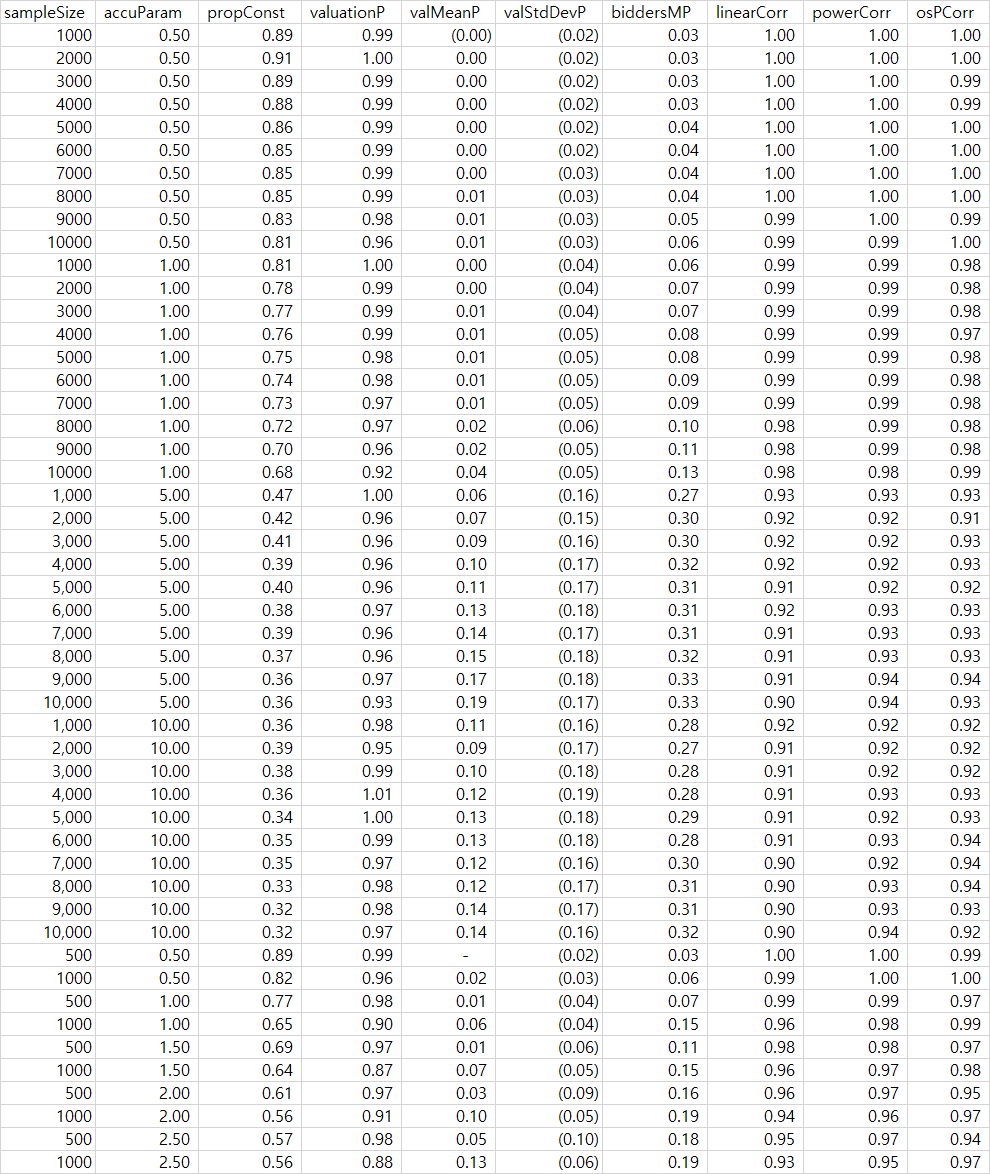}

\caption{\label{fig:Non-Linear-Regression-Co-effecie}Non-Linear Regression
Coefficients and Accuracy}

\end{figure}
 From figure \ref{fig:Non-Linear-Regression-Co-effecie}, we find
that $a_{1}$, which is the coefficient for the valuation $x$, is
the most significant one indicating that the bid strategy is very
sensitive to the valuation and does not depend that much on the other
parameters. This is possibly an explanation for the theoretical approximation
that only involves the valuation in Proposition \ref{The-symmetric-equilibrium-1}.
We see that the power regression produces a more accurate fit than
the linear regression. The power expression is also more accurate
for smaller values of the parameters, say of the order less than one,
and when the number of bidders is more. The lesser the standard deviation
of the valuation distribution, the more accurate the approximation
results. But when the standard deviation of the parameter distributions
is high, the power regression outperform the linear regression. We
do not see a significant difference between the coefficients and the
accuracy of the fit whether we use $M$ or $M-1$ in the regressions.

Figure \ref{fig:Bid-Strategy,-Parameter} gives some raw values of
the bid strategy from the numerical integration (dependent variable)
in the first column and the parameters it depends on, including the
linear regression fit of the first order and the fit from the non-linear
regression; (these independent variables: the valuation, the valuation
distribution mean and standard deviation and the number of bidders
are shown in the subsequent columns, second to fifth columns respectively;
the sixth and seventh columns are the regression results from the
linear and power regressions; the eight column is the ratio of the
numerator and denominator in Proposition \ref{The-symmetric-equilibrium-1},
which is also the equivalent of the valuation minus the bid strategy;
the last three columns give the ratio of the valuation and the bid
strategy using numerical integration, the ratio of the bid strategy
using numerical integration and the power regression result and the
ratio of bid strategy using numerical integration and the linear regression
result). 

Across the entire sample (almost 220,000 observations), the correlation
between the linear regression fit and the bid strategy and the valuation
are 0.34 and 0.35 respectively; the correlation between the power
regression fit and the bid strategy and the valuation are 0.99 and
0.99 respectively. This shows with a certain degree of clarity that
the power regression fit is a better model. A detailed analysis of
the variance explained by both models at different regions of the
parameter space can help in establishing the superiority of the power
model. Such an undertaking can be based on the specifics of the auction
situation and such a more thorough comparison can be useful in deciding
which model to use and related practical considerations.

An immediately obvious fact is that, $\forall x,\beta\left(x\right)\leq x$.
This suggests that fitting different non-linear regression equations
for different buckets of values of $x$ can improve the overall accuracy
of the results. Though we need to be aware of and watch out for the
statistical fact that over-fitting might improve accuracy of the fit
but might not necessarily improve the predictive power. Hence, we
need to perform out of sample tests for any models we fit before we
deem the models to be providing an improvement. Lastly, it should
be clear that the regression based technique is very powerful and
can be applicable in many situations. Fitting a regression equation
of the form in (Eq. \ref{eq:Power}) can be useful even when the valuations
follow other complex distributions, with suitable modifications to
factor in the corresponding distribution parameters as independent
variables. We could even introduce other independent variables to
capture the auction environment such as the industry, the extent of
competition within an industry, any constraints being faced by the
auction sellers and so on.

\begin{figure}[H]
\includegraphics[width=17.5cm]{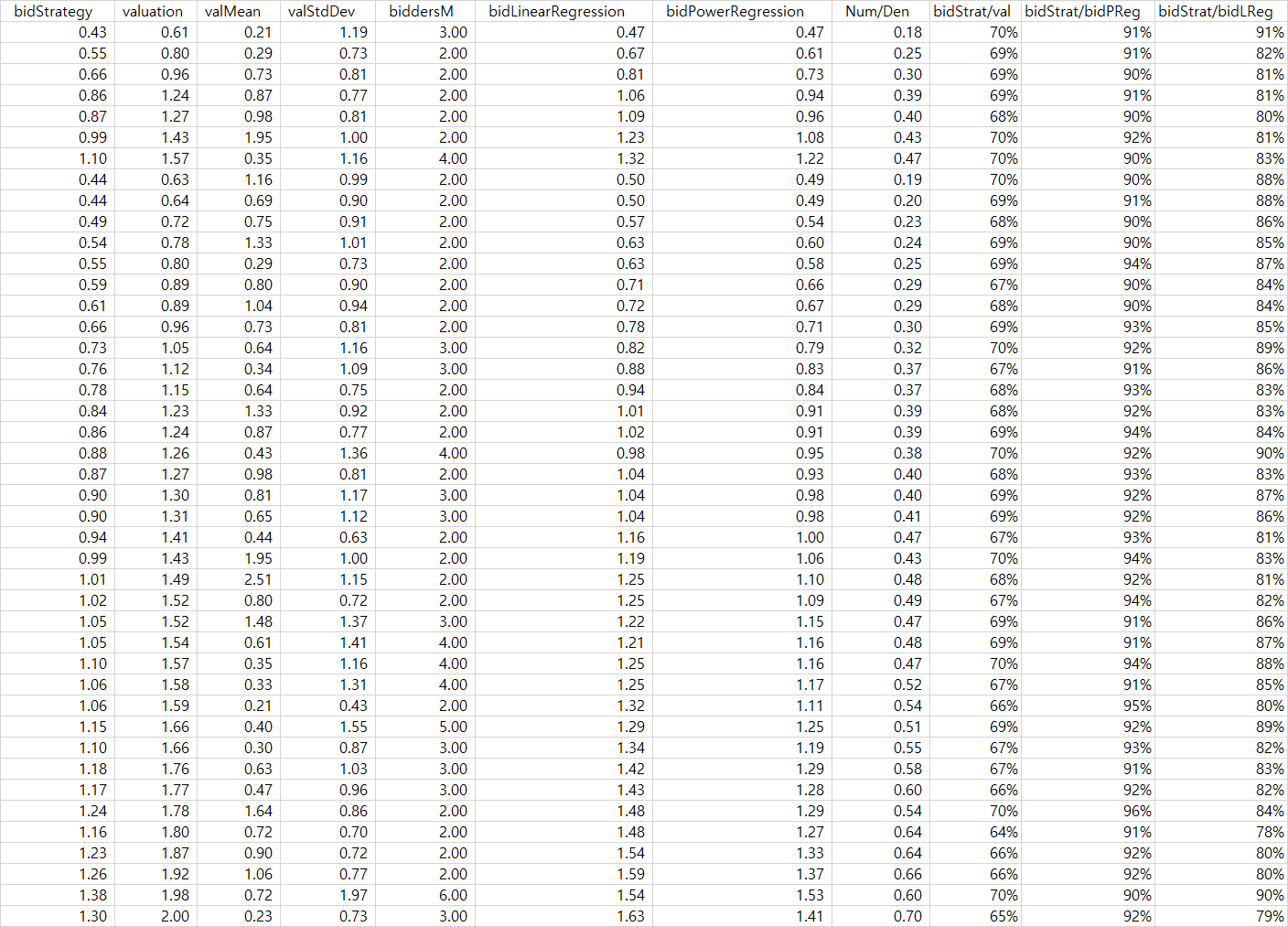}

\caption{\label{fig:Bid-Strategy,-Parameter}Bid Strategy, Parameter Samples
and Regression Results}

\end{figure}

\subsection{Symmetric Independent Private Values with Valuations Distributed
Uniformly}
\begin{cor}
\label{The-symmetric-equilibrium}The symmetric equilibrium bidding
strategy when the valuations are distributed uniformly is given by
\begin{eqnarray*}
\beta\left(x\right) & = & \left(\frac{M-1}{M}\right)x
\end{eqnarray*}
Here, $x_{i}\sim U\left[0,\omega\right]$ since we are considering
the uniform distribution.
\end{cor}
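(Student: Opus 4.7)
The plan is to invoke Lemma \ref{Proposition 3} directly. Since the valuations are uniformly distributed on $[0,\omega]$, the distribution function is $F(y) = y/\omega$ for $y \in [0,\omega]$. The key observation is that the ratio appearing in the bid-function integrand simplifies dramatically: $F(y)/F(x) = y/x$, which is independent of the support parameter $\omega$ and depends only on the ratio of the two valuations.

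First I would substitute this ratio into the equilibrium bid formula from Lemma \ref{Proposition 3}, reducing the problem to evaluating $\int_0^x (y/x)^{M-1}\,dy$. Pulling the constant $1/x^{M-1}$ outside the integral and using the standard power rule $\int_0^x y^{M-1}\,dy = x^M/M$, I obtain $\int_0^x (y/x)^{M-1}\,dy = x/M$. Substituting back gives
\[
\beta(x) = x - \frac{x}{M} = \left(\frac{M-1}{M}\right)x,
\]
which is the claimed expression.

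Since this is a one-step specialization of the general formula, there is no real technical obstacle; the computation is elementary and closed-form precisely because the ratio $F(y)/F(x)$ collapses to $y/x$ under the uniform assumption. The only point worth remarking on is the \textbf{scale invariance} of the result: the bid strategy is independent of $\omega$, and the dependence on the number of bidders $M$ is explicit and monotone (the fraction $(M-1)/M$ increases toward $1$ as $M$ grows). This provides a clean benchmark to contrast with the log-normal case in Proposition \ref{The-symmetric-equilibrium-1}, where the analogous integral admits no closed form and the dependence on the distributional parameters must be recovered numerically via the non-linear regression procedure in Remark \ref{A-better-approximation}.
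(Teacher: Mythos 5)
Your proof is correct and follows essentially the same route as the paper's: both substitute $F\left(y\right)=y/\omega$ into the bid function of Lemma \ref{Proposition 3}, note that the ratio $F\left(y\right)/F\left(x\right)=y/x$ eliminates $\omega$, and evaluate $\int_{0}^{x}\left(y/x\right)^{M-1}dy=x/M$ to obtain $\beta\left(x\right)=\left(\frac{M-1}{M}\right)x$. Nothing is missing; your added remark on scale invariance is a pleasant observation but not needed for the argument.
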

\begin{proof}
Appendix \ref{subsec:Proof-of-Proposition-2}.
\end{proof}
When the number of bidders are large the above expression for the
uniform distribution does not depend on the number of bidders, that
is $\underset{M\rightarrow\infty}{\lim}\left(\frac{M-1}{M}\right)x=x$.
Comparing the bidding strategy with respect to the distribution of
valuations in the two cases, the uniform distribution when the number
of bidders are large and the log-normal distribution theoretical approximation
(Proposition \ref{The-symmetric-equilibrium-1}) we see that: 1) both
do not depend significantly on the number of bidders and 2) the bid
is larger with a uniform distribution.

\subsection{Symmetric Independent Private Value with Reserve Prices}
\begin{lem}
\label{Proposition 6}The symmetric equilibrium bidding strategy when
the valuation is greater than the reserve price, $\left(r>0\right)$,
of the seller, $x\geq r$, is

For a general distribution, 
\begin{eqnarray*}
\beta\left(x\right) & = & r\frac{G\left(r\right)}{G\left(x\right)}+\frac{1}{G\left(x\right)}\int_{r}^{x}yg\left(y\right)dy\\
\text{Alternately }\beta\left(x\right) & = & x-\int_{r}^{x}\frac{G\left(y\right)}{G\left(x\right)}dy
\end{eqnarray*}
\end{lem}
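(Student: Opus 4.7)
The plan is to adapt the derivation of Lemma \ref{Proposition 3} to the reserve-price setting. For any bidder with value $x \geq r$, consider the deviation of pretending to have value $z \in [r,\omega]$ and submitting the equilibrium bid $\beta(z)$. Since $G$ is the distribution of $Y_1 = \max(X_2,\ldots,X_M)$ regardless of whether rivals choose to submit a bid, the probability of winning with this deviation is still $G(z)$, and the expected payoff is
\[
\Pi(z,x) = G(z)\bigl[x-\beta(z)\bigr].
\]

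Next I would differentiate $\Pi(z,x)$ with respect to $z$, set the derivative to zero, and impose the symmetric equilibrium condition $z = x$ to get
\[
g(x)\bigl[x-\beta(x)\bigr] - G(x)\beta'(x) = 0,
\]
which rearranges to the first-order linear ODE $\tfrac{d}{dx}\bigl[G(x)\beta(x)\bigr] = x\,g(x)$. The crucial boundary condition is $\beta(r) = r$: a bidder whose value equals $r$ obtains a strictly negative payoff from any bid above $r$ and has her bid rejected below $r$, so bidding exactly $r$ is forced on the marginal type. Integrating the ODE from $r$ to $x$ and invoking this boundary condition yields
\[
G(x)\beta(x) = rG(r) + \int_r^x y\,g(y)\,dy,
\]
which after dividing by $G(x)$ is the first expression in the statement.

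Finally, to obtain the alternate form, I would integrate $\int_r^x y\,g(y)\,dy$ by parts with $u = y$, $dv = g(y)\,dy$, so $v = G(y)$. This gives $\int_r^x y\,g(y)\,dy = xG(x) - rG(r) - \int_r^x G(y)\,dy$. Substituting back, the $rG(r)$ terms cancel and dividing through by $G(x)$ produces $\beta(x) = x - \int_r^x \bigl[G(y)/G(x)\bigr]\,dy$, completing the equivalence of the two forms.

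The main obstacle, as I see it, is pinning down the boundary condition $\beta(r) = r$ cleanly: without this anchor, the ODE $\tfrac{d}{dx}[G(x)\beta(x)] = xg(x)$ is underdetermined, and one could be tempted to match the no-reserve formula of Lemma \ref{Proposition 3} at $x = 0$ instead. The careful argument is that, in the presence of a reserve price, a value-$r$ bidder wins only on the event $\{Y_1 < r\}$ and pays her own bid, so individual rationality plus the incentive to bid as low as the seller accepts jointly pin down $\beta(r) = r$. Every other step is routine manipulation once this is in place.
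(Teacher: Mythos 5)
Your proof is correct, but it reaches the result by a different route than the paper. The paper's proof (following Krishna) starts directly from the characterization $\beta\left(x\right)=E\left[\max\left\{ Y_{1},r\right\} \mid Y_{1}<x\right]$, splits the conditional expectation into the piece $\int_{0}^{r}rg\left(y\right)dy$ and the piece $\int_{r}^{x}yg\left(y\right)dy$, and then obtains the alternate form by the same integration by parts you use. You instead rebuild the result from primitives: you set up the deviation payoff $G\left(z\right)\left[x-\beta\left(z\right)\right]$, take first-order conditions to get $\frac{d}{dx}\left[G\left(x\right)\beta\left(x\right)\right]=xg\left(x\right)$, and integrate from $r$ with the boundary condition $\beta\left(r\right)=r$ -- the condition the paper simply asserts at the outset, and which you correctly identify as the crux (your justification, that the marginal type would take a loss with positive probability by bidding above $r$ and is excluded below $r$, is the standard one; strictly it needs $G\left(r\right)>0$, but when $G\left(r\right)=0$ the term $rG\left(r\right)$ vanishes and nothing is affected). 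Your observation that the winning probability remains $G\left(z\right)$ for $z\geq r$ because non-participating rivals necessarily have values below $r$ is also the right justification and is glossed over in the paper. What the paper's route buys is brevity, since the conditional-expectation formula packages the equilibrium payment in one line; what your route buys is self-containedness, since it reuses exactly the first-order-condition machinery already established in the proof of Lemma \ref{Proposition 3} rather than importing the expected-payment characterization. Like the paper, you do not verify that $z=x$ is a global maximizer of the deviation payoff, but that verification parallels the one given for Lemma \ref{Proposition 3} and is not demanded by the paper's own standard of rigor here.
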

\begin{proof}
Appendix \ref{subsec:Proof-of-Lemma-4}.
\end{proof}

\subsubsection{Uniform Distributions }
\begin{cor}
\label{The-symmetric-equilibrium-2}The symmetric equilibrium bidding
strategy when the valuation is greater than the reserve price of the
seller, $x\geq r$, and valuations are from an uniform distribution,
\[
\beta\left(x\right)=\frac{r^{M}}{x^{M-1}}\left(\frac{M+1}{M}\right)+x\left(\frac{M-1}{M}\right)
\]
\end{cor}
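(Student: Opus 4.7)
The plan is to specialize Lemma \ref{Proposition 6} to the uniform case and reduce the problem to a single elementary integral. I would begin from the alternate form
\[
\beta(x) = x - \int_r^x \frac{G(y)}{G(x)}\, dy,
\]
because the ratio $G(y)/G(x)$ collapses cleanly for polynomial distribution functions, sparing us from integrating $y\,g(y)$ in the first form.

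First I would record the ingredients for $x_i \sim U[0,\omega]$: the CDF is $F(y) = y/\omega$, so the distribution of the highest rival valuation is $G(y) = (y/\omega)^{M-1}$, and hence $G(y)/G(x) = (y/x)^{M-1}$. Notice that the support parameter $\omega$ cancels, which is the usual simplification that makes the uniform case tractable and the answer depend only on $x$, $r$, and $M$.

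Second, I would evaluate
\[
\int_r^x \left(\frac{y}{x}\right)^{M-1} dy \;=\; \frac{1}{x^{M-1}} \int_r^x y^{M-1}\, dy \;=\; \frac{x^M - r^M}{M\, x^{M-1}} \;=\; \frac{x}{M} - \frac{r^M}{M\, x^{M-1}},
\]
and substitute into the alternate form, collecting the $x$ and $r^M/x^{M-1}$ pieces to land on a closed expression of the form $\alpha\, x + \gamma\, r^M / x^{M-1}$ for explicit rational constants $\alpha,\gamma$ in $M$.

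As sanity checks I would verify two limits: at $x = r$ the expression should reduce to $\beta(r) = r$, which is the requirement that the marginal type exactly bids the reserve; and as $r \to 0$ it should reduce to Corollary \ref{The-symmetric-equilibrium}, i.e., $\beta(x) = \frac{M-1}{M}\, x$. Both boundary conditions pin down the constants unambiguously. The main obstacle is not conceptual but clerical: the coefficient of $r^M/x^{M-1}$ in the final assembled expression depends delicately on how the $1/M$ factors from the integration combine with the bare $x$ term from the alternate form of Lemma \ref{Proposition 6}, and it is easy to misplace a factor of $1/M$ here. I would double-check this coefficient both via the alternate form and by re-deriving it from the first form $\beta(x) = r G(r)/G(x) + G(x)^{-1}\int_r^x y\, g(y)\, dy$ to make sure the two routes agree before declaring the stated formula established.
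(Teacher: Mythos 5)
Your route is essentially the paper's route: the paper specializes Lemma \ref{Proposition 6} to $F(y)=y/\omega$, $G(y)=(y/\omega)^{M-1}$ starting from the first form $r\,G(r)/G(x)+G(x)^{-1}\int_r^x y\,g(y)\,dy$, while you start from the alternate form $x-\int_r^x G(y)/G(x)\,dy$; both collapse to the same elementary integral $\int_r^x y^{M-1}dy$ after $\omega$ cancels, so the difference is cosmetic.

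The substantive point is where your calculation lands. Carrying out your substitution gives
\begin{equation*}
\beta(x)\;=\;x-\frac{x^{M}-r^{M}}{M\,x^{M-1}}\;=\;x\left(\frac{M-1}{M}\right)+\frac{r^{M}}{x^{M-1}}\cdot\frac{1}{M},
\end{equation*}
i.e.\ the coefficient on $r^{M}/x^{M-1}$ is $\tfrac{1}{M}$, not $\tfrac{M+1}{M}$ as printed in the corollary. This is not a gap in your argument: the paper's own proof reaches $\frac{r^{M}}{x^{M-1}}\left(1-\frac{M-1}{M}\right)+x\left(\frac{M-1}{M}\right)$ in its penultimate line, which equals your expression, and the final displayed line replacing $1-\frac{M-1}{M}=\frac{1}{M}$ by $\frac{M+1}{M}$ is a slip. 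Your proposed sanity checks settle the matter: at $x=r$ your formula returns $\beta(r)=r$, as the reserve-price boundary condition requires and as Lemma \ref{Proposition 6} assumes, whereas the printed formula returns $\beta(r)=2r$; and as $r\to 0$ your formula recovers Corollary \ref{The-symmetric-equilibrium}, while the printed one does too only because the offending term vanishes. So you should state the conclusion with coefficient $\tfrac{1}{M}$ and note that the corollary as printed contains a typographical error rather than trying to reproduce it.
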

\begin{proof}
Appendix \ref{subsec:Proof-of-Proposition-5}.
\end{proof}

\subsubsection{Log-normal Distributions }
\begin{cor}
\label{The-symmetric-equilibrium-3}The symmetric equilibrium bidding
strategy when the valuation is greater than the reserve price of the
seller, $x\geq r$, and valuations are from a log-normal distribution,
\[
\beta\left(x\right)=x\left[\frac{h'\left(r\right)\left(x-r\right)}{h\left(x\right)}+\frac{r}{x}\frac{h\left(r\right)}{h\left(x\right)}\right]
\]
\[
\text{Here, }h'\left(r\right)=\left(M-1\right)\left[\int_{-\infty}^{\left(\frac{\ln r-\mu}{\sigma}\right)}e^{-t^{2}/2}dt\right]^{M-2}\left\{ \frac{e^{-\left(\frac{\ln r-\mu}{\sigma}\right)^{2}/2}}{r\sigma}\right\} 
\]
\end{cor}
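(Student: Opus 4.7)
The plan is to specialize the reserve-price equilibrium formula from Lemma \ref{Proposition 6} to the log-normal case and then introduce a first-order approximation to handle the absence of a closed-form antiderivative for the standard normal density. The starting point will be
\[
\beta(x) \;=\; \frac{r\,G(r) \;+\; \int_r^x y\,g(y)\,dy}{G(x)},
\]
with $F(y)=\Phi\bigl((\ln y-\mu)/\sigma\bigr)$, $G(y)=[F(y)]^{M-1}$, and $g(y)=G'(y)$. The first move is to define the auxiliary function $h(y)=\bigl[\int_{-\infty}^{(\ln y-\mu)/\sigma} e^{-t^{2}/2}\,dt\bigr]^{M-1}$, which equals $(\sqrt{2\pi})^{M-1}\,G(y)$ up to a multiplicative constant. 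This constant cancels in every ratio that will appear in the final expression, so $h(r)/h(x)=G(r)/G(x)$ and $h'(r)/h(x)=g(r)/G(x)$. A direct chain-rule differentiation of $h$ at $y=r$, using $\tfrac{d}{dy}[(\ln y-\mu)/\sigma]=1/(y\sigma)$, then reproduces exactly the stated expression for $h'(r)$; I would present this as a short line of computation.

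With the bookkeeping for $h$ in place, the substantive step is to approximate $\int_r^x y\,g(y)\,dy$. The trick I plan to use is to split $y=x-(x-y)$ to get
\[
\int_r^x y\,g(y)\,dy \;=\; x\bigl[G(x)-G(r)\bigr] \;-\; \int_r^x (x-y)\,g(y)\,dy,
\]
then invoke the tangent approximation $G(x)-G(r)\approx g(r)(x-r)$ about the reserve $r$ and discard the second integral as higher order in $(x-r)$. This yields $\int_r^x y\,g(y)\,dy\approx x\,g(r)(x-r)$. Plugging back in,
\[
\beta(x) \;\approx\; \frac{r\,G(r)+x\,g(r)(x-r)}{G(x)} \;=\; \frac{r\,h(r)+x\,h'(r)(x-r)}{h(x)} \;=\; x\!\left[\frac{h'(r)(x-r)}{h(x)}+\frac{r}{x}\frac{h(r)}{h(x)}\right],
\]
which is precisely the claimed formula.

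The main obstacle is intrinsic: because the log-normal CDF has no elementary antiderivative, no exact closed form in $x$, $r$, $\mu$, $\sigma$, $M$ can flow from Lemma \ref{Proposition 6}, so the corollary is necessarily an approximation. The delicate part is choosing an approximation that (i) collapses to $\beta(r)=r$ at $x=r$ (which the tangent-at-$r$ approximation does automatically, since both sides reduce to $r\,h(r)/h(r)=r$), (ii) stays below $x$, and (iii) has parameter dependence transparent enough to be useful for sensitivity analysis. I would close by remarking that the error is controlled by $(x-r)^{2}g'(r)$ together with the curvature of $G$ near $r$, placing the result in the same small-parameter regime flagged after Proposition \ref{The-symmetric-equilibrium-1}, and noting that the regression route of Remark \ref{A-better-approximation} can likewise be deployed here to improve accuracy outside that regime.
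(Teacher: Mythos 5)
Your derivation is correct and lands on exactly the stated formula, but you take a somewhat different route through Lemma \ref{Proposition 6} than the paper does. The paper works from the alternate form $\beta(x)=x-\int_r^x G(y)/G(x)\,dy$, defines $h$ and its antiderivative $j$, applies the Taylor step $j(x)-j(r)\approx h(r)(x-r)$ to get $\beta(x)\approx x-\frac{h(r)(x-r)}{h(x)}$, and then passes to the stated expression by a second, unremarked linearization $h(x)-h(r)\approx h'(r)(x-r)$; it then obtains $h'(r)$ by the Leibniz rule, just as you do by the chain rule. You instead start from the first form $\beta(x)=\frac{rG(r)+\int_r^x y\,g(y)\,dy}{G(x)}$, use the splitting $y=x-(x-y)$ together with the single linearization $G(x)-G(r)\approx g(r)(x-r)$, and discard the $O\bigl((x-r)^2\bigr)$ remainder $\int_r^x(x-y)g(y)\,dy$; the observation that $h=(2\pi)^{(M-1)/2}G$, so that $h$-ratios equal $G$-ratios and $h'(r)/h(x)=g(r)/G(x)$, then converts your expression into the stated one in a single step. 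Both arguments are first-order approximations about the reserve price with errors of the same order, and they produce the identical final expression; what your decomposition buys is that the appearance of $h'(r)$ is derived transparently rather than introduced by a silent substitution at the last line, and your check that the formula returns $\beta(r)=r$ is a useful consistency point that the paper leaves implicit. Your closing remarks on the error being governed by the curvature of $G$ near $r$ and on the regression route of Remark \ref{A-better-approximation} are consistent with how the paper treats its log-normal approximations elsewhere.
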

\begin{proof}
Appendix \ref{subsec:Proof-of-Proposition-6}.
\end{proof}
We can obtain decent approximations using non-linear power regressions,
of the form shown in (eq \ref{eq:Power}), with the numerically calculated
bid strategy values as the dependent variable and the valuation, the
valuation distribution parameters, the reserve price and the number
of bidders as independent variables.

\subsubsection{Optimal Reserve Price for Seller and Other Considerations}
\begin{lem}
\label{The-optimal-reserve}The optimal reserve price for the seller,
$r^{*}$ must satisfy the following expression, 
\[
x_{s}=r^{*}-\frac{\left[1-F\left(r^{*}\right)\right]}{f\left(r^{*}\right)}
\]

Here, seller has a valuation, $x_{s}\in\left[0,\omega\right)$ t
\end{lem}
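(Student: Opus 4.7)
The plan is to express the seller's expected total payoff $\Pi(r)$ as a function of the reserve price $r$, differentiate it, and read off the first-order condition. Conditioning on whether at least one bidder clears the reserve,
\[
\Pi(r) \;=\; x_{s}\bigl[F(r)\bigr]^{M} + M\int_{r}^{\omega}\beta(x)\,G(x)\,f(x)\,dx,
\]
where the first term accounts for the seller retaining the object at value $x_{s}$ when every bidder's valuation falls below $r$, and the second term is $M$ copies of the expected payment from a single bidder with valuation $x\geq r$ who wins against the $M-1$ others (this is the same $m(x)$ device used in Lemma \ref{Proposition 3}, now with a lower cutoff at $r$).

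Next, we would substitute the explicit representation of $\beta(x)G(x)$ supplied by Lemma \ref{Proposition 6},
\[
\beta(x)\,G(x) \;=\; r\,G(r) + \int_{r}^{x} y\,g(y)\,dy,
\]
so that the $r$-dependence of $\Pi$ is made fully explicit, both through the outer lower limit and inside the integrand. Differentiating via the Leibniz rule, the evaluation at the lower endpoint contributes $-M\,rG(r)f(r)$, while the partial derivative of $\beta(x)G(x)$ in $r$ simplifies to $G(r)+rg(r)-rg(r)=G(r)$: the two $rg(r)$ pieces---one from differentiating $rG(r)$, the other from the moving lower limit of the inner integral---cancel cleanly. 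Combining these contributions with the derivative of the first term and using $G(r)=[F(r)]^{M-1}$, the whole expression collapses to
\[
\Pi'(r) \;=\; M\bigl[F(r)\bigr]^{M-1}\Bigl\{(x_{s}-r)\,f(r) + [1-F(r)]\Bigr\}.
\]
Setting $\Pi'(r^{*})=0$ and dividing out the strictly positive prefactor $M[F(r^{*})]^{M-1}$ yields precisely $x_{s}=r^{*}-[1-F(r^{*})]/f(r^{*})$, as claimed.

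The main obstacle is the bookkeeping in the differentiation step: $\beta$ itself is a function of $r$, so one must simultaneously track the moving outer limit and the implicit $r$-dependence inside the integrand, and the derivation hinges on the clean cancellation of the $rg(r)$ terms noted above. A secondary point we would want to flag, though not required for the statement, is the second-order condition: under a standard monotone-hazard-rate assumption on $F$, the bracketed expression $(x_{s}-r)f(r)+[1-F(r)]$ is decreasing in $r$, which confirms that $r^{*}$ is indeed the payoff-maximizing reserve rather than a saddle or minimum.
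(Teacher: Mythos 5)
Your proposal is correct and follows essentially the same route as the paper: both express the seller's payoff as $\Pi_{s}=M\,E\left[m\left(x,r\right)\right]+x_{s}\left[F\left(r\right)\right]^{M}$ with $m\left(x,r\right)=rG\left(r\right)+\int_{r}^{x}yg\left(y\right)dy$ taken from the reserve-price bid function, and then read the result off the first-order condition in $r$. The only cosmetic difference is that the paper first changes the order of integration to simplify $E\left[m\left(x,r\right)\right]$ before differentiating, whereas you apply the Leibniz rule directly to the nested integral; the cancellations and the final condition $x_{s}=r^{*}-\frac{1-F\left(r^{*}\right)}{f\left(r^{*}\right)}$ are identical.
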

\begin{proof}
Appendix \ref{subsec:Proof-of-Lemma-7}.
\end{proof}

\subsection{Variable Number of Bidders with Symmetric Valuations and Beliefs
about Number of Bidders}

$\mathcal{M}=\left\{ 1,2,\ldots,M\right\} $ is the potential set
of bidders when there is uncertainty about how many interested bidders
there are. $\mathcal{A}\subseteq\mathcal{M}$ is the set of actual
bidders. All potential bidders draw their valuations independently
distributed according to the general distribution $F$. Also, $p_{l}$
is the probability that any participating bidder, $i\in\mathcal{A},$
is facing $l$ other bidders or the probability that he assigns to
the event that he is facing $l$ other bidders. This implies that
there is a total of $l+1$ bidders, $l\in\left\{ 1,2,\ldots,M-1\right\} $.
$G^{l}\left(x\right)=\left[F\left(x\right)\right]^{l}$ is the probability
of the event that the highest of $l$ values drawn from the symmetric
distribution $F$ is less than $x$, his valuation and the bidder
wins in this case. $\beta^{l}\left(x\right)$ is the equilibrium bidding
strategy when there are a total of exactly $l+1$ bidders, known with
certainty. The overall probability that the bidder will win when he
bids $\beta^{M}\left(x\right)$ is 
\[
G\left(x\right)=\sum_{l=0}^{M-1}p_{l}G^{l}\left(x\right)
\]

Hence the equilibrium bid for an actual bidder when he is unsure about
the number of rivals he faces is a weighted average of the equilibrium
bids in an auction when the number of bidders is known to all. (McAfee
\& McMillan 1987b) is one of the most well known and early generalizations
to allow the number of bidders to be stochastic.
\begin{lem}
\label{The-equilibrium-strategy}The equilibrium strategy when there
is uncertainty about the number of bidders is given by
\begin{eqnarray*}
\beta^{M}\left(x\right) & = & \sum_{l=0}^{M-1}\frac{p_{l}G^{l}\left(x\right)}{G\left(x\right)}\beta^{l}\left(x\right)
\end{eqnarray*}
\end{lem}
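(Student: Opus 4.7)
The plan is to repeat the symmetric-equilibrium derivation of Lemma \ref{Proposition 3} with the mixture $G(x) = \sum_{l=0}^{M-1} p_l G^l(x)$ playing the role previously played by $[F(x)]^{M-1}$, obtain a closed form for $\beta^M$, and then algebraically rearrange it into the claimed weighted-average representation over the $\beta^l$.

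First I would write the expected payoff of a representative actual bidder with valuation $x$ who contemplates bidding $b = \beta^M(z)$ while every other actual bidder uses the candidate symmetric strategy $\beta^M$. Conditional on facing exactly $l$ rivals, an event of probability $p_l$, the bidder wins if and only if the highest rival valuation falls below $z$, which has probability $G^l(z) = [F(z)]^l$. Averaging over $l$ collapses everything to
$$\Pi(z,x) \;=\; G(z)\bigl(x - \beta^M(z)\bigr).$$

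Next I would take the first-order condition $\partial \Pi / \partial z = 0$ and impose the symmetric-equilibrium requirement $z = x$. Writing $g = G' = \sum_l p_l g^l$ with $g^l = (G^l)'$, this yields
$$G(x)\,(\beta^M)'(x) + g(x)\,\beta^M(x) \;=\; x\,g(x),$$
whose left-hand side is exactly $\bigl(G(x)\beta^M(x)\bigr)'$. Combined with the boundary condition $\beta^M(0) = 0$, integration from $0$ to $x$ gives the clean form $G(x)\,\beta^M(x) = \int_0^x y\, g(y)\, dy$. The remaining step is purely algebraic: splitting $g$ into its mixture components yields $G(x)\beta^M(x) = \sum_{l=0}^{M-1} p_l \int_0^x y\, g^l(y)\, dy$, and applying Lemma \ref{Proposition 3} to an auction with a known number $l+1$ of bidders identifies the inner integral as $G^l(x)\beta^l(x)$. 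Dividing by $G(x)$ then produces the stated weighted-average formula.

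The main obstacle is not the computation itself but the verification that the $\beta^M$ so constructed is a genuine equilibrium rather than merely a solution to the FOC, i.e., ruling out profitable deviations outside a neighbourhood of $z = x$. Since $G$ is strictly increasing as a convex combination of strictly increasing $G^l$, the payoff $\Pi(z,x)$ inherits the same single-crossing and quasi-concavity structure used in the proof of Lemma \ref{Proposition 3}, so the standard argument for first-price auctions transfers verbatim. Strict monotonicity of $\beta^M$, needed for the inversion $z = (\beta^M)^{-1}(b)$ in setting up the deviation payoff, follows from the ODE together with positivity of $g$ on the support.
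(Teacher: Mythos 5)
Your proposal is correct, and it takes a genuinely different (more self-contained) route than the paper. The paper's proof, following the cited Ortega-Reichert and Harstad--Kagel--Levin arguments, works entirely with expected payments: it writes $m\left(x\right)=G\left(x\right)\beta^{M}\left(x\right)$ with $G\left(x\right)=\sum_{l}p_{l}G^{l}\left(x\right)$, then \emph{asserts} the alternative decomposition $m\left(x\right)=\sum_{l}p_{l}G^{l}\left(x\right)E\left[Y_{1}^{l}\mid Y_{1}^{l}<x\right]=\sum_{l}p_{l}G^{l}\left(x\right)\beta^{l}\left(x\right)$ --- i.e., that conditional on facing $l$ rivals the expected payment is exactly what it would be in an auction with $l+1$ bidders known with certainty --- and divides by $G\left(x\right)$. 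You instead rerun the Lemma \ref{Proposition 3} machinery with the mixture distribution: set up $\Pi\left(z,x\right)=G\left(z\right)\left(x-\beta^{M}\left(z\right)\right)$, obtain $\left(G\beta^{M}\right)'\left(x\right)=xg\left(x\right)$ from the first-order condition at $z=x$, integrate using $\beta^{M}\left(0\right)=0$ to get $G\left(x\right)\beta^{M}\left(x\right)=\int_{0}^{x}yg\left(y\right)dy$, and split $g=\sum_{l}p_{l}g^{l}$, identifying each term as $G^{l}\left(x\right)\beta^{l}\left(x\right)$ via Lemma \ref{Proposition 3}. Both arguments meet at the same identity $G\left(x\right)\beta^{M}\left(x\right)=\sum_{l}p_{l}G^{l}\left(x\right)\beta^{l}\left(x\right)$; what your version buys is that the paper's asserted \emph{alternate} payment expression is actually derived rather than stated, and you also address sufficiency of the first-order condition (no profitable global deviations) and monotonicity of $\beta^{M}$, points the paper leaves to the references, while the paper's version buys brevity and the direct economic reading of the bid as a weighted average of known-number equilibrium bids. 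One small housekeeping point: for $l=0$ one has $G^{0}\equiv1$ and $g^{0}\equiv0$, so that component contributes nothing to $\int_{0}^{x}yg\left(y\right)dy$ and your decomposition matches the stated sum only under the natural convention $\beta^{0}\equiv0$ (a bidder facing no rivals bids zero); worth stating explicitly, though it does not affect the result.
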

\begin{proof}
Appendix \ref{subsec:Proof-of-Lemma-8}.
\end{proof}
Any financial market participant or intermediary, will expect most
of the other major players to be bidding at an auction as well. Invariably,
there will be some drop outs, depending on their recent bidding activity
and some smaller players will show up depending on the specifics of
the particular auction. It is a reasonable assumption that all of
the bidders hold similar beliefs about the distribution of the number
of players. We construct a symmetric discrete distribution of the
sort shown in the diagram below (Figure \ref{fig:Variable-Bidders-Symmetric}).
We show that such a distribution satisfies all the properties of a
probability distribution function as part of the proof for Proposition
\ref{The-formula-for}. It is to be noted that this symmetric discrete
distribution comes under the family of triangular distributions (End-note
\ref{Triangular-Distribution}). We can easily come up with variations
that can provide discrete asymmetric probabilities. For simplicity,
we use the uniform distribution for the valuations and set $\omega=1$.
The below result follows from a bidding strategy that incorporates
the use of the discrete symmetric distribution.
\begin{prop}
\label{The-formula-for}The bidding strategy and the formula for the
probability of facing any particular total number of bidders under
a symmetric discrete distribution would be given by,
\[
\beta\left(x\right)=\sum_{l=0}^{M-1}\left(\frac{p_{l}x^{l}}{\sum_{k=0}^{M-1}p_{k}x^{k}}\right)\left(\frac{l}{l+1}\right)x
\]
\begin{eqnarray*}
p_{l} & = & \begin{cases}
l\Delta_{p} & ,\;\text{if}\;\;l\leq\frac{\left(M-1\right)}{2}\\
\left(M-l\right)\Delta_{p} & ,\;\text{if}\;\;l>\frac{\left(M-1\right)}{2}
\end{cases}
\end{eqnarray*}
\[
\Delta_{p}=\left\{ \left\lfloor \frac{M^{2}}{4}\right\rfloor \right\} ^{-1}
\]
We note that $\Delta_{p}$ can also be written as,
\[
\Delta_{p}=\frac{1}{\left\{ \left\lfloor \frac{\left(M-1\right)}{2}\right\rfloor \left\{ \left\lfloor \frac{\left(M-1\right)}{2}\right\rfloor +1\right\} +\left[\left\{ \left(\frac{\left(M-1\right)}{2}\bmod1\right)+\frac{\left(M-1\right)}{2}\right\} \left\{ 2\left(\frac{\left(M-1\right)}{2}\bmod1\right)\right\} \right]\right\} }
\]
\begin{align*}
\left\lfloor \frac{\left(M-1\right)}{2}\right\rfloor \text{ is the integer floor function, that is, it rounds any number down to the nearest integer. }\\
A\bmod B\text{ is the modulo operator, that is, it gives the remainer when }A\text{ is divided by }B.\\
\text{When }A\text{ is a fraction less than one and }B\text{ is one, the result is the fraction itself.}
\end{align*}
\end{prop}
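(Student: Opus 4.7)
The plan is to split the proof into two parts: first deriving the bidding strategy by specializing Lemma \ref{The-equilibrium-strategy} to the uniform case, and second verifying that the proposed symmetric triangular mass function is indeed a valid probability distribution, which also pins down the claimed form of $\Delta_{p}$.

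For the bidding strategy, I would specialize the general uncertainty-about-bidders formula in Lemma \ref{The-equilibrium-strategy} to the uniform setting with $\omega=1$. Since $F(x)=x$ on $[0,1]$, the highest-of-$l$ distribution is $G^{l}(x)=x^{l}$, and by Corollary \ref{The-symmetric-equilibrium} the known-$l$-rival equilibrium bid is $\beta^{l}(x)=\frac{l}{l+1}x$. Substituting both into $\beta^{M}(x)=\sum_{l=0}^{M-1}\frac{p_{l}G^{l}(x)}{G(x)}\beta^{l}(x)$ with $G(x)=\sum_{k=0}^{M-1}p_{k}x^{k}$ yields the stated expression directly, with no further manipulation required.

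For the mass function, I would first note that $p_{l}\geq 0$ is immediate from the definition, since $\Delta_{p}>0$ and both $l$ and $M-l$ are nonnegative for $l\in\{0,1,\ldots,M-1\}$. The substantive task is to show that $\sum_{l=0}^{M-1}p_{l}=1$, which forces $\Delta_{p}^{-1}=\lfloor M^{2}/4\rfloor$. Because the piecewise rule for $p_{l}$ switches behavior at $l=(M-1)/2$, I would handle the cases $M$ odd and $M$ even separately. In each case the sum collapses to two arithmetic progressions that can be evaluated in closed form via $\sum_{j=1}^{n}j=n(n+1)/2$: for odd $M$ with $n=(M-1)/2$ the two halves each contribute $n(n+1)/2$, giving total $n(n+1)=(M^{2}-1)/4$, while for even $M$ the ascending part sums to $(M/2)(M/2-1)/2$ and the descending part to $(M/2)(M/2+1)/2$, combining to $M^{2}/4$. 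Since $\lfloor M^{2}/4\rfloor$ equals $(M^{2}-1)/4$ when $M$ is odd and $M^{2}/4$ when $M$ is even, both cases reduce to the single compact formula claimed.

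The main obstacle, and the bookkeeping I expect to be most tedious, is reconciling the compact expression $\Delta_{p}^{-1}=\lfloor M^{2}/4\rfloor$ with the alternative expression involving $\lfloor(M-1)/2\rfloor$ and $((M-1)/2)\bmod 1$. Here I would verify that for odd $M$ we have $((M-1)/2)\bmod 1=0$, so the bracketed additive term vanishes and the denominator reduces to $\lfloor(M-1)/2\rfloor(\lfloor(M-1)/2\rfloor+1)$, which matches the odd-$M$ sum above. For even $M$, $((M-1)/2)\bmod 1=1/2$ and the bracketed term contributes exactly $M/2$, which added to $(M/2-1)(M/2)$ yields $M^{2}/4$. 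Confirming that both parities collapse to the same closed form is the key consistency check; everything else follows by direct substitution.
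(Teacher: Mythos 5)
Your proposal is correct and follows essentially the same route as the paper: specialize Lemma \ref{The-equilibrium-strategy} with uniform valuations ($\omega=1$, so $G^{l}(x)=x^{l}$ and $\beta^{l}(x)=\frac{l}{l+1}x$ from Corollary \ref{The-symmetric-equilibrium}) to obtain the bid formula, then verify nonnegativity and that the masses sum to one by splitting on the parity of $M$ and evaluating the two arithmetic progressions, which forces $\Delta_{p}^{-1}=\lfloor M^{2}/4\rfloor$. Your explicit parity check reconciling the compact and alternative expressions for $\Delta_{p}$ is a detail the paper only asserts, so you have in fact filled in slightly more than the published argument.
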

\begin{proof}
Appendix \ref{subsec:Proof-of-Proposition-Discrete-Symmetric}.
\end{proof}
\begin{figure}[H]
\includegraphics[width=8cm]{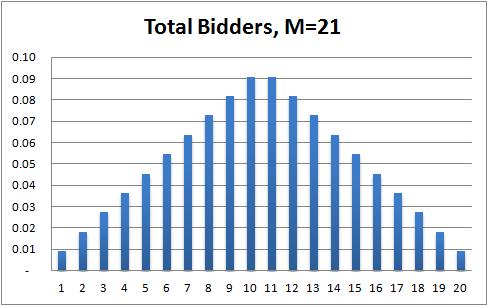}\includegraphics[width=8cm]{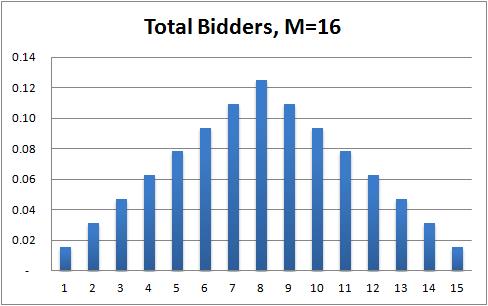}

\caption{\label{fig:Variable-Bidders-Symmetric}Variable Bidders Symmetric
Discrete Probability Distributions}
\end{figure}

\subsection{Asymmetric Valuations}

$f_{i},\,F_{i}$ , are the continuous density function and distribution
of bidder $i$ in this case where the valuations are asymmetric. $\phi_{i}\equiv\beta_{i}^{-1}$
is the inverse of the bidding strategy $\beta_{i}$. This means, $x_{i}=\beta_{i}^{-1}\left(b_{i}\right)=\phi_{i}\left(b_{i}\right)$.
The following result captures the scenario when we have an asymmetric
equilibrium.
\begin{lem}
\label{The-system-of}The system of differential equations for an
asymmetric equilibrium is given by
\begin{eqnarray*}
\sum_{j\neq i}^{j\in\left\{ 1,...,M\right\} }\left\{ \frac{f_{j}\left(\phi_{j}\left(b\right)\right)\phi'_{j}\left(b\right)}{F_{j}\left(\phi_{j}\left(b\right)\right)}\right\}  & = & \frac{1}{\left[\phi_{i}\left(b\right)-b\right]}
\end{eqnarray*}
\end{lem}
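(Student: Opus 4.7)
The plan is to derive the system from the first-order condition of each bidder's expected-payoff maximization problem, assuming bidders use increasing (hence invertible) strategies $\beta_j$, so that the event "bidder $i$ wins with bid $b$" coincides with $\{\beta_j(X_j) < b \text{ for all } j\neq i\} = \{X_j < \phi_j(b) \text{ for all } j\neq i\}$. By independence of the $X_j$, the winning probability factors as $\prod_{j\neq i} F_j(\phi_j(b))$, so the expected payoff of bidder $i$ with valuation $x_i$ submitting bid $b$ is
\[
\pi_i(b;x_i) \;=\; (x_i - b)\,\prod_{j\neq i} F_j\bigl(\phi_j(b)\bigr).
\]

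Next, I would take $\partial\pi_i/\partial b = 0$. Using the logarithmic-derivative trick on the product,
\[
\frac{\partial}{\partial b}\ln \prod_{j\neq i} F_j\bigl(\phi_j(b)\bigr) \;=\; \sum_{j\neq i}\frac{f_j(\phi_j(b))\,\phi'_j(b)}{F_j(\phi_j(b))},
\]
the first-order condition becomes
\[
-\prod_{j\neq i} F_j(\phi_j(b)) \;+\; (x_i-b)\,\prod_{j\neq i} F_j(\phi_j(b))\sum_{j\neq i}\frac{f_j(\phi_j(b))\,\phi'_j(b)}{F_j(\phi_j(b))} \;=\; 0.
\]
Dividing through by the common product (strictly positive on the relevant range) and invoking the equilibrium identification $x_i = \phi_i(b)$ yields the stated system
\[
\sum_{j\neq i}\frac{f_j(\phi_j(b))\,\phi'_j(b)}{F_j(\phi_j(b))} \;=\; \frac{1}{\phi_i(b)-b}.
\]

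I would then briefly check the second-order / monotonicity conditions implicitly: strictly increasing $\beta_j$ makes $\phi_j$ well-defined and differentiable on the range of bids, and the payoff is strictly quasi-concave in a neighborhood of the optimum so that the FOC pins down the best response. The main conceptual obstacle is not the calculation itself, which is a standard log-derivative, but justifying the factorization of the winning probability: one must argue that all equilibrium strategies are strictly increasing (so that ties happen with probability zero and $\phi_j$ exists), and that the supports of the induced bid distributions are such that $F_j(\phi_j(b)) > 0$ in the relevant region, so dividing by the product is legitimate. These regularity facts are standard in the asymmetric first-price literature and I would cite them rather than redevelop them, keeping the proof focused on the FOC manipulation that produces the displayed system.
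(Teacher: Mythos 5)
Your proposal is correct and follows essentially the same route as the paper's proof: write the expected payoff as $\left(x_{i}-b\right)\prod_{j\neq i}F_{j}\left(\phi_{j}\left(b\right)\right)$ using the event $\left\{ X_{j}<\phi_{j}\left(b\right)\right\}$, take the first-order condition in $b$, divide by the product, and substitute $x_{i}=\phi_{i}\left(b\right)$. The only cosmetic difference is that you differentiate via the logarithm while the paper applies the product rule directly, which is the same computation.
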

\begin{proof}
Appendix \ref{subsec:Proof-of-Lemma-9}.
\end{proof}
This system of differential equations can be solved to get the bid
functions for each player. Closed form solutions are known for the
case of uniform distributions with different supports. A simplification
is possible by assuming that say, some bidders have one distribution
and some others have another distribution. This is a reasonable assumption
since financial firms can be categorized into big global and small
local companies. Such an assumption can hold even for telecommunication
companies or other sectors where local players compete with global
players to obtain rights for service to a region or to obtain use
of raw materials from a region.
\begin{prop}
\label{If,--firms}If, $K+1$ firms (including the one for which we
derive the payoff condition) have the distribution $F_{1}$, strategy
$\beta_{1}$ and inverse function $\phi_{1}$. The other $M-K-1$
firms have the distribution $F_{2}$, strategy $\beta_{2}$ and inverse
function $\phi_{2}$. The system of differential equations is given
by,

\begin{eqnarray*}
\left\{ K\frac{f_{1}\left(\phi_{1}\left(b\right)\right)\phi'_{1}\left(b\right)}{\left[F_{1}\left(\phi_{1}\left(b\right)\right)\right]}\right\} +\left\{ \left(M-1-K\right)\frac{f_{2}\left(\phi_{2}\left(b\right)\right)\phi'_{2}\left(b\right)}{\left[F_{2}\left(\phi_{2}\left(b\right)\right)\right]}\right\} =\frac{1}{\left[\phi_{i}\left(b\right)-b\right]}
\end{eqnarray*}
\end{prop}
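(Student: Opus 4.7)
The plan is to obtain Proposition \ref{If,--firms} as a direct specialization of Lemma \ref{The-system-of}, exploiting the two-type structure. Starting from the general asymmetric first-order condition
\[
\sum_{j\neq i}^{j\in\{1,\ldots,M\}}\frac{f_{j}(\phi_{j}(b))\phi'_{j}(b)}{F_{j}(\phi_{j}(b))}=\frac{1}{\phi_{i}(b)-b},
\]
I would first invoke symmetry within each class: any permutation of bidders drawing from the same distribution leaves the auction environment invariant, so it is natural (and standard in this literature) to look for an equilibrium in which all $K+1$ type-$1$ firms use the common strategy $\beta_{1}$ with inverse $\phi_{1}$, and all $M-K-1$ type-$2$ firms use $\beta_{2}$ with inverse $\phi_{2}$.

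Next, I would fix attention on a representative bidder $i$ drawn from type $1$ (the proposition derives the condition for this bidder). Among the remaining $M-1$ rivals, exactly $K$ are of type $1$ (the other members of the type-$1$ pool, since $i$ has been removed) and exactly $M-1-K$ are of type $2$. Partitioning the sum in Lemma \ref{The-system-of} according to these two classes, each summand within a class is identical because the $f_{j},F_{j},\phi_{j}$ collapse to $f_{1},F_{1},\phi_{1}$ or to $f_{2},F_{2},\phi_{2}$ respectively. Collecting like terms then yields exactly the displayed equation.

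As a cross-check, I would re-derive the same expression by writing type-$1$ bidder $i$'s expected payoff directly. With valuation $x$ and bid $b$, the probability of defeating any particular type-$t$ rival is $F_{t}(\phi_{t}(b))$, so by independence the winning probability is $[F_{1}(\phi_{1}(b))]^{K}[F_{2}(\phi_{2}(b))]^{M-1-K}$. Differentiating $(x-b)$ times this winning probability with respect to $b$, setting the derivative to zero, dividing through by the winning probability, and imposing the equilibrium identification $\phi_{i}(b)=x$ reproduces the stated system and makes transparent where the multiplicities $K$ and $M-1-K$ enter.

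The main potential obstacle is conceptual rather than computational: justifying that an equilibrium symmetric within each type exists and that Lemma \ref{The-system-of} applies under the reduced inverse maps $\phi_{1},\phi_{2}$ in place of $M$ distinct ones. I would address this by pointing to the invariance argument above and noting that the first-order condition derivation in Lemma \ref{The-system-of} only uses the independence of rival valuations and the smoothness of the inverse bid functions, both of which are preserved when rivals are partitioned into two symmetric groups. Once that is granted, the algebraic grouping step is routine and completes the proof.
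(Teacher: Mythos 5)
Your proposal is correct, and in fact it contains two complete arguments. Your primary route -- specializing Lemma \ref{The-system-of} by noting that a type-$1$ bidder faces exactly $K$ rivals with $(f_{1},F_{1},\phi_{1})$ and $M-1-K$ rivals with $(f_{2},F_{2},\phi_{2})$, then collecting identical summands -- is a slightly different (and shorter) path than the paper takes: the paper does not invoke the lemma at all, but instead writes the two-type payoff $\Pi_{i}(b,x_{i})=\left[F_{1}\left(\phi_{1}\left(b\right)\right)\right]^{K}\left[F_{2}\left(\phi_{2}\left(b\right)\right)\right]^{M-1-K}\left(x_{i}-b\right)$, differentiates in $b$, sets the derivative to zero, and divides through by the winning probability -- which is exactly your ``cross-check.'' The lemma-specialization route buys brevity and makes the multiplicities $K$ and $M-1-K$ appear by simply counting rivals of each type; the direct-payoff route (the paper's choice) buys self-containedness and makes the product structure of the winning probability explicit without presupposing the lemma's formulation with $M$ distinct inverse maps. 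Your stated worry about whether the lemma applies under the reduced maps $\phi_{1},\phi_{2}$ is essentially moot: the first-order condition in Lemma \ref{The-system-of} is a bidder-by-bidder necessary condition given arbitrary increasing, differentiable rival strategies, so imposing within-type symmetry merely identifies the relevant $\phi_{j}$'s with $\phi_{1}$ or $\phi_{2}$; neither you nor the paper needs (or claims) an existence result for the type-symmetric equilibrium, since the proposition only asserts the form of the differential system.
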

\begin{proof}
Appendix \ref{subsec:Proof-of-Proposition-10}.
\end{proof}
As a special case, if there are only two bidders, $M=2,K=1$ the above
reduces to a system of two differential equations, 
\begin{eqnarray*}
\phi'_{1}\left(b\right) & = & \frac{\left[F_{1}\left(\phi_{1}\left(b\right)\right)\right]}{f_{1}\left(\phi_{1}\left(b\right)\right)\left[\phi_{2}\left(b\right)-b\right]}\\
\phi'_{2}\left(b\right) & = & \frac{\left[F_{2}\left(\phi_{2}\left(b\right)\right)\right]}{f_{2}\left(\phi_{2}\left(b\right)\right)\left[\phi_{1}\left(b\right)-b\right]}
\end{eqnarray*}

\subsection{Symmetric Interdependent Valuations}

It is worth noting that a pure common value model of the sort, $V=\upsilon\left(X_{1},X_{2},...,X_{M}\right)$
is not relevant for many entities (firms or individuals) since forces
that drive them to participate in auctions might vary. $X_{i}\in\left[0,\omega_{i}\right]$
is bidder $i's$ signal when the valuations are interdependent. $V_{i}=\upsilon_{i}\left(X_{1},X_{2},...,X_{M}\right)$
is the item value to bidder $i$ and $\upsilon_{i}\left(0,0,...,0\right)=0$.
$\upsilon_{i}\left(x_{1},x_{2},...,x_{M}\right)\equiv E\left[V_{i}\mid X_{1}=x_{1},X_{2}=x_{2},...,X_{M}=x_{M},\right]$
is a more general setting, where knowing the signals of all bidders
still does not reveal the full value with certainty. 

In the case of financial firms (Kashyap 2016) the motivation to participate
in auctions will depend on their trading activity, which will not
be entirely similar, but we can expect many common drivers of trading
activity. What this reasoning tells us is that it is reasonable to
expect that there is some correlation between the signals of each
bidder. This results in a symmetric interdependent auction strategy.
From the perspective of a particular bidder, the signals of the other
bidders can be interchanged without affecting the value. This is captured
using the function $u\left(X_{i},X_{-i}\right)$ which is the same
for all bidders and is symmetric in the last $M-1$ components. We
assume that all signals $X_{i}$ are from the same distribution $\left[0,\omega\right]$
and that the valuations can be written as 
\[
\upsilon_{i}\left(X_{1},X_{2},...,X_{M}\right)=u\left(X_{i},X_{-i}\right)
\]
We also assume that the joint density function of the signals $f$
defined on $\left[0,\omega\right]^{M}$ is symmetric and the signals
are affiliated. Affiliation here refers to the below properties.
\begin{itemize}
\item The random variables $X_{1},X_{2},...,X_{M}$ distributed on some
product of intervals $\mathcal{X}\subset\Re^{M}$ according to the
joint density function $f$. The variables $\mathbf{X=}\left(X_{1},X_{2},...,X_{M}\right)$
are affiliated if $\forall\mathbf{x',x''}\in\mathcal{X}$, $f\left(\mathbf{x'\lor x''}\right)f\left(\mathbf{x'\land x''}\right)\geq f\left(\mathbf{x'}\right)f\left(\mathbf{x''}\right)$.
Here $\mathbf{x'\lor x''}$ and $\mathbf{x'\lor x''}$ denote the
component wise maximum and minimum of $\mathbf{x'}$ and $\mathbf{x''}$.
\item The random variables $Y_{1},Y_{2},...,Y_{M-1}$ denote the largest,
second largest, ... , smallest from among $X_{2},X_{3},...,X_{M}$.
If $X_{1},X_{2},...,X_{M}$ are affiliated, then $X_{1},Y_{1},Y_{2},...,Y_{M-1}$
are also affiliated. 
\item Let $G\left(.\mid x\right)$ denote the distribution of $Y_{1}$ conditional
on $X_{1}=x$ and let $g\left(.\mid x\right)$ be the associated conditional
density function. Then if $Y_{1}$ and $X_{1}$ are affiliated and
if $x'>x$ then $G\left(.\mid x'\right)$ dominates $G\left(.\mid x\right)$
in terms of the reverse hazard rate, $\frac{g\left(t\right)}{G\left(t\right)}$.
That is $\forall y$,
\[
\frac{g\left(y\mid x'\right)}{G\left(y\mid x'\right)}\geq\frac{g\left(y\mid x\right)}{G\left(y\mid x\right)}
\]
\item If $\gamma$ is any increasing function, then $x'>x$ implies that
\[
E\left[\gamma\left(Y_{1}\right)\mid X=x'\right]\geq E\left[\gamma\left(Y_{1}\right)\mid X=x\right]
\]
\end{itemize}
We define the below function as the expectation of the value to bidder
$1$ when the signal he receives is $x$ and the highest signal among
the other bidders, $Y_{1}=y$. Because of symmetry this function is
the same for all bidders and we assume it is strictly increasing in
$x$. We also have $u\left(\mathbf{0}\right)=\upsilon(0,0)=0$. 
\[
\upsilon\left(x,y\right)=E\left[V_{1}\mid X=x,Y_{1}=y\right]
\]

\begin{lem}
\label{A-symmetric-equilibrium}A symmetric equilibrium strategy governed
by the set of conditions above is given by
\[
\beta\left(x\right)=\int_{0}^{x}\upsilon\left(y,y\right)dL\left(y\mid x\right)
\]

Here, we define $L\left(y\mid x\right)$ as a function with support
$\left[0,\omega\right]$,
\[
L\left(y\mid x\right)=\exp\left[-\int_{y}^{x}\frac{g\left(t\mid t\right)}{G\left(t\mid t\right)}dt\right]
\]
\end{lem}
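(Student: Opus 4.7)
The plan is to derive the symmetric equilibrium strategy by a payoff-maximization argument. I first fix the conjectured symmetric increasing strategy $\beta$ and suppose a bidder with signal $x$ deviates to the bid $b=\beta(z)$ that would be played by a hypothetical bidder of signal $z\in[0,\omega]$. Since all rivals use the increasing strategy $\beta$, the deviator wins exactly when $Y_{1}<z$, so using $\upsilon(x,y)=E[V_{1}\mid X=x,Y_{1}=y]$ the expected payoff is
\[
\Pi(z,x) = \int_{0}^{z}\bigl[\upsilon(x,y)-\beta(z)\bigr]\, g(y\mid x)\, dy = \int_{0}^{z}\upsilon(x,y)\, g(y\mid x)\, dy - \beta(z)\, G(z\mid x).
\]
Differentiating with respect to $z$ and imposing the equilibrium requirement $z=x$ yields the necessary first-order condition
\[
\bigl[\upsilon(x,x)-\beta(x)\bigr]\, g(x\mid x) = \beta'(x)\, G(x\mid x),
\]
which is a first-order linear ODE for $\beta$.

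Next I would solve this ODE by the integrating-factor technique. Rearranging as
\[
\beta'(x)+\beta(x)\frac{g(x\mid x)}{G(x\mid x)} = \upsilon(x,x)\frac{g(x\mid x)}{G(x\mid x)},
\]
the natural integrating factor is $\mu(x)=\exp\bigl[\int_{0}^{x}(g(t\mid t)/G(t\mid t))\, dt\bigr]$, and indeed $L(y\mid x)$ as defined in the statement equals $\mu(y)/\mu(x)$. A direct computation gives $\partial L(y\mid x)/\partial y = L(y\mid x)\, g(y\mid y)/G(y\mid y)$, and one checks that $L(\cdot\mid x)$ behaves as a cumulative distribution on $[0,x]$, with $L(x\mid x)=1$ and $L(0\mid x)=0$ by divergence of the integrand near zero. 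Multiplying the ODE by $\mu$, integrating from $0$ to $x$ using the boundary condition $\beta(0)=\upsilon(0,0)=0$, and dividing by $\mu(x)$ delivers
\[
\beta(x) = \int_{0}^{x}\upsilon(y,y)\,\frac{g(y\mid y)}{G(y\mid y)}\, L(y\mid x)\, dy = \int_{0}^{x}\upsilon(y,y)\, dL(y\mid x),
\]
which is precisely the claimed expression.

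The main obstacle is verifying that the critical point $z=x$ is a \emph{global} maximizer of $\Pi(\cdot,x)$, so that $\beta$ is genuinely a best response rather than merely satisfying a local necessary condition. This is where the affiliation structure is essential: after substituting the ODE back into $\partial\Pi/\partial z$, the sign reduces to comparing $[\upsilon(x,z)-\beta(z)]\, g(z\mid x)/G(z\mid x)$ with $[\upsilon(z,z)-\beta(z)]\, g(z\mid z)/G(z\mid z)$, and the monotonicity of $\upsilon$ in its first argument together with the property stated in the excerpt that $g(z\mid x)/G(z\mid x)$ is non-decreasing in $x$ push both comparisons in the same direction, delivering single-crossing of $\partial\Pi/\partial z$ at $z=x$. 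I would also note that $\beta(x)<\upsilon(x,x)$ follows automatically from the weighted-average structure of the solution, which implies $\beta'>0$, so the inverse $\beta^{-1}$ invoked at the outset is well-defined and the argument closes consistently.
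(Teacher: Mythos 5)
Your proposal is correct and follows essentially the same route as the paper's proof: the deviation payoff $\Pi(z,x)$, the first-order condition at $z=x$ giving the linear ODE $\beta'(x)+\beta(x)\,g(x\mid x)/G(x\mid x)=\upsilon(x,x)\,g(x\mid x)/G(x\mid x)$, the solution via the integrating factor encoded in $L(y\mid x)$ with boundary condition $\beta(0)=\upsilon(0,0)=0$ and the properties $L(x\mid x)=1$, $L(0\mid x)=0$, and the affiliation-based reverse-hazard-rate comparison to show $z=x$ is a global maximizer. The only cosmetic difference is that you solve the ODE forward with the integrating factor while the paper verifies that $\int_{0}^{x}\upsilon(y,y)\,dL(y\mid x)$ satisfies the same equation by differentiating under the integral, which is the same computation read in the opposite direction.
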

\begin{proof}
Appendix \ref{subsec:Proof-of-Lemma-11}.
\end{proof}
\begin{prop}
\label{The-bidder's-equilibrium}The bidder's equilibrium strategy
under a scenario when the valuation is the weighted average of his
valuation and the highest of the other valuations is given by the
expression below. That is, we let $\upsilon\left(x,y\right)=\alpha x+\xi y\;\mid\alpha,\xi\in\left[0,1\right]$.
This also implies, $\upsilon\left(x,y\right)=u\left(x,y\right)=u\left(x_{i},x_{-i}\right)=\alpha x_{i}+\xi\max\left(x_{-i}\right)$,
giving us symmetry across the signals of other bidders. An alternative
formulation could simply be $\upsilon\left(x,y\right)=\frac{1}{M}\left(\overset{M}{\underset{i=1}{\sum}}x_{i}\right)$.
The affiliation structure follows the Irwin-Hall distribution (End-note
\ref{Irwin-Hall}) with bidder's valuation being the sum of a signal
coming from a uniform distribution with $\omega=1$ and a common component
from the same uniform distribution. 
\end{prop}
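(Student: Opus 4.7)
The plan is to verify the hypotheses of Lemma \ref{A-symmetric-equilibrium}, compute the conditional distribution of $Y_{1}$ given $X_{1}$ under the Irwin--Hall construction, extract the diagonal ratio $g(t\mid t)/G(t\mid t)$, and substitute into the resulting integral for $\beta(x)$. I would model the primitives explicitly as $X_{i}=S_{i}+T$ with $S_{1},\ldots,S_{M},T$ independent $U[0,1]$ random variables, so that the joint density of $(X_{1},\ldots,X_{M})$ is symmetric and the signals are affiliated (mixing i.i.d.\ observations over a common latent variable is a standard sufficient condition). Moreover $\upsilon(x,y)=\alpha x+\xi y$ is strictly increasing in $x$, and the induced $u(x_{i},x_{-i})=\alpha x_{i}+\xi\max(x_{-i})$ is symmetric in the coordinates of $x_{-i}$. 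The lemma then reduces $\beta(x)=\int_{0}^{x}\upsilon(y,y)\,dL(y\mid x)$ to $(\alpha+\xi)\int_{0}^{x}y\,dL(y\mid x)$, so the remaining work is a direct computation of $L(y\mid x)$.

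Next, I would compute $G(y\mid x)$ by conditioning on $T$. Given $T=s$, the rivals $X_{j}=S_{j}+s$ are i.i.d.\ uniform on $[s,s+1]$, hence $P(Y_{1}\le y\mid T=s)=(y-s)^{M-1}$ on $s\le y\le s+1$; and $T$ given $X_{1}=x$ is uniform on $[\max(0,x-1),\min(1,x)]$. For $x\in[0,1]$ and $y\le x$ a short integration yields $G(y\mid x)=y^{M}/(Mx)$ and $g(y\mid x)=y^{M-1}/x$, so $g(t\mid t)/G(t\mid t)=M/t$ and $L(y\mid x)=(y/x)^{M}$, producing the clean expression $\beta(x)=(\alpha+\xi)\frac{M}{M+1}x$ on $[0,1]$. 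For $x\in[1,2]$ the support of $T$ shifts to $[x-1,1]$ and $G(y\mid x)$ splits into sub-regimes at the breakpoints $y=x-1$, $y=1$, and $y=x$; I would enumerate these, compute the corresponding diagonal ratio, and splice the pieces of $L$ together continuously at $x=1$.

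The main obstacle will be the piecewise bookkeeping in the upper regime $x\in[1,2]$: several breakpoints in the support of $Y_{1}\mid X_{1}=x$ generate algebraically heavy sub-cases, and continuity of $L(y\mid x)$ across $x=1$ must be verified by hand. To keep the manipulations tractable I would express $G(t\mid t)$ and $g(t\mid t)$ directly through the mixing representation---as an integral of $(t-s)^{M-1}$ against the uniform density of $T$ restricted to its conditional support---since the diagonal evaluation $y=x=t$ often produces cancellations that remain invisible if one expands $G(y\mid x)$ in full generality first. As a final sanity check, comparing the coefficient $M/(M+1)$ against the analogous $(M-1)/M$ of Corollary \ref{The-symmetric-equilibrium} would confirm that the common factor $T$ shifts the shading factor in the expected direction (bidders shade less because affiliation reduces the winner's curse concern associated with the private-value benchmark).
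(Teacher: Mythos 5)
There is a genuine divergence here, and it is not just bookkeeping: your route cannot produce the formula stated in the proposition. You compute the \emph{exact} conditional law of $Y_{1}$ given $X_{1}=x$ under the construction $X_{i}=S_{i}+T$, by integrating over the common component ($T\mid X_{1}=x$ uniform on $[\max(0,x-1),\min(1,x)]$, rivals i.i.d.\ $U[s,s+1]$ given $T=s$). That calculation is internally correct and gives, on $0\le y\le x<1$, $G(y\mid x)=y^{M}/(Mx)$, reverse hazard $g(t\mid t)/G(t\mid t)=M/t$, $L(y\mid x)=(y/x)^{M}$ and hence $\beta(x)=(\alpha+\xi)\frac{M}{M+1}x$. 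The paper's proof does something different: it never uses the exact mixture law, but instead takes the Irwin--Hall (triangular) marginal $F$ and posits the order-statistic conditioning formula $G(y\mid x)=\bigl[F(y)/F(x)\bigr]^{M-1}$, i.e.\ it treats $Y_{1}\mid X_{1}=x$ as the maximum of $M-1$ draws from $F$ truncated below $x$ (the paper itself writes this step with an ``$\approx$''). With $F(y)=y^{2}/2$ on $[0,1)$ and $F(y)=2y-1-y^{2}/2$ on $[1,2]$ this yields $g(t\mid t)/G(t\mid t)=(M-1)f(t)/F(t)=2(M-1)/t$ on $(0,1)$, $L(y\mid x)=(y/x)^{2M-2}$, and the $(2y-1-\tfrac{y^{2}}{2})^{M-1}$ structure that appears in the stated bid function, including the leading term $\frac{2(\alpha+\xi)(M-1)}{(2M-1)x^{2M-2}}$.

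Concretely, your low-regime slope $\frac{M}{M+1}$ versus the paper's $\frac{2(M-1)}{2M-1}$ already disagree for every $M\ge 3$ (they coincide only at $M=2$), so no amount of careful splicing of the upper regime at the breakpoints $y=x-1,1,x$ will recover the proposition's expression; the discrepancy enters at the level of the conditional distribution you feed into Lemma \ref{A-symmetric-equilibrium}, not at the level of the subsequent integration. Your reading of ``sum of a signal and a common component'' is arguably the more literal probabilistic model, but the statement you are asked to prove encodes the paper's modeling choice $G(y\mid x)=[F(y)/F(x)]^{M-1}$ with $F$ the Irwin--Hall CDF; to prove it you must adopt that conditional law (and then evaluate $\int_{0}^{x}\upsilon(y,y)L(y\mid x)\frac{g(y\mid y)}{G(y\mid y)}dy$, splitting at $y=1$ and using the quadratic reduction formula for $\int(2y-1-\tfrac{y^{2}}{2})^{M-1}dy$), rather than the exact mixture conditional you derive.
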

\begin{eqnarray*}
\beta\left(x\right) & = & \left[\frac{2\left(\alpha+\xi\right)\left(M-1\right)}{\left(2M-1\right)x^{2M-2}}\right]\\
 & + & \left(\alpha+\xi\right)\left[x-\frac{1}{\left(2x-1-\frac{x^{2}}{2}\right)^{M-1}}\left\{ \frac{1}{2^{M-1}}+\int_{1}^{x}\left(2y-1-\frac{y^{2}}{2}\right)^{M-1}dy\right\} \right]
\end{eqnarray*}

\begin{proof}
The proof in Appendix \ref{subsec:Proof-of-Proposition-12} includes
a method to solve the last integral.
\end{proof}
Despite the simplifications, regarding the distribution assumptions
in Proposition \ref{The-bidder's-equilibrium}, being satisfactory
approximations in many instances, they have been made to ensure that
the results are analytically tractable. More complex distributions
can be accommodated using the regression technique we have developed
in section \ref{subsec:Symmetric-Independent-Private-Log-Normal}.
A regression equation of the form in (Eq. \ref{eq:Power}) can be
used with suitable modifications to introduce other distribution parameters
as independent variables.

\subsection{Combined Realistic Setting}
\begin{prop}
\label{The-bidding-strategy}The bidding strategy in a realistic setting
with symmetric interdependent, uniformly distributed valuations, with
reserve prices and variable number of bidders is given by 
\end{prop}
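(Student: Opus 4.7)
The plan is to assemble this result by composing three extensions, each of which has already been derived in isolation in the previous propositions and lemmas. My starting point will be Lemma \ref{A-symmetric-equilibrium}, which gives the symmetric interdependent bid $\beta(x) = \int_{0}^{x} \upsilon(y,y)\, dL(y\mid x)$, together with its specialization in Proposition \ref{The-bidder's-equilibrium} for the Irwin-Hall affiliation structure with uniformly distributed signals and $\upsilon(x,y) = \alpha x + \xi y$. First I would modify this inner kernel to accommodate a reserve price $r$, in direct analogy with Lemma \ref{Proposition 6}: rather than integrating from $0$ to $x$, we integrate from $r$ to $x$ and add a boundary term capturing the event that the bidder pays $r$ exactly, which occurs when all rival signals fall below $r$. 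This produces an intermediate per-count bid function $\beta^{l}_{r}(x)$, valid for $x \ge r$, in which $l+1$ total bidders are assumed known.

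Next I would layer the uncertain-bidder-count weighting of Lemma \ref{The-equilibrium-strategy} on top of $\beta^{l}_{r}(x)$. That lemma produced a convex combination of the known-count bids weighted by the conditional probabilities $p_{l} G^{l}(x) / G(x)$. Substituting the symmetric discrete distribution $p_{l}$ and the normalizing constant $\Delta_{p} = \lfloor M^{2}/4 \rfloor^{-1}$ from Proposition \ref{The-formula-for}, together with $F$ uniform on $[0,1]$ so that $G^{l}(x) = x^{l}$, should yield an explicit finite weighted sum of the reserve-price-adjusted interdependent bids. The final formula is then $\beta(x) = \sum_{l=0}^{M-1} \bigl[ p_{l} x^{l} / \sum_{k=0}^{M-1} p_{k} x^{k} \bigr] \beta^{l}_{r}(x)$ with $\beta^{l}_{r}(x)$ built from the modified integral representation described above.

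The main obstacle will be the reserve-price modification inside the interdependent setting itself. Unlike Lemma \ref{Proposition 6}, where independence allows the lower integration limit to be shifted from $0$ to $r$ with only a simple boundary adjustment, here the conditional distribution $L(y\mid x)$ of $Y_{1}$ given $X_{1}=x$ encodes the affiliation structure, and the boundary behavior at $y=r$ must be handled so that the first-order indifference condition used to derive Lemma \ref{A-symmetric-equilibrium} continues to hold at the truncation point. In particular, I would need to verify that truncating at $r$ preserves the differential identity $\beta'(x) = [\upsilon(x,x) - \beta(x)] \cdot g(x\mid x)/G(x\mid x)$ with the replacement of $L(y\mid x)$ by its restriction to $[r,x]$, and then re-derive the closed-form antiderivative of $(2y - 1 - y^{2}/2)^{M-1}$ appearing in Proposition \ref{The-bidder's-equilibrium} over the shifted interval $[r,x]$ rather than $[1,x]$.

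Once the reserve-price case is settled for the uniform-Irwin-Hall model, the outer summation over $l$ reduces to substitution and algebraic tidying, since both $G^{l}(x) = x^{l}$ and $\beta^{l}_{r}(x)$ will be rational in $x$, $r$, and the weight parameters $\alpha, \xi$. Verification that the resulting $\beta(x)$ satisfies $\beta(r) \ge r$ and is monotonically increasing in $x$ for $x \ge r$ would complete the proof, and these checks should follow from the corresponding monotonicity properties already established for each of the three ingredient results individually.
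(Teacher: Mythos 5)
There is a genuine gap, and it sits exactly where you flagged your "main obstacle." The paper's proof (following Milgrom \& Weber 1982) does not truncate the integral representation at the reserve price $r$ in signal space at all. Instead it keeps the differential equation $\beta'\left(x\right)=\left[\upsilon\left(x,x\right)-\beta\left(x\right)\right]\frac{g\left(x\mid x\right)}{G\left(x\mid x\right)}$ intact and changes only the \emph{boundary condition}: the marginal participating type is the screening level $x^{*}=x^{*}\left(r\right)=\inf\left\{ x\mid E\left[V_{1}\mid X_{1}=x,\;Y_{1}<x\right]\geq r\right\} $, with $\beta\left(x^{*}\right)=r$, and the bid function is obtained by solving the linear first-order ODE with an integrating factor, giving
\[
\beta\left(x\right)=re^{-\int_{x^{*}}^{x}\frac{g\left(t\mid t\right)}{G\left(t\mid t\right)}dt}+\int_{x^{*}}^{x}v(y,y)\frac{g\left(y\mid y\right)}{G\left(y\mid y\right)}e^{-\int_{y}^{x}\frac{g\left(t\mid t\right)}{G\left(t\mid t\right)}dt}dy.
\]
Your plan integrates over $\left[r,x\right]$ and attaches a boundary term for "all rival signals below $r$," in analogy with Lemma \ref{Proposition 6}. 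With interdependent valuations this analogy breaks: the reserve price lives in bid/value space while the truncation point lives in signal space, and the two do not coincide, because the marginal type is the one whose expected value \emph{conditional on being pivotal} equals $r$, not the type whose signal equals $r$. A formula anchored at $y=r$ would violate the indifference condition at the participation margin. Moreover, determining $x^{*}\left(r\right)$ by solving $E\left[V_{1}\mid X_{1}=x,\;Y_{1}<x\right]=r$ --- which with the Irwin--Hall/uniform structure becomes the explicit integral condition displayed in the proposition --- is a substantive half of the stated result, and your proposal never produces it; relatedly, your closing check "$\beta\left(r\right)\geq r$" should instead be $\beta\left(x^{*}\right)=r$. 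The outer step of weighting by $\frac{p_{l}G^{l}\left(x\right)}{G\left(x\right)}$ for the uncertain number of bidders matches what the paper does (it treats that layer as a trivial addendum after the proposition), so that part of your decomposition is fine; the missing ingredient is the boundary-value/ODE treatment of the reserve price via $x^{*}\left(r\right)$.
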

\[
\beta\left(x\right)=re^{-\int_{x^{*}}^{x}\frac{g\left(t\mid t\right)}{G\left(t\mid t\right)}dt}+\int_{x^{*}}^{x}v(y,y)\frac{g\left(y\mid y\right)}{G\left(y\mid y\right)}e^{-\int_{y}^{x}\frac{g\left(t\mid t\right)}{G\left(t\mid t\right)}dt}dy
\]

Here, $x^{*}\left(r\right)$ is found by solving for $x$ in the below
condition 
\[
\left[\int_{0}^{1}\xi y\left[\frac{y}{x}\right]^{2\left(M-2\right)}\left(\frac{2y}{x^{2}}\right)dy+\int_{1}^{x}\xi y\left[\frac{\left(2y-1-\frac{y^{2}}{2}\right)}{\left(2x-1-\frac{x^{2}}{2}\right)}\right]^{M-2}\left\{ \frac{2-y}{\left(2x-1-\frac{x^{2}}{2}\right)}\right\} dy\right]=\frac{r-\alpha x}{\left(M-1\right)}
\]

\begin{proof}
The proof in Appendix \ref{subsec:Proof-of-Proposition-13} includes
a method to solve the above type of equations. 
\end{proof}
It is trivial to extend the above to the case where the total number
of bidders is uncertain by using the equilibrium bidding strategy
$\beta^{l}\left(x\right)$ and the associated probability $p_{l}$
when there are exactly $l+1$ bidders, known with certainty, 
\[
\beta^{M}\left(x\right)=\sum_{l=0}^{M-1}\frac{p_{l}G^{l}\left(x\right)}{G\left(x\right)}\beta^{l}\left(x\right)
\]

\section{Improvements to the Model}

We have assumed that the valuation each bidder holds is derived independently
before he decides upon his bidding strategy. Instead of the bidding
strategies we have considered, we can come up with a parametric model
that will take the valuations as the inputs and the bid as output.
From a financial industry securities lending perspective (Kashyap
2016), the parameters can depend on the characteristics of the portfolio
being auctioned such as, the size of the portfolio, the number of
securities, the number of markets, the extent of overlap with the
internal inventory, and where available, the percentile rankings of
the historical bids for previous auctions, which auction sellers do
reveal sometimes. It should be acknowledged, however, that this would
be useful if and only if the parametric model is a valid approximation
of the general model and the actual auction environment, which may
be non-parametric. Therefore, a parametrization may limit the general
applicability of any results obtained, though it can be useful for
specific applications where the auction setting and the parameters
are known to a satisfactory extent. The similarity of the parametric
approach to using a regression equation when the valuations follow
a particular distribution are to be noted. This suggests that the
regression technique can be useful in a variety of situations, where
closed form solutions are either absent or not readily obtained.

(Cobb, Rumi \& Salmer�n 2012; and Nie \& Chen 2007) derive approximate
distributions for the sum of log-normal distributions which highlight
that we can estimate the log-normal parameters from the time series
of the valuations and hence get the mean and variance of the valuations.
(Norstad 1999) is a basic but excellent reference for the log-normal
distribution. A longer historical time series will help get better
estimates for the volatility of the valuation. This can be useful
to decide the aggressiveness of the bid. Another key extension can
be to introduce jumps in the log-normal processes. This is seen in
stock prices to a certain extent and to a greater extent in the inventory
processes.

\section{Conclusion}

We have looked at various auction strategy extensions that would be
relevant to the application of auctions to many real life situations.
\textbf{\textcolor{black}{All the propositions are new results and
they refer to existing results which are given as Lemmas.}} We have
derived the closed form solutions for bidding strategies, relevant
to an auction, where such a formulation exists and in situations where
approximations would be required, we have provided those. A simple
result using the log-normal distribution should be of immense practical
use, along with a new positive symmetric distribution that can be
used to handle situations where both the valuations and number of
bidders are distributed accordingly. The interdependent valuations
scenario and the combined realistic setting provide ready to use results
for bidders and auction designers alike. These extensions are immediately
useful for the auctions of financial securities and should be applicable
for numerous other products. 

\section{Acknowledgements and End-Notes}
\begin{enumerate}
\item Dr. Yong Wang, Dr. Isabel Yan, Dr. Vikas Kakkar, Dr. Fred Kwan, Dr.
William Case, Dr. Srikant Marakani, Dr. Qiang Zhang, Dr. Costel Andonie,
Dr. Jeff Hong, Dr. Guangwu Liu, Dr. Andrew Chan, Dr. Humphrey Tung
and Dr. Xu Han at the City University of Hong Kong provided advice
and more importantly encouragement to explore and where possible apply
cross disciplinary techniques. The faculty members of SolBridge International
School of Business provided patient guidance and valuable suggestions
on how to further this paper.
\item Numerous seminar participants provided helpful suggestions. The views
and opinions expressed in this article, along with any mistakes, are
mine alone and do not necessarily reflect the official policy or position
of either of my affiliations or any other agency.
\item \label{Auction}\href{https://en.wikipedia.org/wiki/Auction}{Auction, Wikipedia Link}:
An auction is a process of buying and selling goods or services by
offering them up for bid, taking bids, and then selling the item to
the highest bidder.
\item \label{Triangular-Distribution}\href{https://en.wikipedia.org/wiki/Triangular_distribution}{Triangular distribution, Wikipedia Link}:
In probability theory and statistics, the triangular distribution
is a continuous probability distribution with lower limit $a$, upper
limit $b$ and mode $c$, where $a<b$ and $a\leq c\leq b$ (also
see: Evans, Hastings \& Peacock 2000).
\item \label{Irwin-Hall}\href{https://en.wikipedia.org/wiki/Irwin\%E2\%80\%93Hall_distribution}{Irwin\textendash Hall distribution, Wikipedia Link}:
In probability and statistics, the Irwin--Hall distribution, named
after Joseph Oscar Irwin and Philip Hall, is a probability distribution
for a random variable defined as the sum of a number of independent
random variables, each having a uniform distribution. For this reason
it is also known as the uniform sum distribution (also see: Hall 1927;
Irwin 1927).
\end{enumerate}

\section{References}
\begin{itemize}
\item Armstrong, M. (2000). Optimal multi-object auctions. The Review of
Economic Studies, 67(3), 455-481.
\item Ausubel, L. M. (2004). An efficient ascending-bid auction for multiple
objects. The American Economic Review, 94(5), 1452-1475.
\item Back, K., \& Zender, J. F. (1993). Auctions of divisible goods: On
the rationale for the treasury experiment. The Review of Financial
Studies, 6(4), 733-764.
\item Bertsekas, D. P. (1988). The auction algorithm: A distributed relaxation
method for the assignment problem. Annals of operations research,
14(1), 105-123.
\item Biais, B., \& Faugeron-Crouzet, A. M. (2002). IPO auctions: English,
Dutch,\dots{} French, and internet. Journal of Financial Intermediation,
11(1), 9-36.
\item Chiani, M., Dardari, D., \& Simon, M. K. (2003). New exponential bounds
and approximations for the computation of error probability in fading
channels. Wireless Communications, IEEE Transactions on, 2(4), 840-845.
\item Cobb, B. R., Rumi, R., \& Salmer�n, A. (2012). Approximating the Distribution
of a Sum of Log-normal Random Variables. Statistics and Computing,
16(3), 293-308.
\item Di Corato, L., Dosi, C., \& Moretto, M. (2017). Multidimensional auctions
for long-term procurement contracts with early-exit options: the case
of conservation contracts. European Journal of Operational Research.
\item Doyle, R., \& Baska, S. (2014). History of Auctions: From ancient
Rome to today's high-tech auctions. Auctioneer, SUA.
\item Dyer, D., Kagel, J. H., \& Levin, D. (1989). Resolving uncertainty
about the number of bidders in independent private-value auctions:
an experimental analysis. The RAND Journal of Economics, 268-279.
\item Evans, M., Hastings, N., \& Peacock, B. (2000). Triangular distribution.
Statistical distributions, 3, 187-188.
\item Fudenberg, D., \& Tirole, J. (1991). Game theory (Vol. 1). The MIT
press.
\item Gregg, D. G., \& Walczak, S. (2003). E-commerce auction agents and
online-auction dynamics. Electronic Markets, 13(3), 242-250.
\item Hall, P. (1927). The distribution of means for samples of size N drawn
from a population in which the variate takes values between 0 and
1, all such values being equally probable. Biometrika, Vol. 19, No.
3/4., pp. 240--245.
\item Harstad, R. M., Kagel, J. H., \& Levin, D. (1990). Equilibrium bid
functions for auctions with an uncertain number of bidders. Economics
Letters, 33(1), 35-40.
\item Hausch, D. B. (1986). Multi-object auctions: Sequential vs. simultaneous
sales. Management Science, 32(12), 1599-1610.
\item Huang, H. N., Marcantognini, S., \& Young, N. (2006). Chain rules
for higher derivatives. The Mathematical Intelligencer, 28(2), 61-69.
\item Irwin, J. O. (1927). On the frequency distribution of the means of
samples from a population having any law of frequency with finite
moments, with special reference to Pearson's Type II. Biometrika,
Vol. 19, No. 3/4., pp. 225--239.
\item Johnson, W. P. (2002). The curious history of Fa� di Bruno's formula.
American Mathematical Monthly, 217-234.
\item Kandel, S., Sarig, O., \& Wohl, A. (1999). The demand for stocks:
An analysis of IPO auctions. The Review of Financial Studies, 12(2),
227-247.
\item Kashyap, R. (2016). Securities Lending Strategies, Exclusive Valuations
and Auction Bids. Social Science Research Network. Working Paper.
\item Klemperer, P. (1999). Auction theory: A guide to the literature. Journal
of economic surveys, 13(3), 227-286.
\item Klemperer, P. (2004). Auctions: theory and practice.
\item Krishna, V. (2009). Auction theory. Academic press.
\item Laffont, J. J., Ossard, H., \& Vuong, Q. (1995). Econometrics of first-price
auctions. Econometrica: Journal of the Econometric Society, 953-980.
\item Lebrun, B. (1999). First price auctions in the asymmetric N bidder
case. International Economic Review, 40(1), 125-142.
\item Levin, D., \& Ozdenoren, E. (2004). Auctions with uncertain numbers
of bidders. Journal of Economic Theory, 118(2), 229-251.
\item McAfee, R. P., \& McMillan, J. (1987a). Auctions and bidding. Journal
of economic literature, 25(2), 699-738.
\item McAfee, R. P., \& McMillan, J. (1987b). Auctions with a stochastic
number of bidders. Journal of Economic Theory, 43(1), 1-19.
\item McMillan, J. (1995). Why auction the spectrum?. Telecommunications
policy, 19(3), 191-199.
\item Menezes, F. M., \& Monteiro, P. K. (2005). An introduction to auction
theory (pp. 1-178). Oxford University Press.
\item Milgrom, P. R., \& Weber, R. J. (1982). A theory of auctions and competitive
bidding. Econometrica: Journal of the Econometric Society, 1089-1122.
\item Milgrom, P. R. (1985). The economics of competitive bidding: a selective
survey. Social goals and social organization: Essays in memory of
Elisha Pazner, 261-292.
\item Milgrom, P. (1989). Auctions and bidding: A primer. The Journal of
Economic Perspectives, 3(3), 3-22.
\item Milgrom, P. R. (2004). Putting auction theory to work. Cambridge University
Press.
\begin{doublespace}
\item Miranda, M. J., \& Fackler, P. L. (2002). Applied Computational Economics
and Finance.
\end{doublespace}
\item Nie, H., \& Chen, S. (2007). Lognormal sum approximation with type
IV Pearson distribution. IEEE Communications Letters, 11(10), 790-792.
\item Norstad, John. \textquotedbl The normal and lognormal distributions.\textquotedbl{}
(1999).
\item Nyborg, K. G., \& Sundaresan, S. (1996). Discriminatory versus uniform
Treasury auctions: Evidence from when-issued transactions. Journal
of Financial Economics, 42(1), 63-104.
\item Ortega-Reichert, A. (1967). Models for competitive bidding under uncertainty.
Stanford University.
\item Osborne, M. J., \& Rubinstein, A. (1994). A course in game theory.
MIT press.
\item Pica, U., \& Golkar, A. (2017). Sealed-Bid Reverse Auction Pricing
Mechanisms for Federated Satellite Systems. Systems Engineering.
\item Post, D. L., Coppinger, S. S., \& Sheble, G. B. (1995). Application
of auctions as a pricing mechanism for the interchange of electric
power. IEEE Transactions on Power Systems, 10(3), 1580-1584.
\item Roth, A. E., S�nmez, T., \& �nver, M. U. (2004). Kidney exchange.
The Quarterly Journal of Economics, 119(2), 457-488.
\item Sheffi, Y. (2004). Combinatorial auctions in the procurement of transportation
services. Interfaces, 34(4), 245-252.
\item Tuffin, B. (2002). Revisited progressive second price auction for
charging telecommunication networks. Telecommunication Systems, 20(3),
255-263.
\item Vickrey, W. (1961). Counterspeculation, auctions, and competitive
sealed tenders. The Journal of finance, 16(1), 8-37.
\item Wilson, R. (1979). Auctions of shares. The Quarterly Journal of Economics,
675-689.
\item Wilson, R. (1992). Strategic analysis of auctions. Handbook of game
theory with economic applications, 1, 227-279.
\end{itemize}

\section{\label{sec:Dictionary-of-Notation}Dictionary of Notation and Terminology
for the Auction Strategy}
\begin{itemize}
\item $x_{i}$, the valuation of bidder $i$. This is a realization of the
random variable $X_{i}$ which bidder $i$ and only bidder $i$ knows
for sure.
\item $x_{i}\sim F\left[0,\omega\right]$, $x_{i}$ is symmetric and independently
distributed according to the distribution $F$ over the interval $\left[0,\omega\right]$. 
\item $F,$ is increasing and has full support, which is the non-negative
real line $\left[0,\infty\right]$.
\item $f=F',$ is the continuous density function of $F$.
\item $x_{i}\sim U\left[0,\omega\right]$ when we consider the uniform distribution.
\item $x_{i}\sim LN\left[0,\omega\right]$ when we consider the log-normal
distribution.
\item $M,$ is the total number of bidders.
\item $f_{i},\,F_{i}$ , are the continuous density function and distribution
of bidder $i$ in the asymmetric case.
\item $r\geq0$, is the reserve price set by the auction seller.
\item $\beta_{i}:\left[0,\omega\right]\rightarrow\Re_{+}$ is an increasing
function that gives the strategy for bidder $i$. We let $\beta_{i}\left(x_{i}\right)=b_{i}$.
We must have $\beta_{i}\left(0\right)=0$.
\item $\phi_{i}\equiv\beta_{i}^{-1}$ is the inverse of the bidding strategy
$\beta_{i}$. This means, $x_{i}=\beta_{i}^{-1}\left(b_{i}\right)=\phi_{i}\left(b_{i}\right)$.
\item $x_{i}\sim F_{i}\left[0,\omega_{i}\right]$. Here, $x_{i}$ is asymmetric
and is independently distributed according to the distribution $F_{i}$
over the interval $\left[0,\omega_{i}\right]$. 
\item $\beta:\left[0,\omega\right]\rightarrow\Re_{+}$ is the strategy of
all the bidders in a symmetric equilibrium. We let $\beta\left(x\right)=b,x$
is the valuation of any bidder. We also have $b\leq\beta\left(x\right)\;\text{and}\;\beta\left(0\right)=0$.
\item $Y_{1}\equiv Y_{1}^{M-1}$ , the random variable that denotes the
highest value, say for bidder 1, among the $M-1$ other bidders.
\item $Y_{1},$ is the highest order statistic of $X_{2},X_{3},...,X_{M}$.
\item $G,$ is the distribution function of $Y_{1}$. $\forall y,\;G(y)=\left[F(y)\right]^{M-1}$.
\item $g=G',$ is the continuous density function of $G$ or $Y_{1}$.
\item $\Pi_{i},$ is the payoff of bidder $i$. $\Pi_{i}=\begin{cases}
x_{i}-b_{i}\quad if\;b_{i}>max_{j\neq i}b_{j}\\
0\qquad\quad if\;b_{i}<max_{j\neq i}b_{j}
\end{cases}$
\item $\Pi_{s},\;x_{s}$ is the payoff and valuation of the auction seller. 
\item $m\left(x\right),$ is the expected payment of a bidder with value
$x$.
\item $R_{s}$ is the expected revenue to the seller.
\item $\mathcal{M}=\left\{ 1,2,\ldots,M\right\} $ is the potential set
of bidders when there is uncertainty about how many interested bidders
there are. $\mathcal{A}\subseteq\mathcal{M}$ is the set of actual
bidders.
\item $p_{l}$ is probability that any participating bidder assigns to the
event that he is facing $l$ other bidders or that there is a total
of $l+1$ bidders, $l\in\left\{ 1,2,\ldots,M-1\right\} $. 
\item $X_{i}\in\left[0,\omega_{i}\right]$ is bidder $i's$ signal when
the valuations are interdependent. 
\item $V_{i}=\upsilon_{i}\left(X_{1},X_{2},...,X_{M}\right)$ is the item
value to bidder $i$. $\upsilon_{i}\left(0,0,...,0\right)=0$
\item $\upsilon_{i}\left(x_{1},x_{2},...,x_{M}\right)\equiv E\left[V_{i}\mid X_{1}=x_{1},X_{2}=x_{2},...,X_{M}=x_{M},\right]$
is a more general setting, where knowing the signals of all bidders
still does not reveal the full value with certainty.
\item $\because$ used in some of the proofs means ``because''.
\end{itemize}

\section{\label{sec:Appendix-of-Proofs}Appendix of Proofs}

\textbf{\textcolor{black}{All the propositions are new results and
they refer to existing results which are given as Lemmas.}}

\subsection{\label{subsec:Proof-of-Lemma-1}Proof of Lemma \ref{Proposition 3}.}

The proof from (Krishna 2009) is repeated below for completeness.
\begin{proof}
Symmetry among the bidders implies, $b_{i}=\beta_{i}\left(x_{i}\right)=\beta\left(x_{i}\right)$.
Say bidder 1 has valuation $x$ and submits bid $b$. Bidder 1 wins
whenever he submits the highest bid, that is whenever, 
\begin{eqnarray*}
b & > & max_{i\neq1}\beta\left(X_{i}\right)\\
 & > & \beta\left(max_{i\neq1}X_{i}\right)\;\left[\because\beta\;is\;increasing.\right]\\
 & > & \beta\left(Y_{1}\right)\\
\Rightarrow Y_{1} & < & \beta^{-1}\left(b\right)
\end{eqnarray*}
Expected Payoff of bidder 1 then becomes,
\[
\Pi_{1}=G\left(\beta^{-1}\left(b\right)\right)\left\{ x-b\right\} 
\]
First Order Conditions to maximize the payoff then give,
\begin{eqnarray*}
\frac{\partial\left\{ G\left(\beta^{-1}\left(b\right)\right)\left\{ x-b\right\} \right\} }{\partial b} & = & 0\\
\Rightarrow g\left(\beta^{-1}\left(b\right)\right)\frac{\partial\left\{ \beta^{-1}\left(b\right)\right\} }{\partial b}\left(x-b\right)-G\left(\beta^{-1}\left(b\right)\right) & = & 0
\end{eqnarray*}
Define $z=\beta^{-1}\left(b\right)\Rightarrow b=\beta\left(z\right)$.
Differentiating with respect to $b$ gives,
\begin{eqnarray*}
1 & = & \frac{\partial\beta\left(z\right)}{\partial z}\frac{\partial z}{\partial b}\\
\Rightarrow\frac{\partial z}{\partial b} & \equiv & \frac{\partial\left\{ \beta^{-1}\left(b\right)\right\} }{\partial b}=\frac{1}{\beta'\left(z\right)}=\frac{1}{\beta'\left(\beta^{-1}\left(b\right)\right)}
\end{eqnarray*}
Using this we have,
\begin{eqnarray*}
\frac{g\left(\beta^{-1}\left(b\right)\right)}{\beta'\left(\beta^{-1}\left(b\right)\right)}\left(x-b\right)-G\left(\beta^{-1}\left(b\right)\right) & = & 0
\end{eqnarray*}
At a symmetric equilibrium we have $b=\beta\left(x\right)$, 
\begin{eqnarray*}
\Rightarrow\frac{g\left(x\right)}{\beta'\left(x\right)}\left(x-b\right)-G\left(x\right) & = & 0\\
bg\left(x\right)+\beta'\left(x\right)G\left(x\right) & = & xg\left(x\right)\\
\beta\left(x\right)g\left(x\right)+\beta'\left(x\right)G\left(x\right) & = & xg\left(x\right)\\
\Rightarrow\frac{d\left(G\left(x\right)\beta\left(x\right)\right)}{dx} & = & xg\left(x\right)
\end{eqnarray*}
Integrating this from $0$ to $x$ and using $\beta\left(0\right)=0$,
\begin{eqnarray*}
\beta\left(x\right) & = & \frac{1}{G\left(x\right)}\int_{0}^{x}yg\left(y\right)dy\\
 & = & E\left[Y_{1}\mid Y_{1}<x\right]
\end{eqnarray*}
Using the formula for integration by parts, 
\begin{eqnarray*}
\int_{0}^{x}yg\left(y\right)dy & = & \int_{0}^{x}yG'\left(y\right)dy\\
 & = & \left|yG\left(y\right)\right|_{0}^{x}-\int_{0}^{x}G\left(y\right)dy
\end{eqnarray*}
\begin{eqnarray*}
\beta\left(x\right) & = & \frac{1}{G\left(x\right)}\left[xG\left(x\right)-\int_{0}^{x}G\left(y\right)dy\right]\\
 & = & \left[x-\int_{0}^{x}\frac{G\left(y\right)}{G\left(x\right)}dy\right]\\
\beta\left(x\right) & = & \left[x-\int_{0}^{x}\left[\frac{F\left(y\right)}{F\left(x\right)}\right]^{M-1}dy\right]
\end{eqnarray*}
It is easily shown that this is the best response for any bidder,
as follows. Say, Bidder 1 bids $b'=\beta\left(z\right)$ when his
valuation is $x$. His expected payoff is then given by 
\begin{eqnarray*}
\Pi\left(b',x\right) & = & G\left(z\right)\left[x-\beta\left(z\right)\right]\equiv\left(Prob\;of\;Win\right)*\left(Payoff\right)\\
 & = & G\left(z\right)x-G\left(z\right)\left[z-\int_{0}^{z}\frac{G\left(y\right)}{G\left(z\right)}dy\right]\;,\;using\;\beta\left(z\right)\;from\;above\\
 & = & G\left(z\right)x-G\left(z\right)z+\int_{0}^{z}G\left(y\right)dy\\
\Pi\left(\beta\left(z\right),x\right)\equiv\Pi\left(b',x\right) & = & G\left(z\right)\left(x-z\right)+\int_{0}^{z}G\left(y\right)dy
\end{eqnarray*}
Considering,
\begin{eqnarray*}
\Pi\left(\beta\left(x\right),x\right)-\Pi\left(\beta\left(z\right),x\right) & = & \left[G\left(x\right)\left(x-x\right)+\int_{0}^{x}G\left(y\right)dy\right]\\
 &  & -\left[G\left(z\right)\left(x-z\right)+\int_{0}^{z}G\left(y\right)dy\right]\\
 & = & G\left(z\right)\left(z-x\right)+\int_{0}^{x}G\left(y\right)dy-\left[\int_{0}^{x}G\left(y\right)dy+\int_{x}^{z}G\left(y\right)dy\right]\\
 & = & G\left(z\right)\left(z-x\right)-\int_{x}^{z}G\left(y\right)dy\\
 & \geq & 0,\;when\;z\geq x\;or\;z\leq x
\end{eqnarray*}
Expected payment by a bidder with value $x$ is,
\begin{eqnarray*}
m\left(x\right) & = & Prob\left[Win\right]*\left(Amount\;Bid\right)\\
 & = & G\left(x\right)*\beta\left(x\right)\;\left[\because G\left(x\right)\equiv Prob\left[Y_{1}<x\right],Y_{1}\equiv Highest\;of\;\left(M-1\right)\;other\;bids\right]\\
 & = & \int_{0}^{x}yg\left(y\right)dy
\end{eqnarray*}
Expected ex ante payment of a particular bidder is,
\begin{eqnarray*}
E\left[m\left(x\right)\right] & = & \int_{0}^{\omega}m\left(x\right)f\left(x\right)dx\\
 & = & \int_{0}^{\omega}\left(\int_{0}^{x}yg\left(y\right)dy\right)f\left(x\right)dx\\
 & = & \int_{0}^{\omega}\left(\int_{y}^{\omega}f\left(x\right)dx\right)yg\left(y\right)dy\;\quad Changing\;order\;of\;integration\\
 & = & \int_{0}^{\omega}\left[F\left(\omega\right)-F\left(y\right)\right]yg\left(y\right)dy\\
 & = & \int_{0}^{\omega}y\left[1-F\left(y\right)\right]g\left(y\right)dy\;\left(\because F\left(\omega\right)=1\right)
\end{eqnarray*}
 Expected revenue to the seller is sum of payments of all the bidders.
\[
E\left[R_{s}\right]=ME\left[m\left(x\right)\right]
\]
\end{proof}

\subsection{\label{subsec:Proof-of-Proposition-3}Proof of Proposition \ref{The-symmetric-equilibrium-1}.}
\begin{proof}
Using the bid function from Lemma (\ref{Proposition 3}) with the
log-normal distribution functions, $F\left(x\right)=\Phi\left(\frac{lnx-\mu}{\sigma}\right),G\left(x\right)=\left(\Phi\left(\frac{lnx-\mu}{\sigma}\right)\right)^{M-1}$.
Here, $\Phi(u)=\frac{1}{\sqrt{2\pi}}\int_{-\infty}^{u}e^{-t^{2}/2}dt$
, is the standard normal cumulative distribution and $X=e^{W}$where,
$W\sim N\left(\mu,\sigma\right)$
\begin{eqnarray*}
\beta\left(x\right) & = & \left[x-\int_{0}^{x}\left[\frac{F\left(y\right)}{F\left(x\right)}\right]^{M-1}dy\right]\\
 & = & \left[x-\int_{0}^{x}\left[\frac{\Phi\left(\frac{lny-\mu}{\sigma}\right)}{\Phi\left(\frac{lnx-\mu}{\sigma}\right)}\right]^{M-1}dy\right]
\end{eqnarray*}
No closed form solution is available. There are certain approximations,
which can be used, (See Laffont, Ossard \& Vuong 1995). We provide
a simplification using the Taylor series expansion as shown below.
This is valid only for non zero values of $x$ (The Taylor series
for this function is undefined at $x=0$, but we consider the right
limit to evaluate this at zero), which holds in our case since a zero
valuation will mean a zero bid. 
\begin{eqnarray*}
\beta\left(x\right) & = & \left[x-\frac{\int_{0}^{x}\left[\Phi\left(\frac{lny-\mu}{\sigma}\right)\right]^{M-1}dy}{\left[\Phi\left(\frac{lnx-\mu}{\sigma}\right)\right]^{M-1}}\right]\\
 & = & \left[x-\frac{\left\{ \int_{0}^{x}\left[\frac{1}{\sqrt{2\pi}}\int_{-\infty}^{\left(\frac{lny-\mu}{\sigma}\right)}e^{-t^{2}/2}dt\right]^{M-1}dy\right\} }{\left[\frac{1}{\sqrt{2\pi}}\int_{-\infty}^{\left(\frac{lnx-\mu}{\sigma}\right)}e^{-t^{2}/2}dt\right]^{M-1}}\right]
\end{eqnarray*}
Let,
\begin{eqnarray*}
h\left(y\right) & = & \left[\int_{-\infty}^{\left(\frac{lny-\mu}{\sigma}\right)}e^{-t^{2}/2}dt\right]^{M-1}\\
j\left(y\right) & = & \int h\left(y\right)dy
\end{eqnarray*}
We then have,
\begin{eqnarray*}
\beta\left(x\right) & = & \left[x-\frac{\left\{ \int_{0}^{x}h\left(y\right)dy\right\} }{h\left(x\right)}\right]\\
 & = & \left[x-\frac{\left|j\left(y\right)\right|_{0}^{x}}{h\left(x\right)}\right]=\left[x-\left\{ \frac{j\left(x\right)-j\left(0\right)}{h\left(x\right)}\right\} \right]\\
 & \approx & \left[x-\frac{j'\left(0\right)x}{h\left(x\right)}\right]\;\left\{ \because j\left(x\right)-j\left(0\right)\simeq j'\left(0\right)x\;,\quad Maclaurin\;Series\right\} \\
 & = & \left[x-\frac{h\left(0\right)x}{h\left(x\right)}\right]=x\left[1-\frac{h\left(0\right)}{h\left(x\right)}\right]
\end{eqnarray*}
\begin{eqnarray*}
\Rightarrow\beta\left(x\right) & = & x\;\left[\;\because h\left(0\right)=\left[\int_{-\infty}^{\left(\frac{ln0-\mu}{\sigma}\right)}e^{-t^{2}/2}dt\right]^{M-1}=\left[\int_{-\infty}^{-\infty}e^{-t^{2}/2}dt\right]^{M-1}=0\right]
\end{eqnarray*}
We could include additional terms, for greater precision, using the
subsequent terms of the Maclaurin series, as follows,
\begin{eqnarray*}
\beta\left(x\right) & \approx & x\left[1-\frac{h\left(0\right)}{h\left(x\right)}-\frac{x}{2}\frac{h'\left(0\right)}{h\left(x\right)}\right]\;\left\{ \because j\left(x\right)-j\left(0\right)\simeq j'\left(0\right)x+\frac{j''\left(0\right)x^{2}}{2!}\right\} \\
\beta\left(x\right) & = & x\left[1-\frac{h\left(0\right)}{h\left(x\right)}-\frac{1}{2}\left\{ 1-\frac{h\left(0\right)}{h\left(x\right)}\right\} \right]\;\left\{ \because\frac{h\left(x\right)-h\left(0\right)}{h\left(x\right)}\simeq\frac{h'\left(0\right)x}{h\left(x\right)}\right\} \\
\Rightarrow\beta\left(x\right) & = & \frac{x}{2}\;\left[\;\because h\left(0\right)=0\right]
\end{eqnarray*}
\begin{eqnarray*}
\beta\left(x\right) & \approx & x\left[1-\frac{h\left(0\right)}{h\left(x\right)}-\frac{x}{2}\frac{h'\left(0\right)}{h\left(x\right)}-\frac{x^{2}}{6}\frac{h''\left(0\right)}{h\left(x\right)}\right]\\
 &  & \;\left\{ \because j\left(x\right)-j\left(0\right)\simeq j'\left(0\right)x+\frac{j''\left(0\right)x^{2}}{2!}+\frac{j'''\left(0\right)x^{3}}{3!}\right\} 
\end{eqnarray*}
\begin{eqnarray*}
\beta\left(x\right) & \approx & x\left[1-\frac{h\left(0\right)}{h\left(x\right)}-\frac{x}{2}\frac{h'\left(0\right)}{h\left(x\right)}-\frac{1}{3}\left\{ 1-\frac{h\left(0\right)}{h\left(x\right)}-x\frac{h'\left(0\right)}{h\left(x\right)}\right\} \right]\\
 &  & \;\left\{ \because\frac{h\left(x\right)-h\left(0\right)-h'\left(0\right)x}{h\left(x\right)}\simeq\frac{1}{2}\frac{h''\left(0\right)x^{2}}{h\left(x\right)}\right\} 
\end{eqnarray*}
\begin{eqnarray*}
\beta\left(x\right) & \approx & x\left[1-\frac{h\left(0\right)}{h\left(x\right)}-\frac{x}{2}\frac{h'\left(0\right)}{h\left(x\right)}-\frac{1}{3}+\frac{1}{3}\frac{h\left(0\right)}{h\left(x\right)}+\frac{x}{3}\frac{h'\left(0\right)}{h\left(x\right)}\right]\\
 & = & x\left[\frac{2}{3}-\frac{2}{3}\frac{h\left(0\right)}{h\left(x\right)}-\frac{1}{6}\left\{ 1-\frac{h\left(0\right)}{h\left(x\right)}\right\} \right]\\
\Rightarrow\beta\left(x\right) & = & \frac{x}{2}\;\left[\;\because h\left(0\right)=0\right]
\end{eqnarray*}

Checking the Lagrange remainders $R^{M}\left(y\right)$ for a degree
$M$ approximation where $0<\xi_{M}<y$, 
\begin{eqnarray*}
R^{M}\left(y\right) & = & \frac{j^{M+1}\left(\xi_{M}\right)y^{M+1}}{\left(M+1\right)!}=\left|\frac{\partial^{M+1}j\left(y\right)}{\partial y^{M+1}}\right|_{y=\xi_{M}}\left[\frac{y^{M+1}}{\left(M+1\right)!}\right]\\
 & = & \left|\frac{\partial^{M+1}\left\{ \int\left[\int_{-\infty}^{\left(\frac{lny-\mu}{\sigma}\right)}e^{-t^{2}/2}dt\right]^{M-1}\right\} }{\partial y^{M+1}}\right|_{y=\xi_{M}}\left[\frac{y^{M+1}}{\left(M+1\right)!}\right]
\end{eqnarray*}
Let,
\begin{eqnarray*}
l\left(y\right) & = & \left(\frac{lny-\mu}{\sigma}\right)\Rightarrow l'\left(y\right)=\frac{1}{\sigma y}\Rightarrow l''\left(y\right)=-\frac{1}{\sigma y^{2}}
\end{eqnarray*}
\begin{eqnarray*}
R^{M}\left(y\right) & = & \left[\frac{y^{M+1}}{\left(M+1\right)!}\right]\left|\frac{\partial^{M}\left\{ \left[\int_{-\infty}^{l\left(y\right)}e^{-t^{2}/2}dt\right]^{M-1}\right\} }{\partial y^{M}}\right|_{y=\xi_{M}}
\end{eqnarray*}

Using Fa'adi Bruno's Formula (Huang, Marcantognini \& Young 2006;
Johnson 2002), 
\[
\frac{\partial^{M}\left\{ p\left(q\left(y\right)\right)\right\} }{\partial y^{M}}=\sum\frac{M!}{k_{1}!\;k_{2}!\;...\;k_{M}!}p^{\left(k\right)}\left(q\left(y\right)\right)\left(\frac{q'\left(y\right)}{1!}\right)^{k_{1}}\left(\frac{q''\left(y\right)}{2!}\right)^{k_{2}}...\left(\frac{q^{\left(M\right)}\left(y\right)}{M!}\right)^{k_{M}}
\]
where the sum is over all non-negative integer solutions of the Diophantine
equation $k_{1}+2k_{2}+\cdot\cdot\cdot+Mk_{M}=M,\text{ and }k=k_{1}+k_{2}+\cdot\cdot\cdot+k_{M}$.
Let,
\[
p\left(y\right)=y^{M-1}\quad;q\left(y\right)=\int_{-\infty}^{l\left(y\right)}e^{-t^{2}/2}dt
\]
\begin{eqnarray*}
R^{M}\left(y\right) & = & \left[\frac{y^{M+1}}{\left(M+1\right)!}\right]\\
 &  & \left|\sum\frac{M!}{k_{1}!\;k_{2}!\;...\;k_{M}!}p^{\left(k\right)}\left(q\left(y\right)\right)\left(\frac{q'\left(y\right)}{1!}\right)^{k_{1}}\left(\frac{q''\left(y\right)}{2!}\right)^{k_{2}}...\left(\frac{q^{\left(M\right)}\left(y\right)}{M!}\right)^{k_{M}}\right|_{y=\xi_{M}}
\end{eqnarray*}
$\text{As an illustration, let }k_{2}=\frac{M}{2}\Rightarrow k=\frac{M}{2}\text{ and let }K=\frac{M}{2}$
with $\triangle_{M}$ being the sum of the other terms, where each
of the other terms is smaller than the term we consider. 
\begin{eqnarray*}
R^{M}\left(y\right) & = & \left[\frac{y^{2K+1}}{\left(2K+1\right)!}\right]\left|\frac{2K!}{K!}p^{\left(K\right)}\left(q\left(y\right)\right)\left(\frac{q''\left(y\right)}{2!}\right)^{K}+\triangle_{M}\right|_{y=\xi_{M}}
\end{eqnarray*}
\begin{eqnarray*}
R^{M}\left(y\right) & = & \left[\frac{y^{2K+1}}{\left(2K+1\right)!}\right]\\
 &  & \left|\frac{\left(2K!\right)\left(M-1\right)\;...\;\left(M-K\right)}{K!}\left[\int_{-\infty}^{l\left(y\right)}e^{-t^{2}/2}dt\right]^{K-1}\left(\frac{\left(e^{-\frac{1}{2}l\left(y\right)^{2}}\right)l'\left(y\right)\left[-\frac{1}{y}-l\left(y\right)l'\left(y\right)\right]}{2!}\right)^{K}+\triangle_{M}\right|_{y=\xi_{M}}
\end{eqnarray*}
\begin{eqnarray*}
R^{M}\left(y\right) & = & \left[\frac{y^{2K+1}}{\left(2K+1\right)!}\right]\\
 &  & \left|\frac{\left(2K!\right)\left(2K-1\right)\;...\;\left(K\right)}{K!}\left[\int_{-\infty}^{l\left(y\right)}e^{-t^{2}/2}dt\right]^{K-1}\left(\frac{\mu-\sigma^{2}-lny}{\sigma^{3}y^{2}\left(e^{\frac{1}{2}l\left(y\right)^{2}}\right)2!}\right)^{K}+\triangle_{M}\right|_{y=\xi_{M}}
\end{eqnarray*}
Evaluating at the maximum value of $y$.
\begin{eqnarray*}
R^{M}\left(y\right) & = & \left[\frac{y^{2K+1}}{y^{2K}\left(2K+1\right)!}\right]\\
 &  & \left[\frac{\left(2K!\right)\left(2K-1\right)\;...\;\left(K\right)}{2^{K}K!}\left[\int_{-\infty}^{l\left(y\right)}e^{-t^{2}/2}dt\right]^{K-1}\left(\frac{\mu-\sigma^{2}-lny}{\sigma^{3}e^{\frac{1}{2}l\left(y\right)^{2}}}\right)^{K}+\triangle_{M}\right]
\end{eqnarray*}
The next remainder term would be,
\begin{eqnarray*}
R^{M+1}\left(y\right) & = & \left[\frac{y^{M+2}}{\left(M+2\right)!}\right]\left|\frac{\partial^{M+1}\left\{ \left[\int_{-\infty}^{l\left(y\right)}e^{-t^{2}/2}dt\right]^{M}\right\} }{\partial y^{M+1}}\right|_{y=\xi_{M}}
\end{eqnarray*}
\begin{eqnarray*}
R^{M+1}\left(y\right) & = & \left[\frac{y^{M+2}}{\left(M+2\right)!}\right]\\
 &  & \left|\sum\frac{\left(M+1\right)!}{k_{1}!\;k_{2}!\;...\;k_{M}!}p^{\left(k\right)}\left(q\left(y\right)\right)\left(\frac{q'\left(y\right)}{1!}\right)^{k_{1}}\left(\frac{q''\left(y\right)}{2!}\right)^{k_{2}}...\left(\frac{q^{\left(M+1\right)}\left(y\right)}{\left(M+1\right)!}\right)^{k_{M}}\right|_{y=\xi_{M+1}}
\end{eqnarray*}
In this case, $k_{1}+2k_{2}+\cdot\cdot\cdot+\left(M+1\right)k_{M+1}=M+1,\text{ and }k=k_{1}+k_{2}+\cdot\cdot\cdot+k_{M+1}$.
$\;\text{As an illustration, let }k_{1}=1;\;k_{2}=\frac{M}{2}\Rightarrow k=\frac{M}{2}+1\text{ and let }K=\frac{M}{2}$
with $\triangle_{M+1}$ being the sum of the other terms, where each
of the other terms is smaller than the term we consider. 
\begin{eqnarray*}
R^{M+1}\left(y\right) & = & \left[\frac{y^{2K+2}}{\left(2K+2\right)!}\right]\left|\frac{\left(2K+1\right)!}{1!K!}p^{\left(K+1\right)}\left(q\left(y\right)\right)\left(\frac{q'\left(y\right)}{1!}\right)^{1}\left(\frac{q''\left(y\right)}{2!}\right)^{K}+\triangle_{M+1}\right|_{y=\xi_{M}}
\end{eqnarray*}
\begin{eqnarray*}
R^{M+1}\left(y\right) & = & \left[\frac{y^{2K+2}}{\left(2K+2\right)!}\right]\left|\frac{\left(2K!\right)\left(M-1\right)\;...\;\left(M-K\right)\left(M-K-1\right)}{K!}\left[\int_{-\infty}^{l\left(y\right)}e^{-t^{2}/2}dt\right]^{K-2}\right.
\end{eqnarray*}
\begin{eqnarray*}
\left.\left\{ \left(e^{-\frac{1}{2}l\left(y\right)^{2}}\right)l'\left(y\right)\right\} \left\{ \frac{\left(e^{-\frac{1}{2}l\left(y\right)^{2}}\right)l'\left(y\right)\left[-\frac{1}{y}-l\left(y\right)l'\left(y\right)\right]}{2!}\right\} ^{K}+\triangle_{M+1}\right|_{y=\xi_{M}}
\end{eqnarray*}
\begin{eqnarray*}
R^{M+1}\left(y\right) & = & \left[\frac{y^{2K+2}}{\left(2K+2\right)!}\right]
\end{eqnarray*}
\begin{eqnarray*}
\left|\frac{\left(2K!\right)\left(2K-1\right)\;...\;\left(K\right)\left(K-1\right)}{K!}\left[\int_{-\infty}^{l\left(y\right)}e^{-t^{2}/2}dt\right]^{K-2}\left\{ \frac{1}{\left(e^{\frac{1}{2}l\left(y\right)^{2}}\right)\sigma y}\right\} \left(\frac{\mu-\sigma^{2}-lny}{\sigma^{3}y^{2}\left(e^{\frac{1}{2}l\left(y\right)^{2}}\right)2!}\right)^{K}+\triangle_{M+1}\right|_{y=\xi_{M}}
\end{eqnarray*}
Evaluating at the maximum value of $y$.
\begin{eqnarray*}
R^{M+1}\left(y\right) & = & \left[\frac{y^{2K+2}}{y^{2K+1}\left(2K+2\right)!}\right]\\
 &  & \left[\frac{\left(2K!\right)\left(2K-1\right)\;...\;\left(K\right)\left(K-1\right)}{2^{K}K!}\left[\int_{-\infty}^{l\left(y\right)}e^{-t^{2}/2}dt\right]^{K-2}\left\{ \frac{1}{\left(e^{\frac{1}{2}l\left(y\right)^{2}}\right)\sigma}\right\} \left(\frac{\mu-\sigma^{2}-lny}{\sigma^{3}e^{\frac{1}{2}l\left(y\right)^{2}}}\right)^{K}+\triangle_{M+1}\right]
\end{eqnarray*}
Let us consider the ratio of two successive remainders. 
\begin{eqnarray*}
\frac{R^{M+1}\left(y\right)}{R^{M}\left(y\right)} & = & \frac{\left[\frac{y^{2K+2}}{y^{2K+1}\left(2K+2\right)!}\right]\left[\frac{\left(2K!\right)\left(2K-1\right)\;...\;\left(K\right)\left(K-1\right)}{2^{K}K!}\left[\int_{-\infty}^{l\left(y\right)}e^{-t^{2}/2}dt\right]^{K-2}\left\{ \frac{1}{\left(e^{\frac{1}{2}l\left(y\right)^{2}}\right)\sigma}\right\} \left(\frac{\mu-\sigma^{2}-lny}{\sigma^{3}e^{\frac{1}{2}l\left(y\right)^{2}}}\right)^{K}+\triangle_{M+1}\right]}{\left[\frac{y^{2K+1}}{y^{2K}\left(2K+1\right)!}\right]\left[\frac{\left(2K!\right)\left(2K-1\right)\;...\;\left(K\right)}{2^{K}K!}\left[\int_{-\infty}^{l\left(y\right)}e^{-t^{2}/2}dt\right]^{K-1}\left(\frac{\mu-\sigma^{2}-lny}{\sigma^{3}e^{\frac{1}{2}l\left(y\right)^{2}}}\right)^{K}+\triangle_{M}\right]}
\end{eqnarray*}
\begin{eqnarray*}
\frac{R^{M+1}\left(y\right)}{R^{M}\left(y\right)} & = & \frac{\left[\frac{\left(2K!\right)\left(2K-1\right)\;...\;\left(K\right)\left(K-1\right)}{2^{K}K!}\left[\int_{-\infty}^{l\left(y\right)}e^{-t^{2}/2}dt\right]^{K-2}\left\{ \frac{1}{\left(e^{\frac{1}{2}l\left(y\right)^{2}}\right)\sigma}\right\} \left(\frac{\mu-\sigma^{2}-lny}{\sigma^{3}e^{\frac{1}{2}l\left(y\right)^{2}}}\right)^{K}\right]}{\left(2K+2\right)\left[\frac{\left(2K!\right)\left(2K-1\right)\;...\;\left(K\right)}{2^{K}K!}\left[\int_{-\infty}^{l\left(y\right)}e^{-t^{2}/2}dt\right]^{K-1}\left(\frac{\mu-\sigma^{2}-lny}{\sigma^{3}e^{\frac{1}{2}l\left(y\right)^{2}}}\right)^{K}+\triangle_{M}\right]}\\
 &  & +\frac{\left[\triangle_{M+1}\right]}{\left(2K+2\right)\left[\frac{\left(2K!\right)\left(2K-1\right)\;...\;\left(K\right)}{2^{K}K!}\left[\int_{-\infty}^{l\left(y\right)}e^{-t^{2}/2}dt\right]^{K-1}\left(\frac{\mu-\sigma^{2}-lny}{\sigma^{3}e^{\frac{1}{2}l\left(y\right)^{2}}}\right)^{K}+\triangle_{M}\right]}
\end{eqnarray*}
\begin{eqnarray*}
\frac{R^{M+1}\left(y\right)}{R^{M}\left(y\right)} & = & \frac{\left[\frac{\left(K-1\right)}{\left[\int_{-\infty}^{l\left(y\right)}e^{-t^{2}/2}dt\right]\left(e^{\frac{1}{2}l\left(y\right)^{2}}\right)\sigma}\right]}{\left(2K+2\right)\left[1+\triangle_{M}\left\{ \frac{\left(2K!\right)\left(2K-1\right)\;...\;\left(K\right)}{2^{K}K!}\left[\int_{-\infty}^{l\left(y\right)}e^{-t^{2}/2}dt\right]^{K-1}\left(\frac{\mu-\sigma^{2}-lny}{\sigma^{3}e^{\frac{1}{2}l\left(y\right)^{2}}}\right)^{K}\right\} ^{-1}\right]}\\
 &  & +\frac{\left[\triangle_{M+1}\right]}{\left(2K+2\right)\left[\frac{\left(2K!\right)\left(2K-1\right)\;...\;\left(K\right)}{2^{K}K!}\left[\int_{-\infty}^{l\left(y\right)}e^{-t^{2}/2}dt\right]^{K-1}\left(\frac{\mu-\sigma^{2}-lny}{\sigma^{3}e^{\frac{1}{2}l\left(y\right)^{2}}}\right)^{K}+\triangle_{M}\right]}
\end{eqnarray*}

Simplifying using the Leibniz Integral Rule and taking the limit as
$M\rightarrow\infty$ we see that $\frac{R^{M+1}\left(y\right)}{R^{M}\left(y\right)}<1\Rightarrow R^{M}\left(y\right)\rightarrow0$
under a restricted set of values for $y$. We show a few terms illustrating
how the expansion develops.
\begin{eqnarray*}
R^{M}\left(y\right) & = & \left[\frac{y^{M+1}}{\left(M+1\right)!}\right]\left|\frac{\partial^{M-1}\left\{ \left(M-1\right)\left[\int_{-\infty}^{l\left(y\right)}e^{-t^{2}/2}dt\right]^{M-2}\left(e^{-\frac{1}{2}l\left(y\right)^{2}}\right)l'\left(y\right)\right\} }{\partial y^{M-1}}\right|_{y=\xi_{M}}
\end{eqnarray*}
\begin{eqnarray*}
R^{M}\left(y\right) & = & \left[\frac{y^{M+1}}{\left(M+1\right)!}\right]\left|\frac{\partial^{M-2}}{\partial y^{M-2}}\left\{ \left(M-1\right)\left(M-2\right)\left[\int_{-\infty}^{l\left(y\right)}e^{-t^{2}/2}dt\right]^{M-3}\left[\left(e^{-\frac{1}{2}l\left(y\right)^{2}}\right)l'\left(y\right)\right]^{2}\right.\right.\\
 &  & +\left.\left.\left(M-1\right)\left[\int_{-\infty}^{l\left(y\right)}e^{-t^{2}/2}dt\right]^{M-2}\left(e^{-\frac{1}{2}l\left(y\right)^{2}}\right)\left[l''\left(y\right)-l\left(y\right)\left\{ l'\left(y\right)\right\} ^{2}\right]\right\} \right|_{y=\xi_{M}}
\end{eqnarray*}
\begin{eqnarray*}
R^{M}\left(y\right) & = & \left[\frac{y^{M+1}}{\left(M+1\right)!}\right]\left|\frac{\partial^{M-2}}{\partial y^{M-2}}\left\{ \left(M-1\right)\left(M-2\right)\left[\int_{-\infty}^{l\left(y\right)}e^{-t^{2}/2}dt\right]^{M-3}\left[\left(e^{-\frac{1}{2}l\left(y\right)^{2}}\right)l'\left(y\right)\right]^{2}\right.\right.\\
 &  & +\left.\left.\left(M-1\right)\left[\int_{-\infty}^{l\left(y\right)}e^{-t^{2}/2}dt\right]^{M-2}\left(e^{-\frac{1}{2}l\left(y\right)^{2}}\right)l'\left(y\right)\left[-\frac{1}{y}-l\left(y\right)l'\left(y\right)\right]\right\} \right|_{y=\xi_{M}}
\end{eqnarray*}
\begin{eqnarray*}
R^{M}\left(y\right) & = & \left[\frac{y^{M+1}}{\left(M+1\right)!}\right]\left|\frac{\partial^{M-3}}{\partial y^{M-3}}\left\{ \left(M-1\right)\left(M-2\right)\left(M-3\right)\left[\int_{-\infty}^{l\left(y\right)}e^{-t^{2}/2}dt\right]^{M-4}\left[\left(e^{-\frac{1}{2}l\left(y\right)^{2}}\right)l'\left(y\right)\right]^{3}\right.\right.\\
 &  & +2\left(M-1\right)\left(M-2\right)\left[\int_{-\infty}^{l\left(y\right)}e^{-t^{2}/2}dt\right]^{M-3}\left[\left(e^{-\frac{1}{2}l\left(y\right)^{2}}\right)l'\left(y\right)\right]^{2}\left[-\frac{1}{y}-l\left(y\right)l'\left(y\right)\right]\\
 &  & +\left(M-1\right)\left(M-2\right)\left[\int_{-\infty}^{l\left(y\right)}e^{-t^{2}/2}dt\right]^{M-3}\left[\left(e^{-\frac{1}{2}l\left(y\right)^{2}}\right)l'\left(y\right)\right]^{2}\left[-\frac{1}{y}-l\left(y\right)l'\left(y\right)\right]\\
 &  & +\left(M-1\right)\left[\int_{-\infty}^{l\left(y\right)}e^{-t^{2}/2}dt\right]^{M-2}\left[\left(e^{-\frac{1}{2}l\left(y\right)^{2}}\right)l'\left(y\right)\right]^{2}\left[-\frac{1}{y}-l\left(y\right)l'\left(y\right)\right]^{2}\\
 &  & +\left.\left.\left(M-1\right)\left[\int_{-\infty}^{l\left(y\right)}e^{-t^{2}/2}dt\right]^{M-2}\left(e^{-\frac{1}{2}l\left(y\right)^{2}}\right)l'\left(y\right)\left[\frac{1}{y^{2}}-\left\{ l'\left(y\right)\right\} ^{2}-l\left(y\right)l''\left(y\right)\right]\right\} \right|_{y=\xi_{M}}
\end{eqnarray*}
\textbf{\textcolor{red}{(POSSIBLE SELECTION BELOW FOR DELETION).}}
Using numerical integration on the website \href{http://www.integral-calculator.com}{Integral Calculator}
and the formula, \textcolor{black}{$\left(\left[1+\text{erf}\left\{ \left(\ln\left(y\right)-\mu\right)/\sigma\right\} \right]/\left[1+\text{erf}\left\{ \left(\ln\left(x\right)-\mu\right)/\sigma\right\} \right]\right)^{M-1}$,
}say with an example such as, ((1+erf((ln(y)-5)/(sqrt(2){*}10)))/(1+erf((ln(4)-5)/(sqrt(2){*}10))))\textasciicircum 24,
we find that the numerical integration gives results that are different
compared to the above theoretical result. 

Alternately, 
\[
R^{M}\left(y\right)=\left[\frac{y^{M+1}}{\left(M+1\right)!}\right]\left|\frac{\left(M-1\right)\partial^{M-1}\left\{ \left[\int_{-\infty}^{\left(\frac{lny-\mu}{\sigma}\right)}e^{-t^{2}/2}dt\right]^{M-2}\left[\frac{1}{\sigma y}e^{-\frac{\left(lny-\mu\right)^{2}}{2\sigma^{2}}}\right]\right\} }{\partial y^{M-1}}\right|_{y=\xi_{M}}
\]
\begin{eqnarray*}
R^{M}\left(y\right) & = & \left[\frac{y^{M+1}\left(M-1\right)}{\left(M+1\right)!}\right]\left|\frac{\partial^{M-2}}{\partial y^{M-2}}\left\{ \left(M-2\right)\left[\int_{-\infty}^{\left(\frac{lny-\mu}{\sigma}\right)}e^{-t^{2}/2}dt\right]^{M-3}\left[\frac{1}{\sigma y}e^{-\frac{\left(lny-\mu\right)^{2}}{2\sigma^{2}}}\right]^{2}\right.\right.\\
 &  & +\left.\left.\left[\int_{-\infty}^{\left(\frac{lny-\mu}{\sigma}\right)}e^{-t^{2}/2}dt\right]^{M-2}\left[-\frac{1}{\sigma y^{2}}e^{-\frac{\left(lny-\mu\right)^{2}}{2\sigma^{2}}}+\frac{1}{\sigma y}\left\{ -\frac{2\left(lny-\mu\right)}{2\sigma^{2}y}\right\} e^{-\frac{\left(lny-\mu\right)^{2}}{2\sigma^{2}}}\right]\right\} \right|_{y=\xi_{M}}
\end{eqnarray*}
\begin{eqnarray*}
R^{M}\left(y\right) & = & \left[\frac{y^{M+1}\left(M-1\right)}{\left(M+1\right)!}\right]\left|\frac{\partial^{M-2}}{\partial y^{M-2}}\left\{ \left[\int_{-\infty}^{\left(\frac{lny-\mu}{\sigma}\right)}e^{-t^{2}/2}dt\right]^{M-3}\left[\frac{1}{\sigma y}e^{-\frac{\left(lny-\mu\right)^{2}}{2\sigma^{2}}}\right]\right.\right.\\
 &  & +\left.\left.\left(\left(M-2\right)\left[\frac{1}{\sigma y}e^{-\frac{\left(lny-\mu\right)^{2}}{2\sigma^{2}}}\right]+\left[\int_{-\infty}^{\left(\frac{lny-\mu}{\sigma}\right)}e^{-t^{2}/2}dt\right]\left[-\frac{1}{y}-\frac{\left(lny-\mu\right)}{\sigma^{2}y}\right]\right)\right\} \right|_{y=\xi_{M}}
\end{eqnarray*}
This simplifies showing that the error term converges only for small
valuations. As an illustration, if $M=2$, and since the maximum value
of $\xi_{M}=y$, 
\begin{eqnarray*}
R^{2}\left(y\right) & = & \left(\frac{y^{3}}{6}\right)\left|\frac{1}{\sigma y}e^{-\frac{\left(lny-\mu\right)^{2}}{2\sigma^{2}}}\left\{ -\frac{1}{y}-\frac{\left(lny-\mu\right)}{\sigma^{2}y}\right\} \right|_{y=\xi_{2}}
\end{eqnarray*}
\begin{eqnarray*}
R^{2}\left(y\right) & = & \left(-\frac{1}{6\sigma}\right)\left[\frac{y}{e^{\frac{\left(lny\right)^{2}+\mu^{2}-2\mu lny}{2\sigma^{2}}}}\left\{ 1+\frac{lny}{\sigma^{2}}-\frac{\mu}{\sigma^{2}}\right\} \right]
\end{eqnarray*}
\begin{eqnarray*}
R^{2}\left(y\right) & = & \left(-\frac{1}{6\sigma e^{\frac{\mu^{2}}{2\sigma^{2}}}}\right)\left[\frac{yy^{\frac{\mu}{\sigma^{2}}}}{y^{\frac{lny}{2\sigma^{2}}}}\left\{ 1+\frac{lny}{\sigma^{2}}-\frac{\mu}{\sigma^{2}}\right\} \right]
\end{eqnarray*}
\begin{eqnarray*}
R^{2}\left(y\right) & = & \left(-\frac{1}{6\sigma e^{\frac{\mu^{2}}{2\sigma^{2}}}}\right)\left[\frac{y^{\left(1+\frac{\mu}{\sigma^{2}}\right)}}{y^{\frac{lny}{2\sigma^{2}}}}\left\{ 1-\frac{\mu}{\sigma^{2}}\right\} +\left\{ \frac{1}{\sigma^{2}}\right\} \frac{y^{\left(1+\frac{\mu}{\sigma^{2}}\right)}lny}{y^{\frac{lny}{2\sigma^{2}}}}\right]
\end{eqnarray*}
Alternately writing the definite integral as the limit of a Riemann
sum with $\Delta y=\frac{\left(b-a\right)}{n}$; $y_{k}=a+k\Delta y$;
$y_{n}=x$ and since $\int_{a}^{b}f\left(y\right)dy=\underset{n\rightarrow\infty}{\lim}\sum_{k=1}^{n}f\left(y_{k}\right)\Delta y$
\begin{eqnarray*}
\beta\left(x\right) & = & \left[x-\frac{\int_{0}^{x}\left[\Phi\left(\frac{\ln y-\mu}{\sigma}\right)\right]^{M-1}dy}{\left[\Phi\left(\frac{\ln x-\mu}{\sigma}\right)\right]^{M-1}}\right]\\
 & = & \left[x-\frac{\left\{ \int_{0}^{x}\left[\frac{1}{\sqrt{2\pi}}\int_{-\infty}^{\left(\frac{\ln y-\mu}{\sigma}\right)}e^{-t^{2}/2}dt\right]^{M-1}dy\right\} }{\left[\frac{1}{\sqrt{2\pi}}\int_{-\infty}^{\left(\frac{\ln x-\mu}{\sigma}\right)}e^{-t^{2}/2}dt\right]^{M-1}}\right]
\end{eqnarray*}
\[
f\left(y_{k}\right)\Delta y=f\left(\frac{xk}{n}\right)\frac{x}{n}=\left\{ \left[\int_{-\infty}^{\left(\frac{\ln\left\{ \frac{xk}{n}\right\} -\mu}{\sigma}\right)}e^{-t^{2}/2}dt\right]^{M-1}\right\} \frac{x}{n}
\]
\end{proof}

\subsection{\label{subsec:Proof-of-Proposition-2}Proof of Corollary \ref{The-symmetric-equilibrium}.}
\begin{proof}
Using the bid function from Lemma (\ref{Proposition 3}) with the
uniform distribution functions, $F\left(x\right)=\frac{x}{\omega},G\left(x\right)=\left(\frac{x}{\omega}\right)^{M-1}$
\begin{eqnarray*}
\beta\left(x\right) & = & \left[x-\int_{0}^{x}\left[\frac{F\left(y\right)}{F\left(x\right)}\right]^{M-1}dy\right]\\
 & = & \left[x-\int_{0}^{x}\left[\frac{y/\omega}{x/\omega}\right]^{M-1}dy\right]\\
 & = & \left[x-\frac{1}{\left(x^{M-1}\right)}\left|\left[\frac{y^{M}}{M}\right]\right|_{0}^{x}\right]\\
 & = & \left[x-\frac{x}{M}\right]\\
 & = & \left(\frac{M-1}{M}\right)x
\end{eqnarray*}
\end{proof}

\subsection{\label{subsec:Proof-of-Lemma-4}Proof of Lemma \ref{Proposition 6}.}
\begin{proof}
The proof from (Krishna 2009) is repeated below for completeness.

In this case, $r>0$, and hence no bidder can make a positive profit
if his valuation, $x_{i}<r$. Also, $\beta\left(r\right)=r$. A symmetric
equilibrium strategy can be derived when $x\geq r$ as,
\begin{eqnarray*}
\beta\left(x\right) & = & E\left[max\left\{ Y_{1},r\right\} \mid Y_{1}<x\right]\\
 & = & \frac{\left\{ \int_{0}^{r}rg\left(y\right)dy+\int_{r}^{x}yg\left(y\right)dy\right\} }{G\left(x\right)}\\
 & = & \frac{r\left\{ G\left(r\right)-G\left(0\right)\right\} +\int_{r}^{x}yg\left(y\right)dy}{G\left(x\right)}\;\left[\because G\left(0\right)=F\left(0\right)=0\right]\\
 & = & r\frac{G\left(r\right)}{G\left(x\right)}+\frac{1}{G\left(x\right)}\int_{r}^{x}yg\left(y\right)dy
\end{eqnarray*}
Alternately,
\begin{eqnarray*}
\beta\left(x\right) & = & r\frac{G\left(r\right)}{G\left(x\right)}+\frac{1}{G\left(x\right)}\int_{r}^{x}yG'\left(y\right)dy\\
 & = & r\frac{G\left(r\right)}{G\left(x\right)}+\frac{1}{G\left(x\right)}\left[\left|yG\left(y\right)\right|_{r}^{x}-\int_{r}^{x}G\left(y\right)dy\right]\\
 & = & r\frac{G\left(r\right)}{G\left(x\right)}+\frac{1}{G\left(x\right)}\left[xG\left(x\right)-rG\left(r\right)-\int_{r}^{x}G\left(y\right)dy\right]\\
 & = & x-\int_{r}^{x}\frac{G\left(y\right)}{G\left(x\right)}dy
\end{eqnarray*}
\end{proof}

\subsection{\label{subsec:Proof-of-Proposition-5}Proof of Corollary \ref{The-symmetric-equilibrium-2}.}
\begin{proof}
Using the bid function from Lemma (\ref{Proposition 6}) with the
uniform distribution functions,  $F\left(x\right)=\frac{x}{\omega},G\left(x\right)=\left(\frac{x}{\omega}\right)^{M-1}$
\begin{eqnarray*}
\beta\left(x\right) & = & r\frac{\left(\frac{r}{\omega}\right)^{M-1}}{\left(\frac{x}{\omega}\right)^{M-1}}+\frac{1}{\left(\frac{x}{\omega}\right)^{M-1}}\int_{r}^{x}y\left\{ \frac{\partial\left(\frac{y}{\omega}\right)^{M-1}}{\partial y}\right\} dy\\
 & = & r\left(\frac{r}{x}\right)^{M-1}+\frac{\left(M-1\right)}{\left(\frac{x}{\omega}\right)^{M-1}}\int_{r}^{x}y\left(\frac{y}{\omega}\right)^{M-2}\frac{1}{\omega}dy
\end{eqnarray*}
\begin{eqnarray*}
 & = & r\left(\frac{r}{x}\right)^{M-1}+\frac{\left(M-1\right)}{x^{M-1}}\int_{r}^{x}y^{M-1}dy\\
 & = & r\left(\frac{r}{x}\right)^{M-1}+\frac{1}{x^{M-1}}\frac{\left(M-1\right)}{M}\left|y^{M}\right|_{r}^{x}\\
 & = & \frac{r^{M}}{x^{M-1}}+\frac{\left(x^{M}-r^{M}\right)}{x^{M-1}}\frac{\left(M-1\right)}{M}\\
 & = & \frac{r^{M}}{x^{M-1}}\left(1-\frac{M-1}{M}\right)+x\left(\frac{M-1}{M}\right)
\end{eqnarray*}
\[
\beta\left(x\right)=\frac{r^{M}}{x^{M-1}}\left(\frac{M+1}{M}\right)+x\left(\frac{M-1}{M}\right)
\]
\end{proof}

\subsection{\label{subsec:Proof-of-Proposition-6}Proof of Corollary \ref{The-symmetric-equilibrium-3}.}
\begin{proof}
Using the bid function from Lemma (\ref{Proposition 6}) with the
log-normal distribution functions, $F\left(x\right)=\Phi\left(\frac{lnx-\mu}{\sigma}\right),G\left(x\right)=\left(\Phi\left(\frac{lnx-\mu}{\sigma}\right)\right)^{M-1}$.
Here, $\Phi(u)=\frac{1}{\sqrt{2\pi}}\int_{-\infty}^{u}e^{-t^{2}/2}dt$
, is the standard normal cumulative distribution and $X=e^{W}$where,
$W\sim N\left(\mu,\sigma\right)$.
\begin{eqnarray*}
\beta\left(x\right) & = & \left[x-\frac{\int_{r}^{x}\left[\Phi\left(\frac{lny-\mu}{\sigma}\right)\right]^{M-1}dy}{\left[\Phi\left(\frac{lnx-\mu}{\sigma}\right)\right]^{M-1}}\right]\\
 & = & \left[x-\frac{\left\{ \int_{r}^{x}\left[\frac{1}{\sqrt{2\pi}}\int_{-\infty}^{\left(\frac{lny-\mu}{\sigma}\right)}e^{-t^{2}/2}dt\right]^{M-1}dy\right\} }{\left[\frac{1}{\sqrt{2\pi}}\int_{-\infty}^{\left(\frac{lnx-\mu}{\sigma}\right)}e^{-t^{2}/2}dt\right]^{M-1}}\right]
\end{eqnarray*}
 Let,
\begin{eqnarray*}
h\left(y\right) & = & \left[\int_{-\infty}^{\left(\frac{lny-\mu}{\sigma}\right)}e^{-t^{2}/2}dt\right]^{M-1}\\
j\left(y\right) & = & \int h\left(y\right)dy
\end{eqnarray*}
\begin{eqnarray*}
\beta\left(x\right) & = & \left[x-\frac{\left\{ \int_{r}^{x}h\left(y\right)dy\right\} }{h\left(x\right)}\right]\\
 & = & \left[x-\frac{\left|j\left(y\right)\right|_{r}^{x}}{h\left(x\right)}\right]=\left[x-\left\{ \frac{j\left(x\right)-j\left(r\right)}{h\left(x\right)}\right\} \right]\\
 & \approx & \left[x-\frac{j'\left(r\right)\left(x-r\right)}{h\left(x\right)}\right]\;\left\{ \because j\left(x\right)-j\left(r\right)\simeq j'\left(r\right)\left(x-r\right)\;,\quad Taylor\;Series\right\} \\
 & = & \left[x-\frac{h\left(r\right)\left(x-r\right)}{h\left(x\right)}\right]=x\left[1-\frac{h\left(r\right)}{h\left(x\right)}+\frac{r}{x}\frac{h\left(r\right)}{h\left(x\right)}\right]
\end{eqnarray*}
\[
\beta\left(x\right)=x\left[\frac{h'\left(r\right)\left(x-r\right)}{h\left(x\right)}+\frac{r}{x}\frac{h\left(r\right)}{h\left(x\right)}\right]
\]
Using Leibniz Integral Rule, we get the following, which is solved
using numerical techniques (Miranda \& Fackler 2002) or approximations
to the error function (Chiani, Dardari \& Simon 2003). 
\[
h'\left(r\right)=\left(M-1\right)\left[\int_{-\infty}^{\left(\frac{lnr-\mu}{\sigma}\right)}e^{-t^{2}/2}dt\right]^{M-2}\left\{ \frac{e^{-\left(\frac{lnr-\mu}{\sigma}\right)^{2}/2}}{r\sigma}\right\} 
\]
\end{proof}

\subsection{\label{subsec:Proof-of-Lemma-7}Proof of Lemma \ref{The-optimal-reserve}.}

The proof from (Krishna 2009) is repeated below for completeness.
\begin{proof}
Expected Payment of a bidder with $x\geq r$ is
\begin{eqnarray*}
m\left(x,r\right) & = & G\left(x\right)\beta\left(x\right)\\
 & = & rG\left(r\right)+\int_{r}^{x}yg\left(y\right)dy
\end{eqnarray*}
Expected ex ante payment of a bidder is,
\begin{eqnarray*}
E\left[m\left(x,r\right)\right] & = & \int_{r}^{\omega}m\left(x,r\right)f\left(x\right)dx\\
 & = & \int_{r}^{\omega}\left\{ rG\left(r\right)+\int_{r}^{x}yg\left(y\right)dy\right\} f\left(x\right)dx\\
 & = & \int_{r}^{\omega}rG\left(r\right)f\left(x\right)dx+\int_{0}^{\omega}\left\{ \int_{r}^{x}yg\left(y\right)dy\right\} f\left(x\right)dx\\
 & = & rG\left(r\right)\left\{ F\left(\omega\right)-F\left(r\right)\right\} +\int_{r}^{\omega}\left\{ \int_{y}^{\omega}f\left(x\right)dx\right\} yg\left(y\right)dy\\
 & = & rG\left(r\right)\left\{ F\left(\omega\right)-F\left(r\right)\right\} +\int_{r}^{\omega}\left\{ F\left(\omega\right)-F\left(y\right)\right\} yg\left(y\right)dy\\
 & = & rG\left(r\right)\left\{ 1-F\left(r\right)\right\} +\int_{r}^{\omega}y\left\{ 1-F\left(y\right)\right\} g\left(y\right)dy\;\left[\because F\left(\omega\right)=1\right]
\end{eqnarray*}
If the seller has a valuation, $x_{s}\in\left[0,\omega\right)$ then
the expected payoff of the seller from setting a reserve price,$r\geq x_{s}$
is,
\begin{eqnarray*}
\Pi_{s} & = & M*E\left[m\left(x,r\right)\right]+F\left(r\right)^{M}x_{s}
\end{eqnarray*}
$F\left(r\right)^{M}x_{s}$ is the payoff if the valuations of all
the bidders is less than the reserve price, $r$. First order conditions
after differentiating the payoff with respect to the reserve price,
$r$.
\begin{eqnarray*}
\frac{\partial\Pi_{s}}{\partial r}=\frac{\partial\left[M*E\left[m\left(x,r\right)\right]+F\left(r\right)^{M}x_{s}\right]}{\partial r} & = & 0\\
\Rightarrow\frac{\partial\left[M\left\{ rG\left(r\right)\left\{ 1-F\left(r\right)\right\} +\int_{r}^{\omega}y\left\{ 1-F\left(y\right)\right\} g\left(y\right)dy\right\} +F\left(r\right)^{M}x_{s}\right]}{\partial r} & = & 0\\
M\left[r\left[g\left(r\right)\left(1-F\left(r\right)\right)-G\left(r\right)f\left(r\right)\right]+G\left(r\right)\left\{ 1-F\left(r\right)\right\} -r\left\{ 1-F\left(r\right)\right\} g\left(r\right)\right]\\
+MF\left(r\right)^{M-1}f\left(r\right)x_{s} & = & 0
\end{eqnarray*}
$\frac{\partial\int_{r}^{\omega}y\left\{ 1-F\left(y\right)\right\} g\left(y\right)dy}{\partial r}=-r\left\{ 1-F\left(r\right)\right\} g\left(r\right)$
using Leibniz Integral Rule. 
\begin{eqnarray*}
M\left\{ rg\left(r\right)-rg\left(r\right)F\left(r\right)+G\left(r\right)-G\left(r\right)F\left(r\right)\right.\\
\left.-rG\left(r\right)f\left(r\right)-rg\left(r\right)+rg\left(r\right)F\left(r\right)+G\left(r\right)f\left(r\right)x_{s}\right\}  & = & 0\\
\Rightarrow M\left\{ G\left(r\right)-G\left(r\right)F\left(r\right)-rG\left(r\right)f\left(r\right)+G\left(r\right)f\left(r\right)x_{s}\right\}  & = & 0\\
\Rightarrow MG\left(r\right)\left\{ 1-F\left(r\right)-rf\left(r\right)+f\left(r\right)x_{s}\right\} =MG\left(r\right)\left\{ 1-F\left(r\right)-f\left(r\right)\left(r-x_{s}\right)\right\}  & = & 0\\
\Rightarrow\left\{ 1-F\left(r\right)-f\left(r\right)\left(r-x_{s}\right)\right\} =\left\{ 1-\frac{f\left(r\right)}{\left[1-F\left(r\right)\right]}\left(r-x_{s}\right)\right\}  & = & 0\\
\Rightarrow\frac{f\left(r\right)}{\left[1-F\left(r\right)\right]}\left(r-x_{s}\right)=1\Rightarrow\left(r-x_{s}\right)=\frac{\left[1-F\left(r\right)\right]}{f\left(r\right)}
\end{eqnarray*}
This gives that the optimal reserve price, $r^{*}$ must satisfy the
following expression, 
\[
x_{s}=r^{*}-\frac{\left[1-F\left(r^{*}\right)\right]}{f\left(r^{*}\right)}
\]
\end{proof}

\subsection{\label{subsec:Proof-of-Lemma-8}Proof of Lemma \ref{The-equilibrium-strategy}.}

The proof is from (Ortega-Reichert 1967; and Harstad, Kagel \& Levin
1990) who derive the expression below when there is uncertainty about
the number of bidders. ( Levin \& Ozdenoren 2004; and, Dyer, Kagel
\& Levin 1989) are other useful references.
\begin{proof}
Let the probability be $p_{l}$ that any bidder is facing $l$ other
bidders. The bidder wins if $Y_{1}^{l}$ , the highest of $l$ values
drawn from the symmetric distribution $F$ is less than $x$, his
valuation. The probability of this event is $G^{l}\left(x\right)=\left[F\left(x\right)\right]^{l}$.
The overall probability that the bidder will win when he bids $\beta^{M}\left(x\right)$
is 
\[
G\left(x\right)=\sum_{l=0}^{M-1}p_{l}G^{l}\left(x\right)
\]
Expected payment of a bidder with value $x$ is 
\begin{eqnarray*}
m\left(x\right) & = & G\left(x\right)\beta^{M}\left(x\right)
\end{eqnarray*}
Alternately, the expected payment of a bidder can also be written
as below. First we note that with when there are $l$ other bidders
and his valuation is $x$, the expected payment is the product of
the probability he wins, $G^{l}\left(x\right)$, the probability there
are $l$ other bidders, $p_{l}$, and the amount he bids, $\beta^{l}\left(x\right)=E\left[Y_{1}^{l}\mid Y_{1}^{l}<x\right]$.
Here, $\beta^{l}\left(x\right)$ is the equilibrium bidding strategy
when there are a total of exactly $l+1$ bidders, known with certainty.
Considering all the scenarios when the number of other bidders, $l$,
varies from $0$ to $M-1$ gives,
\[
m\left(x\right)=\sum_{l=0}^{M-1}p_{l}G^{l}\left(x\right)E\left[Y_{1}^{l}\mid Y_{1}^{l}<x\right]
\]
\[
\Rightarrow\beta^{M}\left(x\right)=\sum_{l=0}^{M-1}\frac{p_{l}G^{l}\left(x\right)}{G\left(x\right)}E\left[Y_{1}^{l}\mid Y_{1}^{l}<x\right]
\]
\[
\beta^{M}\left(x\right)=\sum_{l=0}^{M-1}\frac{p_{l}G^{l}\left(x\right)}{G\left(x\right)}\beta^{l}\left(x\right)
\]
 Hence the equilibrium bid for an actual bidder when he is unsure
about the number of rivals he faces is a weighted average of the equilibrium
bids in an auction when the number of bidders is known to all. 
\end{proof}

\subsection{\label{subsec:Proof-of-Proposition-Discrete-Symmetric}Proof of Proposition
\ref{The-formula-for}.}
\begin{proof}
We first need to show that the discrete symmetric probability densities
are positive and sum to one, thus satisfying the properties of a probability
distribution. The rest of the proof follows from the results in Lemma
\ref{The-equilibrium-strategy}, the assumption of a uniform distribution
for the valuations with upper limit $\omega=1$ and the results for
the bid strategy under a uniform distribution in Lemma \ref{Proposition 3}
or Corollary \ref{The-symmetric-equilibrium}. 
\begin{eqnarray*}
p_{l} & = & \begin{cases}
l\Delta_{p} & ,\;\text{if}\;\;l\leq\frac{\left(M-1\right)}{2}\\
\left(M-l\right)\Delta_{p} & ,\;\text{if}\;\;l>\frac{\left(M-1\right)}{2}
\end{cases}
\end{eqnarray*}
\[
\Delta_{p}=\left\{ \left\lfloor \frac{M^{2}}{4}\right\rfloor \right\} ^{-1}=\frac{1}{\left\lfloor \frac{M^{2}}{4}\right\rfloor }
\]
 The minimum value for $M$ is $2$. When $M=2$, $\Delta_{p}=1$;
$M=3$, $\Delta_{p}=\frac{1}{2}$; $M=4$, $\Delta_{p}=\frac{1}{4}$;
$M=5$, $\Delta_{p}=\frac{1}{6}$; $M=6$, $\Delta_{p}=\frac{1}{9}$;
$M=7$, $\Delta_{p}=\frac{1}{12}$. Hence, we note that for any finite
$M$, $0<\Delta_{p}\leq1$. Since $l\in\left\{ 1,2,\ldots,M-1\right\} $
and its maximum value is $M-1$, it follows from the definition of
$p_{l}$ that the probability densities are positive,$p_{l}\geq0$.
To show that the probability densities sum to one we proceed as below,
\[
\sum_{l=0}^{M-1}p_{l}=\sum_{l=0}^{\left\lfloor \frac{\left(M-1\right)}{2}\right\rfloor }p_{l}+\sum_{l=\left\lfloor \frac{\left(M-1\right)}{2}\right\rfloor +1}^{M-1}p_{l}
\]
\[
=\sum_{l=0}^{\left\lfloor \frac{\left(M-1\right)}{2}\right\rfloor }l\Delta_{p}+\sum_{l=\left\lfloor \frac{\left(M-1\right)}{2}\right\rfloor +1}^{M-1}\left(M-l\right)\Delta_{p}
\]
$M$ can be even or odd and can be represented as either $2K$ or
$2K+1$. The minimum value of $K=1.$ We start with the case when
$M=2K$ which gives,
\[
\sum_{l=0}^{M-1}p_{l}=\sum_{l=0}^{\left\lfloor \frac{\left(2K-1\right)}{2}\right\rfloor }l\Delta_{p}+\sum_{l=\left\lfloor \frac{\left(2K-1\right)}{2}\right\rfloor +1}^{2K-1}\left(2K-l\right)\Delta_{p}
\]
\[
=\sum_{l=0}^{\left\lfloor K-\frac{1}{2}\right\rfloor }l\Delta_{p}+\sum_{l=\left\lfloor K-\frac{1}{2}\right\rfloor +1}^{2K-1}\left(2K-l\right)\Delta_{p}
\]
\[
=\sum_{l=0}^{K-1}l\Delta_{p}+\sum_{l=K}^{2K-1}\left(2K-l\right)\Delta_{p}
\]
\[
=\sum_{l=1}^{K-1}l\Delta_{p}+\sum_{l=K}^{2K-1}\left(2K-l\right)\Delta_{p}
\]
\[
=\Delta_{p}\left\{ \sum_{l=1}^{K-1}l+\sum_{l=K}^{2K-1}\left(2K-l\right)\right\} 
\]
\[
=\Delta_{p}\left\{ \frac{\left(K-1\right)K}{2}+\sum_{l=1}^{K}l\right\} 
\]
\[
=\frac{1}{\left\lfloor \frac{\left(2K\right)^{2}}{4}\right\rfloor }\left\{ \frac{\left(K-1\right)K}{2}+\frac{K\left(K+1\right)}{2}\right\} 
\]
\[
=\frac{1}{\left\lfloor K^{2}\right\rfloor }\left\{ \frac{K\left(K-1+K+1\right)}{2}\right\} 
\]
\[
=\frac{1}{K^{2}}\left\{ K^{2}\right\} =1
\]
We next consider the case when $M=2K+1$ which gives,
\[
\sum_{l=0}^{M-1}p_{l}=\sum_{l=0}^{\left\lfloor \frac{\left(2K+1-1\right)}{2}\right\rfloor }l\Delta_{p}+\sum_{l=\left\lfloor \frac{\left(2K+1-1\right)}{2}\right\rfloor +1}^{2K+1-1}\left(2K+1-l\right)\Delta_{p}
\]
\[
=\sum_{l=0}^{\left\lfloor K\right\rfloor }l\Delta_{p}+\sum_{l=\left\lfloor K\right\rfloor +1}^{2K}\left(2K+1-l\right)\Delta_{p}
\]
\[
=\Delta_{p}\left\{ \sum_{l=0}^{K}l+\sum_{l=0}^{K}l\right\} 
\]
\[
=\frac{1}{\left\lfloor \frac{\left(2K+1\right)^{2}}{4}\right\rfloor }\left\{ 2\sum_{l=0}^{K}l\right\} 
\]
\[
=\frac{1}{\left\lfloor \frac{\left(4K^{2}+1+4K\right)}{4}\right\rfloor }\left\{ 2\frac{K\left(K+1\right)}{2}\right\} 
\]
\[
=\frac{1}{\left\lfloor \frac{\left(4\left[K^{2}+K\right]+1\right)}{4}\right\rfloor }\left[K\left(K+1\right)\right]
\]
\[
=\frac{1}{\left\lfloor \left(K^{2}+K\right)+\frac{1}{4}\right\rfloor }\left[K\left(K+1\right)\right]
\]
\[
=\frac{1}{\left(K^{2}+K\right)}\left[K\left(K+1\right)\right]=1
\]
This completes the proof that the discrete symmetric densities sum
to one and the discrete symmetric distribution is a probability distribution.

We can arrive at the same result using the alternate definition of
$\Delta_{p}$ as well using a proof similar to the one considered
above. We could also show that the two definitions of $\Delta_{p}$
are equivalent.
\[
\Delta_{p}=\frac{1}{\left\{ \left\lfloor \frac{\left(M-1\right)}{2}\right\rfloor \left\{ \left\lfloor \frac{\left(M-1\right)}{2}\right\rfloor +1\right\} +\left[\left\{ \left(\frac{\left(M-1\right)}{2}\bmod1\right)+\frac{\left(M-1\right)}{2}\right\} \left\{ 2\left(\frac{\left(M-1\right)}{2}\bmod1\right)\right\} \right]\right\} }\equiv\left\{ \left\lfloor \frac{M^{2}}{4}\right\rfloor \right\} ^{-1}
\]
\end{proof}

\subsection{\label{subsec:Proof-of-Lemma-9}Proof of Lemma \ref{The-system-of}.}
\begin{proof}
(Lebrun 1999), derives conditions for the existence of an asymmetric
equilibrium with more than two bidders. Using the notation described
earlier, we must have $\beta_{i}\left(\omega_{i}\right)=\beta_{j}\left(\omega_{j}\right)=\bar{b},\;say.$
$\forall i,j\in\left[1,2,...,M\right]$. We also have, $x_{i}=\beta_{i}^{-1}\left(b_{i}\right)=\phi_{i}\left(b_{i}\right)$.
The expected pay off for any bidder $i$ when his value is $x_{i}$
and he bids an amount $\beta_{i}\left(x_{i}\right)=b<\bar{b}$ is
\begin{eqnarray*}
\Pi_{i}\left(b,x_{i}\right) & = & \left[\prod_{j\neq i}^{j\in\left\{ 1,...,M\right\} }F_{j}\left(\phi_{j}\left(b\right)\right)\right]\left(x_{i}-b\right)\\
Consider,\;Prob\left(b>b_{j}\right) & = & Prob\left(\beta_{i}\left(x_{i}\right)>\beta_{j}\left(x_{j}\right)\right)\\
 & = & Prob\left(b>\beta_{j}\left(x_{j}\right)\right)\equiv Prob\left(\beta_{j}^{-1}\left(b\right)>x_{j}\right)\\
 & = & Prob\left(x_{j}<\phi_{j}\left(b\right)\right)\equiv F_{j}\left(\phi_{j}\left(b\right)\right)
\end{eqnarray*}
Differentiating the above with respect to $b$ , gives the first order
conditions for bidder $i$ to maximize his expected payoff as,
\begin{eqnarray*}
\frac{\partial\left\{ \left[\prod_{j\neq i}^{j\in\left\{ 1,...,M\right\} }F_{j}\left(\phi_{j}\left(b\right)\right)\right]\left(x_{i}-b\right)\right\} }{\partial b}=0\\
\sum_{j\neq i}^{j\in\left\{ 1,...,M\right\} }\left\{ \left[\prod_{\underset{j\neq i}{k\neq i,j}}^{j,k\in\left\{ 1,...,M\right\} }F_{k}\left(\phi_{k}\left(b\right)\right)\right]f_{j}\left(\phi_{j}\left(b\right)\right)\phi'_{j}\left(b\right)\left[\phi_{i}\left(b\right)-b\right]\right\} \\
=\left[\prod_{j\neq i}^{j\in\left\{ 1,...,M\right\} }F_{j}\left(\phi_{j}\left(b\right)\right)\right]
\end{eqnarray*}
\begin{eqnarray*}
\Rightarrow\sum_{j\neq i}^{j\in\left\{ 1,...,M\right\} }\left\{ \frac{f_{j}\left(\phi_{j}\left(b\right)\right)\phi'_{j}\left(b\right)}{F_{j}\left(\phi_{j}\left(b\right)\right)}\right\}  & = & \frac{1}{\left[\phi_{i}\left(b\right)-b\right]}
\end{eqnarray*}
\end{proof}

\subsection{\label{subsec:Proof-of-Proposition-10}Proof of Proposition \ref{If,--firms}.}
\begin{proof}
The expected pay off for any bidder $i$ when his value is $x_{i}$
and he bids an amount $\beta_{i}\left(x_{i}\right)=b<\bar{b}$ is

\[
\Pi_{i}\left(b,x_{i}\right)=\left[F_{1}\left(\phi_{1}\left(b\right)\right)\right]^{K}\left[F_{2}\left(\phi_{2}\left(b\right)\right)\right]^{M-1-K}\left(x_{i}-b\right)
\]
By considering one bidder from each group of bidders (other combinations
would work as well) and taking first order conditions, gives a simpler
system of differential equations, 
\begin{eqnarray*}
\frac{\partial\left\{ \left[F_{1}\left(\phi_{1}\left(b\right)\right)\right]^{K}\left[F_{2}\left(\phi_{2}\left(b\right)\right)\right]^{M-K-1}\left(x_{i}-b\right)\right\} }{\partial b}=0\\
\left\{ K\left[F_{1}\left(\phi_{1}\left(b\right)\right)\right]^{K-1}f_{1}\left(\phi_{1}\left(b\right)\right)\phi'_{1}\left(b\right)\left[F_{2}\left(\phi_{2}\left(b\right)\right)\right]^{M-K-1}\left[\phi_{i}\left(b\right)-b\right]\right\} \\
+\left\{ \left(M-K-1\right)\left[F_{2}\left(\phi_{2}\left(b\right)\right)\right]^{M-K-2}f_{2}\left(\phi_{2}\left(b\right)\right)\phi'_{2}\left(b\right)\left[F_{1}\left(\phi_{1}\left(b\right)\right)\right]^{K}\left[\phi_{i}\left(b\right)-b\right]\right\} \\
=\left[F_{1}\left(\phi_{1}\left(b\right)\right)\right]^{K}\left[F_{2}\left(\phi_{2}\left(b\right)\right)\right]^{M-K-1}\\
\left\{ K\frac{f_{1}\left(\phi_{1}\left(b\right)\right)\phi'_{1}\left(b\right)}{\left[F_{1}\left(\phi_{1}\left(b\right)\right)\right]}\right\} +\left\{ \left(M-1-K\right)\frac{f_{2}\left(\phi_{2}\left(b\right)\right)\phi'_{2}\left(b\right)}{\left[F_{2}\left(\phi_{2}\left(b\right)\right)\right]}\right\} =\frac{1}{\left[\phi_{i}\left(b\right)-b\right]}
\end{eqnarray*}
\end{proof}

\subsection{\label{subsec:Proof-of-Lemma-11}Proof of Lemma \ref{A-symmetric-equilibrium}.}

The proof from (Krishna 2009) is repeated below for completeness.
\begin{proof}
The expected payoff to a bidder with signal $x$ when he bids an amount
$\beta\left(z\right)$ is 
\begin{eqnarray*}
\Pi\left(z,x\right) & = & \int_{0}^{z}\left(\upsilon\left(x,y\right)-\beta\left(z\right)\right)g\left(y\mid x\right)dy\\
 & = & \int_{0}^{z}\upsilon\left(x,y\right)g\left(y\mid x\right)dy-\beta\left(z\right)G\left(z\mid x\right)\quad\left[\because G\left(0\mid x\right)=0\right]
\end{eqnarray*}
The first order condition, differentiating with respect to $z$ and
using $\frac{\partial\int_{0}^{z}\upsilon\left(x,y\right)g\left(y\mid x\right)dy}{\partial z}=\upsilon\left(x,z\right)g\left(z\mid x\right)$,
by applying Leibniz Integral Rule is
\[
\upsilon\left(x,z\right)g\left(z\mid x\right)-\beta\left(z\right)g\left(z\mid x\right)-\beta'\left(z\right)G\left(z\mid x\right)=0
\]
At a symmetric equilibrium, the optimal $z=x$, since the value of
the payoff is maximized only if the bidding strategy maximizes the
payoff when the bids are based on each bidder's value. We then obtain,
\begin{eqnarray*}
\beta'\left(x\right) & = & \left[\upsilon\left(x,x\right)-\beta\left(x\right)\right]\frac{g\left(x\mid x\right)}{G\left(x\mid x\right)}
\end{eqnarray*}
If positive bids can lead to negative payoff, that is when $\left\{ \upsilon\left(x,x\right)-\beta\left(x\right)\right\} <0$,
it might be optimal to bid $0$. Hence we have, 
\[
\forall x,\left\{ \upsilon\left(x,x\right)-\beta\left(x\right)\right\} \geq0\Rightarrow\beta\left(0\right)=0\quad\left[\because\upsilon\left(0,0\right)=0\right]
\]
Define, a function with support $\left[0,\omega\right]$ as,
\[
L\left(y\mid x\right)=\exp\left[-\int_{y}^{x}\frac{g\left(t\mid t\right)}{G\left(t\mid t\right)}dt\right]
\]
From the properties of affiliation, we have $\forall t>0$
\begin{eqnarray*}
\frac{g\left(t\mid t\right)}{G\left(t\mid t\right)} & \geq & \frac{g\left(t\mid0\right)}{G\left(t\mid0\right)}\\
\Rightarrow-\frac{g\left(t\mid t\right)}{G\left(t\mid t\right)} & \leq & -\frac{g\left(t\mid0\right)}{G\left(t\mid0\right)}
\end{eqnarray*}
Integrating from $0$ to $x,$
\begin{eqnarray*}
-\int_{0}^{x}\frac{g\left(t\mid t\right)}{G\left(t\mid t\right)}dt & \leq & -\int_{0}^{x}\frac{g\left(t\mid0\right)}{G\left(t\mid0\right)}dt\\
 & = & -\int_{0}^{x}\frac{d}{dt}\left(\ln\,G\left(t\mid0\right)\right)dt\\
 & = & \ln\,G\left(0\mid0\right)-\ln\,G\left(x\mid0\right)\\
 & = & -\infty
\end{eqnarray*}
\begin{eqnarray*}
\left[\because\phantom{\begin{cases}
\\
\\
\\
\end{cases}}G\left(0\mid0\right)=Prob\left(Y_{1}<0\mid X=0\right)\right. & = & 1-Prob\left(Y_{1}>0\mid X=0\right)=1-1=0\\
G\left(x\mid0\right)=Prob\left(Y_{1}<x\mid X=0\right) & = & 1-Prob\left(Y_{1}>x\mid X=0\right)=1-\epsilon\;,\;\epsilon\in\left[0,1\right)\\
\Rightarrow0<G\left(x\mid0\right)<1 & \Rightarrow & \left.-\infty<\ln\,G\left(x\mid0\right)<0\phantom{\phantom{\begin{cases}
\\
\\
\\
\end{cases}}}\right]
\end{eqnarray*}
Applying the exponential to both sides gives that $\forall x$
\begin{eqnarray*}
L\left(0\mid x\right) & = & \exp\left[-\int_{0}^{x}\frac{g\left(t\mid t\right)}{G\left(t\mid t\right)}dt\right]=e^{-\infty}=0\quad\left[\because\forall b,b\leq-\infty\Rightarrow b=-\infty\right]\\
L\left(x\mid x\right) & = & \exp\left[-\int_{x}^{x}\frac{g\left(t\mid t\right)}{G\left(t\mid t\right)}dt\right]=e^{0}=1
\end{eqnarray*}
Since $L\left(.\mid x\right)$ is increasing, it is a valid distribution
function. Also if $x'>x$ then $\forall y\in\left[0,x\right],$
\begin{eqnarray*}
\int_{y}^{x'}\frac{g\left(t\mid t\right)}{G\left(t\mid t\right)}dt & \geq & \int_{y}^{x}\frac{g\left(t\mid t\right)}{G\left(t\mid t\right)}dt\\
\Rightarrow\exp\left[-\int_{y}^{x'}\frac{g\left(t\mid t\right)}{G\left(t\mid t\right)}dt\right] & \leq & \exp\left[-\int_{y}^{x}\frac{g\left(t\mid t\right)}{G\left(t\mid t\right)}dt\right]\\
\Rightarrow L\left(y\mid x'\right) & \leq & L\left(y\mid x\right)
\end{eqnarray*}
Consider a bidder who bids $\beta\left(z\right)$ when his signal
is $x$. The expected profit can be written as 
\[
\Pi\left(z,x\right)=\int_{0}^{z}\left(\upsilon\left(x,y\right)-\beta\left(z\right)\right)g\left(y\mid x\right)dy
\]
Differentiating with respect to $z$ yields
\begin{eqnarray*}
\frac{\partial\Pi\left(z,x\right)}{\partial z} & = & \left[\upsilon\left(x,z\right)-\beta\left(z\right)\right]g\left(z\mid x\right)-\beta'\left(z\right)G\left(z\mid x\right)\\
 & = & G\left(z\mid x\right)\left[\left\{ \upsilon\left(x,z\right)-\beta\left(z\right)\right\} \frac{g\left(z\mid x\right)}{G\left(z\mid x\right)}-\beta'\left(z\right)\right]
\end{eqnarray*}
If $x>z$, then since $\upsilon\left(x,z\right)>\upsilon\left(z,z\right)$
and because of affiliation,
\begin{eqnarray*}
\frac{g\left(z\mid x\right)}{G\left(z\mid x\right)} & \geq & \frac{g\left(z\mid z\right)}{G\left(z\mid z\right)}\\
\Rightarrow\frac{\partial\Pi\left(z,x\right)}{\partial z} & > & G\left(z\mid x\right)\left[\left\{ \upsilon\left(z,z\right)-\beta\left(z\right)\right\} \frac{g\left(z\mid z\right)}{G\left(z\mid z\right)}-\beta'\left(z\right)\right]\\
 & > & 0\quad\left[\because\beta'\left(z\right)=\left\{ \upsilon\left(z,z\right)-\beta\left(z\right)\right\} \frac{g\left(z\mid z\right)}{G\left(z\mid z\right)}\right]
\end{eqnarray*}
Similarly if $x<z$, then 
\begin{eqnarray*}
\frac{\partial\Pi\left(z,x\right)}{\partial z} & < & 0\\
\Rightarrow\left.\max\left[\Pi\left(z,x\right)\right]\right| & when, & z=x
\end{eqnarray*}
We then have an equilibrium bidding strategy, $\beta\left(x\right)$
since,
\begin{eqnarray*}
\int_{0}^{x}\upsilon\left(y,y\right)dL\left(y\mid x\right) & = & \int_{0}^{x}\upsilon\left(y,y\right)\frac{dL\left(y\mid x\right)}{dy}dy\\
 & = & \int_{0}^{x}\upsilon\left(y,y\right)L\left(y\mid x\right)\frac{g\left(y\mid y\right)}{G\left(y\mid y\right)}dy
\end{eqnarray*}
\[
\left[\because\quad\frac{dL\left(y\mid x\right)}{dy}\right.=\left.\frac{d\left\{ \exp\left[-\int_{y}^{x}\frac{g\left(t\mid t\right)}{G\left(t\mid t\right)}dt\right]\right\} }{dy}=L\left(y\mid x\right)\frac{g\left(y\mid y\right)}{G\left(y\mid y\right)}\right]
\]
\begin{eqnarray*}
 & = & \int_{0}^{x}\left[\beta'\left(y\right)+\beta\left(y\right)\frac{g\left(y\mid y\right)}{G\left(y\mid y\right)}\right]L\left(y\mid x\right)dy\\
\left[\because\quad\upsilon\left(y,y\right)\frac{g\left(y\mid y\right)}{G\left(y\mid y\right)}\right. & = & \left.\beta'\left(y\right)+\beta\left(y\right)\frac{g\left(y\mid y\right)}{G\left(y\mid y\right)}\right]
\end{eqnarray*}
\begin{eqnarray*}
 & = & \int_{0}^{x}\frac{\partial\left[\beta\left(y\right)L\left(y\mid x\right)dy\right]}{\partial y}dy\\
 & = & \left|\beta\left(y\right)L\left(y\mid x\right)\right|_{0}^{x}\\
 & = & \beta\left(x\right)\quad\left[\because\quad L\left(x\mid x\right)=1\;and\;\beta\left(0\right)=0\right]
\end{eqnarray*}
\end{proof}

\subsection{\label{subsec:Proof-of-Proposition-12}Proof of Proposition \ref{The-bidder's-equilibrium}.}
\begin{proof}
We show below the bidder's valuation, the density, cumulative distribution
functions and the conditional distribution of the order statistics,
\[
X_{i}=S_{i}+Z
\]
\[
f_{X_{i}}\left(x_{i}\right)=\begin{cases}
x_{i} & 0\leq x_{i}<1\\
2-x_{i} & 1\leq x_{i}\leq2
\end{cases}
\]
\[
F_{X_{i}}\left(x_{i}\right)=\begin{cases}
\frac{x_{i}^{2}}{2} & 0\leq x_{i}<1\\
2x_{i}-1-\frac{x_{i}^{2}}{2} & 1\leq x_{i}\leq2
\end{cases}
\]
\begin{eqnarray*}
g\left(y\mid x\right) & = & \left(M-1\right)\left[\frac{F\left(y\right)}{F\left(x\right)}\right]^{M-2}\left(\frac{f\left(y\right)}{F\left(x\right)}\right)
\end{eqnarray*}
\begin{eqnarray*}
\left[\vphantom{\frac{f_{Y_{i},Y_{j}}\left(y_{i},y_{j}\right)}{f_{Y_{j}}\left(y_{j}\right)}}\because\;f_{Y_{i}}\left(y_{i}\mid Y_{j}=y_{j}\right)\right. & = & \frac{f_{Y_{i},Y_{j}}\left(y_{i},y_{j}\right)}{f_{Y_{j}}\left(y_{j}\right)}\quad\\
 & Here, & Y_{M}\leq Y_{i}\leq Y_{j}\leq Y_{1}\;\;are\;Order\;Statistics.\\
 & = & \frac{\left(j-1\right)!}{\left(i-1\right)!\left(j-i-1\right)!}\left\{ \frac{F\left(y_{i}\right)}{F\left(y_{j}\right)}\right\} ^{i-1}\\
 &  & \left[\frac{F\left(y_{j}\right)-F\left(y_{i}\right)}{F\left(y_{j}\right)}\right]^{j-i-1}\left.\left(\frac{f\left(y_{i}\right)}{F\left(y_{j}\right)}\right)\right]
\end{eqnarray*}
\[
g\left(y\mid x\right)=\begin{cases}
\left(M-1\right)\left[\frac{y}{x}\right]^{2\left(M-2\right)}\left(\frac{2y}{x^{2}}\right) & 0\leq y,x<1\\
\left(M-1\right)\left[\frac{\left(2y-1-\frac{y^{2}}{2}\right)}{\left(2x-1-\frac{x^{2}}{2}\right)}\right]^{M-2}\left\{ \frac{2-y}{\left(2x-1-\frac{x^{2}}{2}\right)}\right\}  & 1\leq y,x\leq2
\end{cases}
\]
\begin{eqnarray*}
\left[\vphantom{\frac{f_{Y_{i},Y_{j}}\left(y_{i},y_{j}\right)}{f_{Y_{j}}\left(y_{j}\right)}}\because\;G_{Y_{i}}\left(y_{i}\mid Y_{j}=y_{j}\right)\right. & \approx & \frac{\int_{-\infty}^{y_{i}}f_{Y_{i},Y_{j}}\left(u,y_{j}\right)du}{f_{Y_{j}}\left(y_{j}\right)}\quad\\
 & = & \frac{\left(j-1\right)!}{\left(i-1\right)!\left(j-i-1\right)!}\\
 &  & \left[\frac{\left(1-F\left(y_{j}\right)\right)^{M-j}f\left(y_{j}\right)}{\left(F\left(y_{j}\right)\right)^{j-1}\left(1-F\left(y_{j}\right)\right)^{M-j}f\left(y_{j}\right)}\right]\\
 &  & \left.\left[\int_{-\infty}^{y_{i}}\left(F\left(u\right)\right)^{i-1}\left(F\left(y_{j}\right)-F\left(u\right)\right)^{j-i-1}f\left(u\right)du\right]\right]
\end{eqnarray*}
\[
G\left(y\mid x\right)=\begin{cases}
\left(M-1\right)\left[\frac{\int_{0}^{y}\left(\frac{u^{2}}{2}\right)^{M-2}uxdu}{\left(\frac{x^{2}}{2}\right)^{M-1}x}\right] & 0\leq y,x<1\\
\left(M-1\right)\left[\frac{\int_{1}^{y}\left(2u-1-\frac{u^{2}}{2}\right)^{M-2}\left(2-u\right)\left(2-x\right)du}{\left(2x-1-\frac{x^{2}}{2}\right)^{M-1}\left(2-x\right)}\right] & 1\leq y,x\leq2
\end{cases}
\]
\[
G\left(y\mid x\right)=\begin{cases}
\left(M-1\right)\left[\frac{2\int_{0}^{y}u^{2M-3}du}{x^{2M-2}}\right] & 0\leq y,x<1\\
\left(M-1\right)\left[\frac{\int_{1}^{y}\left(2u-1-\frac{u^{2}}{2}\right)^{M-2}\left(2-u\right)du}{\left(2x-1-\frac{x^{2}}{2}\right)^{M-1}}\right] & 1\leq y,x\leq2
\end{cases}
\]
\[
G\left(y\mid x\right)=\begin{cases}
\left[\frac{y^{2M-2}}{x^{2M-2}}\right] & 0\leq y,x<1\\
\left[\frac{\left(2y-1-\frac{y^{2}}{2}\right)^{M-1}}{\left(2x-1-\frac{x^{2}}{2}\right)^{M-1}}\right] & 1\leq y,x\leq2
\end{cases}
\]
\[
\left[\because\;H\left(x\right)=\int_{a}^{x}h\left(t\right)dt\;;\;H'\left(x\right)=h\left(x\right)\right]
\]
We then have 
\[
g\left(y\mid y\right)=\begin{cases}
\left(M-1\right)\left(\frac{2}{y}\right) & 0\leq y<1\\
\left(M-1\right)\left\{ \frac{2-y}{\left(2y-1-\frac{y^{2}}{2}\right)}\right\}  & 1\leq y\leq2
\end{cases}
\]
\[
G\left(y\mid y\right)=1
\]
\[
\frac{g\left(y\mid y\right)}{G\left(y\mid y\right)}=\begin{cases}
\left(M-1\right)\left(\frac{2}{y}\right) & 0\leq y<1\\
\left(M-1\right)\left\{ \frac{2-y}{\left(2y-1-\frac{y^{2}}{2}\right)}\right\}  & 1\leq y\leq2
\end{cases}
\]
\begin{eqnarray*}
L\left(y\mid x\right) & = & \exp\left[-\int_{y}^{x}\frac{g\left(t\mid t\right)}{G\left(t\mid t\right)}dt\right]\\
 & = & \begin{cases}
\left[\frac{y^{2M-2}}{x^{2M-2}}\right] & 0\leq y,x<1\\
\left[\frac{\left(2y-1-\frac{y^{2}}{2}\right)^{M-1}}{\left(2x-1-\frac{x^{2}}{2}\right)^{M-1}}\right] & 1\leq y,x\leq2
\end{cases}
\end{eqnarray*}
Using this in the bid function, 
\begin{eqnarray*}
\beta\left(x\right) & = & \int_{0}^{x}\upsilon\left(y,y\right)L\left(y\mid x\right)\frac{g\left(y\mid y\right)}{G\left(y\mid y\right)}dy\\
 & = & \int_{0}^{1}\upsilon\left(y,y\right)L\left(y\mid x\right)\frac{g\left(y\mid y\right)}{G\left(y\mid y\right)}dy+\int_{1}^{x}\upsilon\left(y,y\right)L\left(y\mid x\right)\frac{g\left(y\mid y\right)}{G\left(y\mid y\right)}dy
\end{eqnarray*}
\begin{eqnarray*}
 & = & \int_{0}^{1}\left(\alpha+\xi\right)y\left(M-1\right)\left(\frac{2}{y}\right)\left[\frac{y^{2M-2}}{x^{2M-2}}\right]dy\\
 & + & \int_{1}^{x}\left(\alpha+\xi\right)y\left(M-1\right)\left\{ \frac{2-y}{\left(2y-1-\frac{y^{2}}{2}\right)}\right\} \left[\frac{\left(2y-1-\frac{y^{2}}{2}\right)^{M-1}}{\left(2x-1-\frac{x^{2}}{2}\right)^{M-1}}\right]dy
\end{eqnarray*}
\begin{eqnarray*}
 & = & 2\left(\alpha+\xi\right)\left(M-1\right)\int_{0}^{1}\left[\frac{y^{2M-2}}{x^{2M-2}}\right]dy\\
 & + & \left(\alpha+\xi\right)\left(M-1\right)\int_{1}^{x}y\left(2-y\right)\left[\frac{\left(2y-1-\frac{y^{2}}{2}\right)^{M-2}}{\left(2x-1-\frac{x^{2}}{2}\right)^{M-1}}\right]dy
\end{eqnarray*}
\begin{eqnarray*}
 & = & \left[\frac{2\left(\alpha+\xi\right)\left(M-1\right)}{\left(2M-1\right)x^{2M-2}}\right]\\
 & + & \frac{\left(\alpha+\xi\right)\left(M-1\right)}{\left(2x-1-\frac{x^{2}}{2}\right)^{M-1}}\left[\left.\frac{y\left(2y-1-\frac{y^{2}}{2}\right)^{M-1}}{\left(M-1\right)}\right|_{1}^{x}-\int_{1}^{x}\frac{\left(2y-1-\frac{y^{2}}{2}\right)^{M-1}dy}{\left(M-1\right)}\right]
\end{eqnarray*}
\begin{eqnarray*}
 & = & \left[\frac{2\left(\alpha+\xi\right)\left(M-1\right)}{\left(2M-1\right)x^{2M-2}}\right]\\
 & + & \left(\alpha+\xi\right)\left[x-\frac{1}{\left(2x-1-\frac{x^{2}}{2}\right)^{M-1}}\left\{ \frac{1}{2^{M-1}}+\int_{1}^{x}\left(2y-1-\frac{y^{2}}{2}\right)^{M-1}dy\right\} \right]
\end{eqnarray*}
The last integral is solved using the reduction formula,
\[
8a(n+1)I_{n+\frac{1}{2}}=2(2ax+b)(ax^{2}+bx+c)^{n+\frac{1}{2}}+(2n+1)(4ac-b^{2})I_{n-\frac{1}{2}}\,\!
\]
\[
Here,\quad I_{n}=\int(ax^{2}+bx+c)^{n}dx\,\!
\]
Leibniz Integral Rule: Let $f(x,\theta)$ be a function such that
$f_{\theta}(x,\theta)$ exists, and is continuous. Then, 
\[
\frac{\mathrm{d}}{\mathrm{d}\theta}\left(\int_{a(\theta)}^{b(\theta)}f(x,\theta)\,\mathrm{d}x\right)=\int_{a(\theta)}^{b(\theta)}\partial_{\theta}f(x,\theta)\,\mathrm{d}x\,+\,f\big(b(\theta),\theta\big)\cdot b'(\theta)\,-\,f\big(a(\theta),\theta\big)\cdot a'(\theta)
\]
where the partial derivative of $f$ indicates that inside the integral
only the variation of $f(x,)$ with $\theta$ is considered in taking
the derivative.
\end{proof}

\subsection{\label{subsec:Proof-of-Proposition-13}Proof of Proposition \ref{The-bidding-strategy}.}
\begin{proof}
We extend the proof from (Milgrom \& Weber 1982) who derive the condition
for the interdependent case with symmetric valuations. We consider
a realistic setting with symmetric interdependent, uniformly distributed
valuations, with reserve prices and variable number of bidders. This
is obtained by altering the boundary conditions on the differential
equation,
\begin{eqnarray*}
\beta'\left(x\right) & = & \left[\upsilon\left(x,x\right)-\beta\left(x\right)\right]\frac{g\left(x\mid x\right)}{G\left(x\mid x\right)}
\end{eqnarray*}
This can be written as,
\begin{eqnarray*}
\beta'\left(x\right)+\beta\left(x\right)\frac{g\left(x\mid x\right)}{G\left(x\mid x\right)} & = & \upsilon\left(x,x\right)\frac{g\left(x\mid x\right)}{G\left(x\mid x\right)}
\end{eqnarray*}
\[
\Leftrightarrow\frac{dz}{dx}+zP\left(x\right)=Q\left(x\right)
\]
Here, $z=\beta(x),\;dz/dx=\beta\text{\textasciiacute}(x),\;P(x)=g(x|x)/G(x|x)$
and $Q(x)=v(x,x)g(x|x)/G(x|x).$ 

We then have the solution with the appropriate boundary condition,
$\beta\left(x^{*}\right)=r$ and $x^{*}=x^{*}\left(r\right)=\inf\left\{ x\left|E\left[V_{1}\left|X_{1}=x,\;Y_{1}<x\right.\right]\right.\geq r\right\} $
as 
\[
\frac{dz}{dx}e^{\int_{0}^{x}P\left(t\right)dt}+zP\left(x\right)e^{\int_{0}^{x}P\left(t\right)dt}=Q\left(x\right)e^{\int_{0}^{x}P\left(t\right)dt}
\]
\[
\frac{d\left\{ ze^{\int_{0}^{x}P\left(t\right)dt}\right\} }{dx}=Q\left(x\right)e^{\int_{0}^{x}P\left(t\right)dt}\quad\because\;\frac{d\left\{ \int_{0}^{x}P\left(t\right)dt\right\} }{dx}=P\left(x\right)
\]
\[
ze^{\int_{0}^{x}P\left(t\right)dt}=\left.ze^{\int_{0}^{x}P\left(t\right)dt}\right|_{x=x^{*}}+\int_{x^{*}}^{x}Q\left(y\right)e^{\int_{0}^{y}P\left(t\right)dt}dy
\]
\[
z=\frac{re^{\int_{0}^{x^{*}}P\left(t\right)dt}}{e^{\int_{0}^{x}P\left(t\right)dt}}+\int_{x^{*}}^{x}Q\left(y\right)\frac{e^{\int_{0}^{y}P\left(t\right)dt}}{e^{\int_{0}^{x}P\left(t\right)dt}}dy
\]
\[
z=re^{-\int_{x^{*}}^{x}P\left(t\right)dt}+\int_{x^{*}}^{x}Q\left(y\right)e^{-\int_{y}^{x}P\left(t\right)dt}dy
\]
\[
\beta\left(x\right)=re^{-\int_{x^{*}}^{x}\frac{g\left(t\mid t\right)}{G\left(t\mid t\right)}dt}+\int_{x^{*}}^{x}v(y,y)\frac{g\left(y\mid y\right)}{G\left(y\mid y\right)}e^{-\int_{y}^{x}\frac{g\left(t\mid t\right)}{G\left(t\mid t\right)}dt}dy
\]
We find out $x^{*}\left(r\right)$ by solving for $x$ in the below
condition,
\[
E\left[V_{1}\left|X_{1}=x,\;Y_{1}<x\right.\right]=r
\]
\[
\int_{-\infty}^{\infty}\left(\alpha x+\xi y\right)f_{Y_{1},X_{1}}\left(y,x\right)=r
\]
\[
\alpha x+\int_{-\infty}^{x}\xi y\left(M-1\right)\left\{ \frac{F\left(y\right)}{F\left(x\right)}\right\} ^{M-2}\frac{f\left(y\right)}{F\left(x\right)}dy=r
\]
\[
\left[\int_{0}^{1}\xi y\left[\frac{y}{x}\right]^{2\left(M-2\right)}\left(\frac{2y}{x^{2}}\right)dy+\int_{1}^{x}\xi y\left[\frac{\left(2y-1-\frac{y^{2}}{2}\right)}{\left(2x-1-\frac{x^{2}}{2}\right)}\right]^{M-2}\left\{ \frac{2-y}{\left(2x-1-\frac{x^{2}}{2}\right)}\right\} dy\right]=\frac{r-\alpha x}{\left(M-1\right)}
\]
In the previous section, we have shown a method to solve the above
type of equations.
\end{proof}

\end{document}